
\documentclass[11pt,fleqn,reqno]{article} 
\usepackage[margin=1.37in]{geometry}
\usepackage{amsmath,amsthm,amssymb,mathrsfs}
\usepackage{enumerate}
\usepackage[usenames,dvipsnames]{pstricks}
\usepackage{epsfig}
\usepackage{pst-grad} 
\usepackage{pst-plot}
\usepackage{hyperref} 
\usepackage{graphics,geometry,epsfig}


\theoremstyle{plain}
\newtheorem{theorem}{Theorem}
\newtheorem{proposition}[theorem]{Proposition}
\newtheorem{lemma}[theorem]{Lemma}
\newtheorem{corollary}[theorem]{Corollary}

\theoremstyle{definition}
\newtheorem{definition}[theorem]{Definition}

\newcommand{\R}{\mathbb{R}}
\newcommand{\C}{\mathbb{C}}
\newcommand{\Z}{\mathbb{Z}}
\newcommand{\bH}{\mathbb{H}}

\newcommand{\cE}{\mathcal{E}}

\newcommand{\cH}{\mathcal{H}}

\newcommand{\cL}{\mathcal{L}}         

\newcommand{\sH}{\mathscr{H}}

\newcommand{\nc}{\newcommand}

\nc{\G}{\Gamma}
\nc{\g}{\gamma}
\nc{\al}{\alpha}
\nc{\be}{\beta}
\nc{\del}{\delta}
\nc{\io}{\iota}
\nc{\ka}{\kappa}
\nc{\lam}{\lambda}
\nc{\Lam}{\Lambda}
\nc{\w}{\omega}
\nc{\om}{\omega}
\nc{\Om}{\Omega}
\nc{\Oms}{\Omega^*}
\nc{\s}{\sigma}
\nc{\Si}{\Sigma}
\nc{\ta}{\tau}
\nc{\h}{\theta}
\nc{\z}{\zeta}

\newcommand{\e}{\epsilon}

\newcommand{\lat}{\mathcal{L}} 
\newcommand{\Lat}{\mathcal{L}}
\newcommand{\LAT}{\mathcal{L}}
\newcommand{\LATo}{\mathcal{L}_\omega}

\newcommand{\LATt}{\mathcal{L^\tau}}

\nc{\oP}{\overline P}

\newcommand\F[1]{\mathbb{#1}}

\newcommand{\dbar}{\overline{\partial}}

\nc{\ran}{\rangle}
\nc{\lan}{\langle}

\newcommand{\im}{\operatorname{Im}}
\renewcommand{\Re}{\operatorname{Re}}
\renewcommand{\Im}{\operatorname{Im}}
\newcommand{\ra}{\rightarrow}
\newcommand{\Ran}{\operatorname{Ran}}

\newcommand{\Null}{\operatorname{Null}}
\newcommand{\nul}{\operatorname{Null}}

\newcommand{\diag}{\operatorname{diag}}

\newcommand{\ls}{\lesssim}

\newcommand{\one}{\mathbf{1}}

\nc{\bfone}{{\bf 1}}

\newcommand{\n}{\nabla}
\newcommand{\p}{\partial}

\newcommand{\Curl}{\operatorname{curl}}
\newcommand{\Div}{\operatorname{div}}
\newcommand{\curl}{\operatorname{curl}}
\newcommand{\divv}{\operatorname{div}}
\newcommand{\Divv}{\operatorname{Div}}
\newcommand{\Cov}[1]{\nabla_{\!\!#1}}

\newcommand{\DETAILS}[1]{}

\newcommand{\Lpsi}[2]{\mathscr{L}_{#2}^{2}} 
\newcommand{\LA}[2]{\vec{\mathscr{L}}_{}^{2}}
\newcommand{\HA}[2]{\vec{\mathscr{H}}_{}^{2}} 
\newcommand{\HAs}[2]{\vec{\mathscr{H}}_{}^{#2}} 


\numberwithin{theorem}{section}
\numberwithin{equation}{section}


\pagestyle{myheadings}                         
\markboth{\hfill{ExistAbrikosovLatticesN}}{{AbrikosovLattices, January 11, 2017}\hfill} 
\begin{document}

\title{On Abrikosov Lattice Solutions of the Ginzburg-Landau Equations} 
\date{January 11, 2017} 

\author{Li Chen\footnote{Dept of Mathematics, Univ of Toronto, Toronto, Canada; nehcili@gmail.com},\ Panayotis Smyrnelis\footnote{Centro de Modelamiento Matem\'{a}atico (UMI 2807 CNRS), Universidad de Chile, 
Santiago, Chile; 
psmyrnelis@dim.uchile.cl},\ Israel Michael Sigal\footnote{Dept of Mathematics, Univ of Toronto, Toronto, Canada; im.sigal@utoronto.ca}}

\maketitle

\begin{abstract}
We prove existence of Abrikosov vortex lattice solutions of the Ginzburg-Landau equations of superconductivity, with multiple 
 magnetic flux quanta per a fundamental cell. 
We also revisit the existence proof for the Abrikosov vortex lattices, 
streamlining some arguments and providing some essential details missing in earlier proofs for a single  magnetic flux quantum per a fundamental cell.  
 \medskip

\noindent Keywords: magnetic vortices, superconductivity, Ginzburg-Landau equations, Abrikosov vortex lattices, bifurcations.

\end{abstract}



\section{Introduction}
\label{sec:intro}



\paragraph{1.1 The Ginzburg-Landau equations.} The Ginzburg-Landau model of superconductivity describes a superconductor contained in
$\Omega \subset \R^n$, $n = 2$ or $3$, in terms of a complex order parameter $\Psi : \Omega \to \C$, and a
magnetic potential $A : \Omega \to \R^n$\footnote{The Ginzburg-Landau theory is reviewed in every book on superconductivity and most of the books on solid state or condensed matter physics.
For reviews of rigorous results see the papers \cite{CHO, DGP, GST, Sig} and the books \cite{SS, FH, JT, Rub}}. 
The Ginzburg-Landau theory specifies that  the difference between the superconducting and normal free energies\footnote{In the problem we consider here it is appropriate to deal with  Helmholtz free energy at a fixed average magnetic field $b:=\frac{1}{|\Omega|}\int_\Omega \curl{A},$ where $|\Omega|$ is the area or volume of $\Omega$.} in a state $(\Psi, A)$ is
\begin{equation}
\label{eq:GL-energy}
    E_\Omega(\Psi, A) := \int_\Omega |\Cov{A}\Psi|^2 + |\Curl A|^2 + \frac{\kappa^2}{2} (1 - |\Psi|^2)^2,
\end{equation}
where $\Cov{A}$ is the covariant derivative defined as $\nabla - iA$ and $\kappa$ is a positive constant that depends on the material properties of the superconductor and is called the Ginzburg-Landau parameter. In the case $n = 2$, $\Curl A := \frac{\partial A_2}{\partial x_1} - \frac{\partial A_1}{\partial x_2}$ is a scalar-valued function.
It follows from the Sobolev inequalities that for bounded open sets $\Omega$, the energy $E_\Omega$ is well-defined and $C^\infty$ as a
functional on the Sobolev space $H^1$.

The critical points of this functional must satisfy the well-known Ginzburg-Landau equations
inside $\Omega$:
\begin{subequations} \label{GLE}
    \begin{align}
    \label{GLEpsi}
   &   \Delta_A \Psi = \kappa^2(|\Psi|^2-1)\Psi,\\
    \label{GLEA}
  &      \Curl^*\Curl A = \Im(\bar{\Psi}\Cov{A}\Psi).
    \end{align}
\end{subequations}
Here $\Delta_A=- \Cov{A}^*\Cov{A},\ \Cov{A}^*$ and $\Curl^*$ are the adjoints of $\Cov{A}$ and $\Curl$. Explicitly, $\Cov{A}^*F = -\Div F + iA\cdot F$, and
$\Curl^* F = \Curl F$ for $n = 3$ and $\Curl^* f = (\frac{\partial f}{\partial x_2}, -\frac{\partial f}{\partial x_1})$ for $n = 2$.

The key physical quantities for the Ginzburg-Landau  theory are
\begin{itemize}
    \item the density of superconducting pairs of electrons, $n_s := |\Psi|^2$;
    \item the magnetic field, $B := \Curl A$;
    \item and the current density, $J := \Im(\bar{\Psi}\Cov{A}\Psi)$.
\end{itemize}

%

All superconductors are divided into two classes with different properties: Type I superconductors, which have $\kappa < \kappa_c$ and exhibit first-order phase transitions from the
non-superconducting state to the superconducting state, and Type II superconductors, which have $\kappa > \kappa_c$ and exhibit
second-order phase transitions and the formation of vortex lattices. Existence of these vortex lattice solutions is the subject of the present paper.


\paragraph{1.2 Abrikosov lattices.} In 1957, Abrikosov \cite{Abr} discovered solutions of \eqref{GLE} in $n=2$ whose physical characteristics
$n_s$, $B$, and $J$ are (non-constant and) periodic with respect to a two-dimensional lattice, while independent of the third dimension, and which have a single
flux per lattice cell\footnote{Such solutions correspond to cylindrical geometry.}.
We call such solutions the  \textit{($\lat-$)Abrikosov vortex lattices} or  the  \textit{($\lat-$)Abrikosov lattice solutions} or an abbreviation of thereof. 
(In physics literature they are variously called mixed states, Abrikosov mixed states, Abrikosov vortex
states.) Due to an error of calculation Abrikosov concluded that the lattice which gives the minimum average energy per
lattice cell\footnote{Since for lattice solutions the energy over $\R^2$ (the total energy) is infinite, one considers the average energy per lattice cell,
i.e. energy per lattice cell divided by the area of the  cell.} is the square lattice. Abrikosov's error was corrected by Kleiner, Roth, and Autler \cite{KRA}, who showed that it is in fact the triangular lattice which minimizes the energy.

Since their discovery, Abrikosov lattice solutions have been studied in numerous experimental and theoretical works.
Of more mathematical studies, we mention the articles of Eilenberger \cite{Eil}, Lasher \cite{Lash}, Chapman \cite{Ch} and Ovchinnikov \cite{Ov}.

The rigorous investigation of Abrikosov solutions began soon after their discovery. Odeh \cite{Odeh} sketched a proof of existence for 
various lattices using variational 
and bifurcation techniques.
Barany, Golubitsky, and Turski \cite{BGT} 
applied equivariant bifurcation theory and filled in a number of details, 
and Tak\'{a}\u{c} \cite{Takac} has adapted these results to study the zeros of the bifurcating solutions. Further details and new results, in both, variational and bifurcation, approaches, 
 were provided by \cite{Dutour2, Dutour}.  In particular, \cite{Dutour2} proved partial results on the relation between the bifurcation parameter and the average magnetic field $b$ (left open by previous works) and on the relation between  the Ginzburg-Landau energy and the Abrikosov function, and   \cite{Dutour}  (see also \cite{Dutour2}) found boundaries between superconducting, normal and mixed phases.

Among related results, a relation of the Ginzburg-Landau minimization problem, for a fixed, finite domain and external magnetic field,  
in the regime of $\kappa\ra \infty$, to the Abrikosov lattice variational problem was obtained in \cite{AS, Al2}.

The above investigation was completed and extended in \cite{TS, TS2}. 
To formulate  the results of these papers, we introduce some notation and definitions. 
   We define the following function on lattices $\Lat\subset \R^2$:
     \begin{equation}\label{kappac}
     \kappa_c(\Lat) := \sqrt{\frac{1}{2}\left(1-\frac{1}{\beta (\Lat)}\right)} ,
     \end{equation}
where $\beta(\Lat)$  is  the Abrikosov parameter, see e.g. \cite{TS, TS2}. 
 For a lattice $\Lat \subset \R^2$, we denote by $\Omega^\Lat$ and $|\Omega^\Lat|$ the basic lattice cell and its area, respectively. 
The following results were proven in \cite{TS, TS2}: 

\begin{theorem}\label{thm:main-resultTS}
For every lattice $\LAT$ satisfying    
\begin{equation}\label{LAT-cond}\big| 1 -b/\kappa^2 \big| \ll 1$ and $(\kappa-\kappa_c(\Lat))(\kappa^2-b)\ge 0$, where $b: = \frac{2\pi n}{|\Omega^\Lat|},\end{equation}
with $n=1$,  the following holds
	\begin{enumerate}[(I)]
	\item 
The equations \eqref{GLE} have an $\lat-$Abrikosov lattice solution 
in a neighbourhood of the branch of normal solutions.  
	\item The above solution 
is unique, up to symmetry, in a neighbourhood of the  normal branch.
	\item For $(\kappa-\kappa_c(\Lat))(\kappa^2-b)\ne 0$, the solution above is real analytic in $b$ in a neighbourhood of $\kappa^2$.
	\end{enumerate}	
\end{theorem}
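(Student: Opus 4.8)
The plan is to obtain the Abrikosov lattice by bifurcation from the branch of \emph{normal solutions} $(\Psi, A) = (0, A^b)$, where $A^b$ is a fixed potential with $\Curl A^b \equiv b$, taking the average field $b$ as the bifurcation parameter and $\kappa^2$ as its critical value. First I would fix the shape of the lattice $\Lat$ and work on spaces of pairs $(\Psi, A)$ that are \emph{gauge-periodic} with respect to $\Lat$, i.e.\ for which the physical quantities $n_s = |\Psi|^2$, $B = \Curl A$ and $J = \Im(\bar\Psi\Cov{A}\Psi)$ are genuinely $\Lat$-periodic, so that each lattice translation acts by a gauge transformation (a magnetic translation). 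After rescaling the cell at fixed shape, $b = 2\pi n/|\Omega^\Lat|$ plays the role of a free parameter, and the system \eqref{GLE} becomes a map $F(\Psi, A; b) = 0$ between Sobolev spaces of such sections with $F(0, A^b; b) = 0$ for all $b$. I would carry along the residual symmetries — the global $U(1)$ action $\Psi \mapsto e^{i\gamma}\Psi$, the lattice translations, and a gauge fixing for $A$ — since the uniqueness in (II) is asserted only modulo these.

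The second step is to linearize $F$ about the normal branch. Since the right-hand side of \eqref{GLEA} is quadratic in $\Psi$, linearizing decouples the order parameter: the $\Psi$-linearization is the magnetic (Landau) operator $-\Delta_{A^b}$ on gauge-periodic functions, whose spectrum is the set of Landau levels $(2k+1)b$, while the $a$-linearization $\Curl^*\Curl a = 0$ is invertible after gauge fixing. The lowest Landau level equals $b$, so the full linearization has nontrivial kernel exactly when $\kappa^2 = b$; this selects $b = \kappa^2$ and matches the hypothesis $|1 - b/\kappa^2| \ll 1$. For $n = 1$ flux quantum per cell the lowest level restricted to $\Lat$-gauge-periodic functions is one complex-dimensional, and I would fix a normalized generator $\psi_0$ (a theta-function-type state), so that $\Null$ of the linearization is the real two-dimensional space $\C\psi_0$ carrying the $U(1)$ phase action.

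Next I would perform a Lyapunov--Schmidt reduction. Splitting into the kernel $X_1 = \C\psi_0$ and a complement $X_2$ on which the linearization is invertible, I write $\Psi = s\psi_0 + w$ and $A = A^b + a$ with $(w, a) \in X_2$, and solve the $X_2$-component of $F = 0$ for $(w, a)$ as a function of $(s, b)$ by the \emph{analytic} implicit function theorem; as the nonlinearity is polynomial this yields $w = O(s^3)$ and a field back-reaction $a = O(s^2)$ determined by $\Curl^*\Curl a \approx s^2\,\Im(\bar\psi_0\Cov{A^b}\psi_0)$. Substituting back and projecting onto $\psi_0$ produces a scalar \emph{bifurcation equation} whose expansion I expect to read $b - \kappa^2 = c(\Lat,\kappa)\,s^2 + O(s^4)$, with $c$ built from the competition between the scalar self-interaction $\int_{\Omega^\Lat}|\psi_0|^4$ — i.e.\ the Abrikosov parameter $\beta(\Lat)$ — and the magnetic back-reaction $a$. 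The crucial point is that this combination makes the sign of $c$ governed by $\kappa^2 - \kappa_c(\Lat)^2$ through the factor $1 - 1/\beta(\Lat)$ in \eqref{kappac}, so that real solvability for $s \ne 0$ forces precisely $(\kappa - \kappa_c(\Lat))(\kappa^2 - b) \ge 0$.

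Finally, solving the bifurcation equation gives a curve $s \mapsto (b(s), \Psi(s), A(s))$ emanating from the normal branch at $b = \kappa^2$, which is (I); the two roots $s = \pm(\ldots)$ are identified by $\Psi \mapsto -\Psi$, and combined with the uniqueness of $(w,a)$ from the implicit function theorem this gives (II), uniqueness up to the $U(1)$ and translation symmetries. For (III), when $\kappa^2 - b \ne 0$ we have $s \ne 0$ and $\partial_s b = 2cs + O(s^3) \ne 0$, so the analytic IFT inverts $b = b(s)$ to give $s$, and hence the whole solution, as a real-analytic function of $b$. The hard part will be the third step: executing the reduction in the gauge-covariant setting and, above all, computing the leading coefficient $c$ and pinning down its sign in terms of $\beta(\Lat)$ and $\kappa_c(\Lat)$, since this is what turns the abstract bifurcation into the quantitative existence and sign statement; a secondary difficulty is setting up the gauge-periodic spaces so that $-\Delta_{A^b}$ is self-adjoint with the stated Landau spectrum and exactly a one-dimensional kernel for $n = 1$.
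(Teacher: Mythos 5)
Your proposal is correct and follows essentially the same route as the paper: fix the gauge and rescale so that the bifurcation parameter $\lambda=\kappa^2/b$ crosses the lowest Landau level of the magnetic Laplacian (whose kernel is one complex dimension for $n=1$ by the theta-function analysis), then perform a Lyapunov--Schmidt reduction, use the $U(1)$ symmetry to reduce to a real scalar equation solved by the analytic implicit function theorem, and obtain (II) from the uniqueness in the reduction and (III) by inverting the branch parametrization. The only cosmetic difference is that the paper solves the reduced equation for $\lambda$ as a function of $s$ via $\partial_\lambda\gamma_1(1,0)=-\|\psi_0\|^2\neq 0$, deferring the computation of the $s^2$-coefficient whose sign (as you note, governed by $\beta(\mathcal{L})$ and $\kappa^2-\kappa_c(\mathcal{L})^2$) determines on which side of $\kappa^2$ the branch's $b$-values lie, whereas you propose expanding $b-\kappa^2=c\,s^2+O(s^4)$ directly.
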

Due to the flux quantization (see below), the quantity $b: = \frac{2\pi }{|\Omega^\Lat|}$, entering the theorem, is the average magnetic flux per lattice cell, $b := \frac{1}{|\Omega^\Lat|} \int_{\Omega^\Lat} \Curl A$.
We note that due to the reflection symmetry of the problem we can assume that $b \geq 0$.

\DETAILS{\begin{theorem}\label{thm:main-resultTS2}
Under the conditions of Theorem \ref{thm:main-resultTS} 
 and for $\kappa^2 > 1/2$, the lattice shape for which the average energy per lattice cell is minimized approaches the hexagonal lattice as $b \to \kappa^2$, in the sense that the shape parameter, $\tau_\cL $, of $\cL$ (see Subsection 3.3 below) approaches $ \tau_{triangular} = e^{i\pi/3}$ in $\C$.\end{theorem}}

All the rigorous results proven so far deal with Abrikosov lattices with one quantum of magnetic flux per lattice cell. Partial results for higher magnetic fluxes were proven in \cite{Ch, Al}. 


{\paragraph{1.3 Result.} 
  In this paper, we  prove existence of Abrikosov vortex lattice solutions of the Ginzburg-Landau equations, with multiple
 magnetic flux quanta per a fundamental cell, for certain lattices  and for certain flux quanta numbers.

We also revisit the existence proof for the Abrikosov vortex lattices, 
streamlining some arguments and providing some essential details missing in earlier proofs for a single  magnetic flux quantum per a fundamental cell.   
  
    As in the previous works, we consider only bulk superconductors filling all $\R^3$, with no variation along one direction, so that the problem is reduced to one on $\R^2$.

To formulate our results, we need some definitions. Motivated by the idea that most stable (i.e. most physical) solutions are also most symmetric, 
  we look for solutions which are most symmetric among vortex lattice solution for a given lattice and given the number of the flux quanta per fundamental cells.
Following \cite{DFN}, we denote
\begin{enumerate}
	\item $G(\cL)$ to be the group of symmetries of the lattice $\cL$,
	\item $T(\cL)$ to be the subgroup of $G(L)$ consisting of lattice translations, and
	\item $H(\cL) := 
G(\cL)\cap	O(2) \approx G(\cL)/T(\cL)$,  the maximal non-translation subgroup. 
\end{enumerate}

Note that the non-$SO(2)$ part of $H(\cL):=G(\cL)\cap O(2)$ comes from reflections and since all reflections in $G(\cL)\cap O(2)$ can be obtained as products of rotations and one fixed reflection, which we take to be $z \mapsto \bar{z}$, it suffices for us to consider the conjugation action. Since the conjugation is not holomorphic, we show in Section \ref{sec:no-sol} that there is no solutions having this symmetry. This  implies that the maximal point symmetry group of the GL equations is \[SH(\cL):=G(\cL)\cap SO(2) (=H(\cL)\cap SO(2)).\]

Hence, we look for solutions among functions whose physical properties are invariant under action of $SH(\cL)$. 

\begin{definition}[Maximal symmetry]
We say 
vortex lattice solution on $\R^2$ is \textit{maximally symmetric} iff all related physical quantities (i.e. $n_s := |\Psi|^2$, $B := \Curl A$, $J := \Im(\bar{\Psi}\Cov{A}\Psi)$) are invariant under the action of the group $SH(\cL)$, where $\lat$ is the underlying lattice of the solution.  
\end{definition}
Furthermore, we are interested in 
vortex lattice solutions with the following natural property 
\begin{definition}[$\cL-$ irreducibility] We say that a solution is $\cL-$irreducible iff there are no finer lattice for which it is a vortex lattice solution. 
 \end{definition}

Our main result is the following  
\begin{theorem} \label{thm:MultiFluxExist}
Assume either $n$ is one of $2,4,6,8,10$ and $\LAT$ is a hexagonal lattice or $n=3$ 
  and $\LAT$ is arbitrary. Then the GLEs have an  $\LAT-$irreducible, 
   maximally symmetric solution brach $(\lambda(s), \Psi(s), A(s)),\ s \in \F{R}$ small. This branch is, after rescaling \eqref{resc}, of the form \eqref{s-expansions}.  
  \end{theorem}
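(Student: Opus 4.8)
The plan is to obtain the branch by equivariant bifurcation from the normal (trivial) branch, the new ingredient relative to Theorem \ref{thm:main-resultTS} being a representation-theoretic reduction that forces the relevant kernel to be one-dimensional. First I would fix the lattice $\LAT$ and reformulate \eqref{GLE} on the space of pairs $(\Psi,A)$ that are gauge-periodic with respect to $\LAT$ with $n$ flux quanta per cell, i.e.\ $\Psi$ is a section of the degree-$n$ line bundle over the torus $\R^2/\LAT$ and $\Curl A$ has mean $b=2\pi n/|\Omega^\LAT|$; after the rescaling \eqref{resc} the problem becomes a map $F(\lambda,\Psi,A)=0$ between the corresponding Sobolev spaces, with the normal branch $(\lambda,0,A^\lambda)$, $\Curl A^\lambda\equiv b$, as an explicit family of solutions. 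I would then restrict the entire construction to the closed subspace of maximally symmetric configurations, on which $F$ still maps into the symmetric target, so that any solution produced is automatically maximally symmetric.

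The linearization of the $\Psi$-equation along the normal branch is the magnetic Schr\"odinger operator $\Cov{A^\lambda}^{*}\Cov{A^\lambda}-\lambda$, whose spectrum consists of the Landau levels; bifurcation can occur only where the lowest level crosses, i.e.\ at the rescaled value corresponding to $b=\kappa^2$. The kernel there is the lowest Landau level $\cV_n=\Null(\Cov{A^\lambda}^{*}\Cov{A^\lambda}-\lambda)$, which by flux quantization and the theta-function description of holomorphic sections has complex dimension exactly $n$. For $n>1$ this multiplicity is precisely the obstruction to a direct application of Crandall--Rabinowitz, and removing it is the crux of the argument.

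The main step, and the main obstacle, is to show that on the maximally symmetric subspace the effective kernel is one-dimensional. The group $SH(\LAT)=G(\LAT)\cap SO(2)$ is cyclic, generated by rotation through $2\pi/m$ with $m=6$ for the hexagonal lattice and $m=2$ for a general lattice; this rotation lifts, after a phase choice that turns the a priori projective action into a genuine one, to a unitary operator $U$ (magnetic rotation composed with a gauge transformation) on $\cV_n$. Requiring that the physical quantities $|\Psi|^2$, $\Curl A$, and $\Im(\bar\Psi\Cov{A}\Psi)$ be $SH(\LAT)$-invariant forces $\Psi$ to lie in a single character eigenspace of $U$. I would compute the character of the representation of $SH(\LAT)$ on $\cV_n$ by evaluating the traces of the powers of $U$ on the theta-function basis, and then use orthogonality of characters to read off the multiplicity of each eigenvalue. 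The statement to be verified is exactly that for $n\in\{2,4,6,8,10\}$ on the hexagonal lattice, and for $n=3$ on an arbitrary lattice, there is a character whose eigenspace in $\cV_n$ is one complex dimension; this is where the specific lists of admissible $n$ originate.

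Granting a one-dimensional kernel $\mathrm{span}\{\Psi_0\}$ inside the symmetric subspace, I would close the argument by Lyapunov--Schmidt reduction: solve the $A$-equation and the higher-Landau-level part of the $\Psi$-equation for the regular components as functions of the amplitude, reducing \eqref{GLE} to a single scalar bifurcation equation on $\mathrm{span}\{\Psi_0\}$. The transversality hypothesis of the Crandall--Rabinowitz theorem then reduces to the lowest Landau level crossing $\lambda$ with nonzero speed, which holds because the Landau levels depend strictly monotonically on the field; this yields the smooth branch $(\lambda(s),\Psi(s),A(s))$ with $\Psi(s)=s\Psi_0+O(s^2)$, i.e.\ the form \eqref{s-expansions}. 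Finally, $\LAT$-irreducibility follows from the zero structure of the section $\Psi_0$: a nonzero lowest-Landau-level section has exactly $n$ zeros per cell, and I would check that for the symmetric generator $\Psi_0$ these zeros do not lie on a proper sublattice pattern, so that $\LAT$ is genuinely the period lattice and no finer lattice works.
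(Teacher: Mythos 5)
Your overall architecture coincides with the paper's: gauge-fixing and rescaling, bifurcation off the normal branch at $\lambda=n$ (i.e.\ $b=\kappa^2$), identification of the kernel with the $n$-dimensional theta-function space $V_n$, symmetry reduction by the cyclic group $SH(\cL)$ ($C_6$ for hexagonal, $C_2$ in general), and then a one-dimensional Lyapunov--Schmidt/Crandall--Rabinowitz argument (the paper packages this as the conditional Theorem \ref{thm:bif-thm-Kdim1}, applied with $X=X_{n,k,r}$, $Y=Y_{n,k,r}$). Where you genuinely diverge is the multiplicity computation. The paper does not use characters: it shows the twisted rotation $T_{n,k}$ has eigenvalues $\xi_k^r$ with eigenfunctions vanishing at the origin to order $r$ (Lemma \ref{Tevs}), establishes a bijection between $C_6$-equivariant theta functions and rotation-invariant divisors in the Wigner--Seitz cell (Theorem \ref{thm:Gclassify-theta}, whose surjectivity requires explicit determinant constructions of theta functions with prescribed zero orbits), and then counts divisors via the orbit-stabilizer relation \eqref{mi-cond}. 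Your trace/orthogonality route, computing the multiplicity of $\xi_k^r$ as $\frac{1}{k}\sum_j \xi_k^{-rj}\Tr\bigl(T_{n,k}^j\bigr)$ on the basis of Theorem \ref{Basis} (or by a holomorphic Lefschetz fixed-point computation), is a legitimate and arguably cleaner alternative for the pure counting; what the paper's divisor picture buys in addition is the explicit spanning theta functions (the table in Appendix \ref{app:ThetaTable}) together with their zero configurations, which is precisely the input the irreducibility argument consumes.

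That said, there are two concrete gaps. First, everything that makes the hypotheses of the theorem what they are is deferred: you never verify that the listed pairs give one-dimensional eigenspaces, and your plan contains no mechanism producing the restriction to \emph{even} $n$ for $C_6$. That restriction does not come out of the character computation; it arises one step earlier, from whether the rotation acts on the quasiperiodic space at all: $\rho_g$ preserves the boundary condition \eqref{gaugeperiod-psi'} only if $e^{ic_{gt}}=e^{ic_t}$, which by \eqref{cs-express} holds for $C_6$ only when $n$ is even (Lemma \ref{lem:Xn-invar}); your ``phase choice making the projective action genuine'' (cf.\ Proposition \ref{prop:xig-xindep}) lives after this point and does not repair it for odd $n$. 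Second, irreducibility: ``check that the zeros do not lie on a proper sublattice pattern'' is the nontrivial step, not a remark. The paper proves it (Theorem \ref{nEvenLin}) by exploiting that the chosen generator's vanishing order at the origin differs from its order at every other point of the cell, combined with Lemma \ref{lem:FinerLattice}, and for $n=3$ by a separate prime-flux divisibility argument using the zeros $0,1/2,\tau/2$ of the odd theta function. A character count alone tells you $\dim=1$ but nothing about the generator's zero configuration, so on your route you would still need local analysis at the fixed points of the rotation (origin, vertices, edge midpoints of the Wigner--Seitz cell) of the kind the divisor formalism organizes. As written, your proposal is a correct skeleton whose two load-bearing verifications are promised rather than performed.
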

 
As was mentioned above, we revisit the existence proof of \cite{TS, TS2} 
streamlining some arguments and providing some essential details either missing or only briefly mentioned (\cite{TS, TS2}) in earlier proofs of the existence  of Abrikosov vortex lattices. 
  
 After introducing general properties of \eqref{GLE} in Sections \ref{sec:problem}-\ref{sec:rescaling}, we prove an abstract conditional result in Sections \ref{sec:operators}-\ref{sec:bifurc-dimK=1}, from which we derive  Theorem \ref{thm:main-resultTS} 
 in Section \ref{sec:bifurc-n=1} (giving a streamlined proof of this result) and Theorem \ref{thm:MultiFluxExist}, in Section \ref{sec:bifurc-point-sym}.



\bigskip

\noindent \textbf{Acknowledgements} It is a pleasure to thank Max Lein for useful discussions.  The first and third authors' research is supported in part by NSERC Grant No. NA7901. During the work on the paper, they enjoyed the support of the NCCR SwissMAP.
The second author (P. S.) was partially supported by Fondo Basal CMM-Chile and Fondecyt postdoctoral grant 3160055


\section{Properties of the  Ginzburg-Landau equations} \label{sec:problem}

\textbf{2.1 Symmetries.} The Ginzburg-Landau equations exhibit a number of symmetries, that is, transformations which map solutions to solutions: 

\noindent The gauge symmetry, 
\begin{equation}\label{gauge-symmetry}
    (\Psi(x), A(x)) \mapsto ( e^{i\eta(x)}\Psi(x),   A(x) + \nabla\eta(x)),\qquad \forall \eta \in C^2(\R^2, \R);
\end{equation}
The 
translation symmetry, 
\begin{equation}\label{translation-symmetry}
    (\Psi(x), A(x)) \mapsto (\Psi(x + t), A(x + t)),\qquad \forall t \in \R^2;
\end{equation}
The rotation and reflection symmetry, 
\begin{equation}\label{rotation-reflection-symmetry}
    (\Psi(x), A(x)) \mapsto (\Psi(R^{-1}x),  RA(R^{-1}x)),\qquad \forall R \in O(2) .
\end{equation}

An important role in our analysis is played by the reflections symmetry.  Let the reflection operator $T^{\rm refl}$ be given as \begin{equation}\label{rotation-reflection-symmetry}
T^{\rm refl}:    (\Psi(x), A(x)) \mapsto (\Psi(- x),  - A(- x)) .
\end{equation} We say that a state $(\Psi, A)$ is {\it even (reflection symmetric)} iff 
 \begin{align} \label{even} T^{\rm refl}(\Psi, A) = (\Psi, A).
\end{align} 

The reflections symmetry of the GLE equations implies that we can restrict the class of solutions to even ones.
In what follows we always assume that solutions $(\Psi, A)$ are even.

\medskip

\textbf{2.2 Elementary solutions.} There are two immediate solutions to the Ginzburg-Landau equations that are homogeneous in $\Psi$. These are the perfect superconductor
solution where $\Psi_S \equiv 1$ and $A_S \equiv 0$, and the \textit{normal} (or non-superconducting) solution where $\Psi_N = 0$ and $A_N$ is such that
$\Curl A_N =: b$ is constant.
(We see that the perfect superconductor is a solution only when the magnetic field is absent. On the other hand, there is a normal solution,  $(\Psi_N = 0,\ A_N,\ \Curl A_N =$ constant), for any constant magnetic field.)



\section{Lattice equivariant states}\label{sec:lattice states}

\textbf{3.1 Periodicity.} Our focus in this paper is on states $(\Psi, A)$ defined on all of $\R^2$, but whose physical properties, the density of superconducting pairs of electrons, $n_s := |\Psi|^2$, the magnetic field, $B := \Curl A$, and the current density, $J := \Im(\bar{\Psi}\Cov{A}\Psi)$, are doubly-periodic with respect
to some lattice $\cL$. We call such states $\cL-$\emph{lattice states}.

One can show that  a state $(\Psi, A) \in H^1_{\textrm{loc}}(\R^2;\C) \times H^1_{\textrm{loc}}(\R^2;\R^2)$ is a $\mathcal{L}$-lattice state if and only if translation by an element of the lattice results in a gauge transformation
    of the state, that is, for each $t \in \mathcal{L}$, there exists a function $g_t \in H^2_{loc}(\R^2;\R)$
    such that 
   \begin{equation}\label{equiv-cond} \Psi(x + t) = e^{ig_t(x)}\Psi(x)\ \mbox{and}\ A(x+t) = A(x) + \nabla g_t(x),\ \forall t\in \lat,\end{equation} almost everywhere. States satisfying \eqref{equiv-cond} will be called {\it($\lat-$) equivariant (vortex) states}.

It is clear that the gauge, translation, and rotation symmetries of the Ginzburg-Landau equations map lattice states to
lattice states. In the case of the gauge and translation symmetries, the lattice with respect to which the solution is
periodic does not change, whereas with the rotation symmetry, the lattice is rotated as well. It is a simple calculation
to verify that the magnetic flux per cell of solutions is also preserved under the action of these symmetries.

Note that $(\Psi, A)$ is defined by its restriction to a single cell and can be reconstructed from this restriction by lattice translations.

\textbf{3.2 Flux quantization.}  The important property of lattice states is that the magnetic flux through a lattice cell is quantized, 
\begin{equation}\label{eq:flux-per-cell}
    \int_{\Omega^\Lat} \Curl A = 2\pi n
\end{equation}
for some integer $n$, with $\Omega^\Lat$ any fundamental cell of the lattice.  This 
implies that
\begin{equation}\label{Ombrel}
	|\Omega^\Lat| = \frac{2\pi n}{b},
\end{equation}
where $b$ is the average magnetic flux per lattice cell, $b := \frac{1}{|\Omega^\Lat|} \int_{\Omega^\Lat} \Curl A$.

Indeed, if $|\Psi| > 0$ on the boundary of the cell, we can write 
$\Psi = |\Psi|e^{i\theta}$ and $0 \leq \theta < 2\pi$. The periodicity of $n_s$ and $J$ ensure the periodicity of $\nabla\theta - A$ and therefore by Green's theorem, $\int_\Omega \Curl A = \oint_{\partial\Omega} A = \oint_{\partial\Omega} \nabla\theta$ and this function is equal to $2\pi n$ since $\Psi$ is single-valued.

Equation \eqref{eq:flux-per-cell} then imposes a condition on the area of a cell, namely, \eqref{Ombrel}.

\textbf{3.3 Lattice shape.} 
We  identify $\R^2$ with $\C$, via the map $(x_1, x_2)\ra x_1+i x_2$,  and, applying a rotation, if necessary, bring any lattice $\LAT$ to the form  
 \begin{equation}\label{LATom}\LAT_\om=r  (\Z+\tau\Z),\end{equation}
where $\om=(\tau, r)$, $r>0,\ \tau\in \C$, $\Im\tau > 0$, which we assume from now on. 
If $\LAT_\om$ satisfies \eqref{Ombrel}, then $r = \sqrt{\frac{2\pi n}{b\im \tau}}$.
Furthermore, we introduce the normalized lattice
   \begin{equation}\label{LATtau}\LAT^\tau:=\sqrt{\frac{2\pi }{\im\tau} }  (\Z+\tau\Z) 
     \end{equation} 
   and let   $\Omega^\tau$ stand for an elementary cell of the lattice $\LAT^\tau$.  
    We note that $|\Omega^\tau| = 2\pi $. 

Since the action the modular group  $SL(2, \Z)$ on $\C$ does not change the lattice, but in general maps one basis, and therefore $\tau,\  \im\tau>0$, into another (see Supplement I of \cite{Sig}), the modular invariance of $\g( \tau)$ means that it depends only on the lattice and not its basis. Thus, it suffices to consider  $\tau$ in  the fundamental domain,  $\bH/SL(2, \Z)$, of   $SL(2, \Z)$ acting on the Poincar\'e half plane $\bH:=\{\tau\in \C: \im \tau >0\}$. 
The fundamental domain $\bH/SL(2, \Z)$ is given explicitly as
     \begin{align}\label{fund-domSL2Z} 
    \bH/SL(2, \Z)=\{\tau\in \C: \Im\tau > 0,\ |\tau| \geq 1,\ -\frac{1}{2} < \Re\tau \leq \frac{1}{2} \} . 
\end{align}



\section{Fixing the gauge and rescaling} \label{sec:rescaling}

In this section we fix the gauge for solutions,  $(\Psi, A)$, of \eqref{GLE} and then rescale them to eliminate the dependence of the size of the lattice on $b$. Our space will then depend only on the number of quanta of flux and the shape of the lattice.

\textbf{4.1 Fixing the gauge.}
The gauge symmetry allows one to fix solutions to be of a desired form.
Let $A^b (x) = \frac{b}{2} J x \equiv  \frac{b}{2} x^\perp$, where $x^\perp = Jx = (-x_2, x_1)$ and  $J$ is the symplectic matrix   
\begin{equation*}  J = \left( \begin{array}{cc} 0 & -1 \\ 1 & 0 \end{array} \right).\end{equation*}
     We will use the following preposition, first used by \cite{Odeh} and proved in \cite{Takac} (an alternate proof is given in in Appendix A of \cite{TS2}). 

\begin{proposition}
    \label{thm:fix-gauge}
    Let $(\Psi', A')$ be an $\mathcal{L}$-equivariant state, and let $b$ be the average magnetic flux per cell.
    Then there is a $\mathcal{L}$-equivariant state  $(\Psi, A)$, 
    that is gauge-equivalent to $(\Psi', A')$, such that
    \begin{enumerate}
    \item[(i)] 
  $\Psi(x + s) = e^{i(\frac{b}{2}x\cdot J s+c_s)}\Psi(x)$ and   $A(x + s) = A(x) +\frac{b}{2} J  s$ for all $s\in \LAT$; 
    \item[(ii)] 
  $ 
  \int_\Omega (A-A^b) = 0,\ \Div A = 0.$
  \end{enumerate}
Here  $c_s$ satisfies the condition 
$c_{s+t} - c_s - c_t - \frac{1}{2} b s \wedge t \in 2\pi\Z. $ 
\end{proposition}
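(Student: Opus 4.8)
The plan is to use the gauge freedom to bring $A'$ into the symmetric Coulomb form, and then read off the transformation law for $\Psi$. First I would construct the target potential directly: set $A := A^b + J\nabla\psi$, where $\psi$ is a periodic solution of the Poisson equation $\Delta\psi = \Curl A' - b$ on the fundamental cell $\Omega$. Since $\Curl A'$ is $\cL$-periodic (it is a physical quantity) with $\int_\Omega \Curl A' = 2\pi n = b|\Omega|$ by flux quantization \eqref{eq:flux-per-cell}, the right-hand side has zero mean, so a periodic $\psi$ exists and is unique up to an additive constant. By construction $\Div(J\nabla\psi)=0$ and $\Curl(J\nabla\psi) = \Delta\psi$, so $A$ satisfies $\Div A = 0$, $\Curl A = \Curl A'$, and $\int_\Omega(A - A^b) = J\int_\Omega\nabla\psi = 0$; moreover, since $J\nabla\psi$ is periodic and $A^b(x+s) - A^b(x) = \frac b2 Js$, the potential $A$ obeys $A(x+s) = A(x) + \frac b2 Js$. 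This yields all of (ii) together with the $A$-part of (i).

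Next I would realize $A$ as a gauge transform of $A'$. As $\Curl(A - A') = 0$ on the simply connected plane $\R^2$, there is $\eta \in H^2_{\mathrm{loc}}(\R^2;\R)$ with $A - A' = \nabla\eta$; setting $\Psi := e^{i\eta}\Psi'$ makes $(\Psi, A)$ gauge-equivalent to $(\Psi', A')$. To obtain the transformation law for $\Psi$, I combine $\Psi'(x+s) = e^{ig_s(x)}\Psi'(x)$ with the definition of $\eta$ to get $\Psi(x+s) = e^{i\Phi_s(x)}\Psi(x)$ where $\Phi_s(x) := \eta(x+s) + g_s(x) - \eta(x)$. The key computation is that $\Phi_s$ is affine in $x$ with the prescribed slope: using $\nabla g_s = A'(\cdot + s) - A'$ and $\nabla\eta = A - A'$ one finds $\nabla_x\Phi_s(x) = A(x+s) - A(x) = \frac b2 Js$, hence $\Phi_s(x) = \frac b2\, x\cdot Js + c_s$ for an $x$-independent constant $c_s$. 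This gives (i) for $\Psi$.

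Finally, the cocycle identity for $c_s$ follows from evaluating $\Psi(x+s+t)$ in two ways — either as a single translation by $s+t$ or as successive translations by $s$ and $t$ — and matching phases modulo $2\pi$; the cross term $\frac b2\, t\cdot Js = \frac b2\, s\wedge t$ produces exactly $c_{s+t} - c_s - c_t - \frac 12 b\, s\wedge t \in 2\pi\Z$. I expect the main point requiring care to be the construction of $A$ in the first step: one must invoke flux quantization to guarantee solvability of the torus Poisson problem, use the Hodge-type splitting $A = A^b + J\nabla\psi$ to secure simultaneously $\Div A = 0$, the exact quasiperiodicity, and the zero-mean normalization, and then appeal to simple-connectivity to express the resulting change as a genuine single-valued gauge transformation with $\eta$ of the required regularity.
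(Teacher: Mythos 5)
The paper itself gives no proof of Proposition \ref{thm:fix-gauge}: it is imported from the literature (first used in \cite{Odeh}, proved in \cite{Takac}, with an alternate proof in Appendix A of \cite{TS2}), so there is no in-paper argument to compare yours against line by line. Your proof is correct and self-contained, and it runs along the same standard lines as the cited alternate proof: the Hodge-type normalization $A = A^b + J\nabla\psi$, with $\psi$ a periodic solution of $\Delta\psi = \Curl A' - b$, simultaneously secures $\Div A = 0$, $\int_\Omega (A - A^b) = 0$ and the exact shift $A(x+s) = A(x) + \frac{b}{2}Js$; simple connectivity of $\R^2$ then realizes $A - A'$ as $\nabla\eta$ with $\eta \in H^2_{\rm loc}$; and the observation $\nabla_x \Phi_s = A(x+s) - A(x) = \frac{b}{2}Js$ correctly forces the affine phase $\Phi_s(x) = \frac{b}{2}\,x\cdot Js + c_s$. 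Two small refinements are worth recording. First, solvability of the periodic Poisson problem needs only that $\Curl A' - b$ is periodic with mean zero, which holds by the very definition of $b$ as the average flux per cell; flux quantization is not what is needed at that step. It does enter, however, exactly where you glide over a sign: composing the translations in the two orders $(x+t)+s$ and $(x+s)+t$ yields cocycle defects $\frac{b}{2}\,s\wedge t$ and $-\frac{b}{2}\,s\wedge t$ respectively, and the consistency of the two readings, $b\, s\wedge t \in 2\pi\Z$, is precisely flux quantization \eqref{eq:flux-per-cell}. Second, your phase-matching derivation of the cocycle condition tacitly assumes $\Psi' \not\equiv 0$ (one matches the two constant phase discrepancies on the positive-measure set where $\Psi \neq 0$); in the degenerate case $\Psi' \equiv 0$ one should instead define $c_s := \Phi_s(0) = \eta(s) + g_s(0) - \eta(0)$ directly, or simply note that (i) is then vacuous for $\Psi$ and $c_s$ may be chosen as in \eqref{cs-express}. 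Neither point is a gap in substance.
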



\textbf{4.2 Rescaling.}
Let $\tau\in \C$, $\Im\tau > 0$ and $r = \sqrt{\frac{2\pi n}{b\im \tau}}$.  We define the rescaled fields $(\psi, a)$ as
\begin{align}\label{resc}
    (\psi(x), a(x)) := ( r' \Psi(r' x), r 'A(r' x) ),\ r':= r/ \sqrt{\frac{2\pi }{\im\tau} }=\sqrt{\frac{n}{b}}.
\end{align}
Let $\LATo$ and $\LATt$ be the lattices defined in \eqref{LATom} and \eqref{LATtau}.
    We summarize the effects of the rescaling above:

    \begin{enumerate}[(A)]

    \item $\Psi$ and $A$ solve the Ginzburg-Landau equations if and only if $\psi$ and $a$ solve
            \begin{subequations} \label{rGL}
            \begin{align} \label{rGLpsi}
                &(-\Delta_{a}  - \lambda) \psi = -\kappa^2 |\psi|^2\psi,\  \lambda = \kappa^2 n/ b,\\           
                 \label{rGLA}
              &  \Curl^*\Curl a = \Im(\bar{\psi}\Cov{a}\psi).
            \end{align} \end{subequations}

    \item   \label{reduced-gauge-form}
  $(\Psi, A)$ is  a $\LATo$-equivariant state iff $(\psi, a)$ is a $\mathcal{L}^\tau$-equivariant state. Moreover,  if $(\Psi, A)$ is of the form described in Proposition \ref{thm:fix-gauge}, then   $(\psi, a)$  satisfies 
           \begin{align}            \label{gaugeperiod-resc}  &\psi(x + t) = e^{i\frac{ n}{2}x\cdot J t+i c_t }\psi(x),\ a(x + t) = a(x)+ \frac{ n}{2} J t,\ \forall t \in \LAT^\tau\\   
            \label{aver-resc}   &    \Div  a = 0,\      \int_{\Omega^\tau} (a - a^n) = 0,\ \text{  where } 
               a^n(x) := \frac{ n}{2} J x,             \end{align}
              and  
              $c_t$, which satisfies the condition 
              \begin{align}            \label{cs-cond}c_{s+t} - c_s - c_t - \frac{1}{2} n s \wedge t \in 2\pi\Z. \end{align}  
       \item   $\frac{r^2}{|\Omega^\tau|}E_{\Omega^\tau}(\Psi,A) = \mathcal{E}_{\lambda}(\psi,\al)$,  where  $ a = a^n + \alpha,\  \mbox{with}\ a^n(x) := \frac{ n}{2} J x,$ $\lambda = \kappa^2 {r'}^2 = \kappa^2 \frac{n}{b}$ and
        \begin{equation}     \label{rEnergy}
            \mathcal{E}_\lambda(\psi, \alpha) = \frac{1}{|\Omega^\tau|} \int_{\Omega^\tau}\left( |\Cov{a}\psi|^2 + |\Curl a|^2
                        + \frac{\kappa^2}{2} ( |\psi|^2 - \frac{\lambda}{\kappa^2} )^2\right) \,dx.
        \end{equation}
                         \end{enumerate}
Our problem then is: for each $n = 1,2,\ldots$,  find $(\psi, a)$, 
solve the rescaled Ginzburg-Landau equations \eqref{rGL} and satisfying \eqref{gaugeperiod-resc}. 

In what follows, the parameter $\tau$ is fixed and, to simplify the notation, we {\it omit the superindex} $\tau$ at $\LAT^\tau$ and $\Omega^\tau$ and write simply $\LAT$ and $\Omega$. 


\section{The linear problem} 
\label{sec:operators}

In this section we consider the linearization of \eqref{rGL} satisfying \eqref{gaugeperiod-resc} on the normal solution  $(0, a^n),\  \mbox{with, recall,}\   a^n(x) := \frac{ n}{2} J x$. 
 This leads to the linear problem:
\begin{align} \label{lin-probl}-\Delta_{a^n} \psi_{0} = \lam  \psi_{0},\end{align} for $\psi_0$ satisfying the gauge - periodic boundary condition (see \eqref{gaugeperiod-resc})
 \begin{align}            \label{gaugeperiod-psi}\psi_0(x + t) = e^{i(\frac{  n}{2}x\cdot Jt+ c_t)}\psi_0(x),\ \forall t \in \LAT.\end{align}
Our goal is to prove the following 
  \begin{proposition}\label{prop:Landau-ham-spec}  The operator $-\Delta_{a^n}$ is self-adjoint on its natural domain and its spectrum is given by
  	\begin{equation}\label{spec-Landau}	\sigma(-\Delta_{a^n}) = \{\, (2m + 1) n : m = 0, 1, 2, \ldots \,\},
	\end{equation}
	with each eigenvalue is of the multiplicity $n$. Moreover,  
	\begin{align} \label{Vn-space} \Null (-\Delta_{a^n} - n) =e^{\frac{in}{2}x_2(x_1 + ix_2) } V_n,\end{align} 
where $V_n$ is spanned by functions of the form (below $z= (x_1+i x_2)/ \sqrt{\frac{2\pi}{\im\tau} }$)
     \begin{align} 
  \label{theta-repr} &\theta (z, \tau) := \sum_{m=-\infty}^{\infty} c_m e^{i2\pi m z},\  c_{m + n} = e^{-in\pi z} e^{i2m\pi\tau} c_m.  \end{align}
Such functions are determined entirely by the values of $c_0,\ldots,c_{n-1}$ and therefore form an $n$-dimensional vector space.   \end{proposition}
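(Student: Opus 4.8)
The plan is to recognize $-\Delta_{a^n}$ as the Landau Hamiltonian for the constant magnetic field $\Curl a^n = n$, restricted to the gauge-periodic sections singled out by \eqref{gaugeperiod-psi}, and to analyze it through the standard creation/annihilation (ladder) operator algebra, closing with a theta-function dimension count. First I would settle the functional-analytic preliminaries: since $-\Delta_{a^n} = \Cov{a^n}^*\Cov{a^n}$, it is the nonnegative self-adjoint operator attached to the closed quadratic form $\psi \mapsto \int_\Omega |\Cov{a^n}\psi|^2$ on the magnetic Sobolev space of functions obeying \eqref{gaugeperiod-psi}. Because $\Omega$ is a bounded cell and the gauge-periodic condition realizes this space as the $L^2$-sections of a Hermitian line bundle over the compact torus $\R^2/\LAT$, the form domain embeds compactly in $L^2(\Omega)$, so $-\Delta_{a^n}$ has compact resolvent and purely discrete spectrum. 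This reduces the problem to locating the eigenvalues and their multiplicities.

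Next I would introduce the first-order operator $D := (\Cov{a^n})_1 + i(\Cov{a^n})_2 = 2\partial_{\bar z} + \tfrac n2 z$ (with $z = x_1 + ix_2$) and its formal adjoint $D^*$. Using $(\Cov{a^n})_j^* = -(\Cov{a^n})_j$ and $[(\Cov{a^n})_1,(\Cov{a^n})_2] = -i\Curl a^n = -in$, a direct computation gives the two factorizations
\begin{equation*} -\Delta_{a^n} = D^*D + n = DD^* - n, \qquad [D, D^*] = 2n. \end{equation*}
Since $\Cov{a^n}$ commutes with the magnetic translations, $D$ and $D^*$ preserve the boundary condition \eqref{gaugeperiod-psi}. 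From $-\Delta_{a^n} \ge n$ together with $[-\Delta_{a^n}, D^*] = 2n\,D^*$ and $[-\Delta_{a^n}, D] = -2n\,D$ one obtains the ladder structure: $D^*$ raises an eigenvalue by $2n$ and $D$ lowers it by $2n$, and repeated application of $D$ to any eigenfunction must terminate in $\Null D$ (otherwise the eigenvalue would fall below $n$), which forces the spectrum to be exactly $\{(2m+1)n : m \ge 0\}$ as in \eqref{spec-Landau}. The relations $DD^* = -\Delta_{a^n} + n$ and $D^*D = -\Delta_{a^n} - n$ show that $D^*$ and $D$ restrict to mutually inverse (up to nonzero scalars) isomorphisms between consecutive Landau levels, so every eigenvalue shares the common multiplicity $\dim\Null D = \dim\Null(-\Delta_{a^n} - n)$.

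It then remains to identify this kernel and compute its dimension. Solving $D\psi = 0$, i.e. $\partial_{\bar z}\psi = -\tfrac n4 z\psi$, gives $\psi = e^{-\frac n4|z|^2} f$ with $f$ holomorphic; and since $e^{-\frac n4|z|^2} = e^{-\frac n4 z^2}\,e^{\frac{in}2 x_2(x_1+ix_2)}$, this is precisely $\psi = e^{\frac{in}{2}x_2(x_1+ix_2)}\,\theta$ with $\theta := e^{-\frac n4 z^2} f$ holomorphic, matching \eqref{Vn-space}. Passing to the normalized variable $z = (x_1+ix_2)/\sqrt{2\pi/\im\tau}$, so that the generators of $\LAT$ act as $z \mapsto z+1$ and $z \mapsto z+\tau$, I would substitute $\psi = e^{\frac{in}2 x_2(x_1+ix_2)}\theta$ into \eqref{gaugeperiod-psi} and cancel the prefactor; the magnetic phase $\tfrac n2\, x\cdot Jt + c_t$ combines with the ratio of Gaussians to leave the two functional equations $\theta(z+1) = \theta(z)$ and $\theta(z+\tau) = e^{-i\pi n\tau - 2\pi i n z}\theta(z)$, up to the constants fixed by the cocycle \eqref{cs-cond}. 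Expanding the periodic $\theta$ as $\sum_m c_m e^{2\pi i m z}$ and matching coefficients in the quasi-periodicity relation yields the recursion of \eqref{theta-repr}, determining all $c_m$ from $c_0,\dots,c_{n-1}$; hence $V_n$ is $n$-dimensional (equivalently, $\dim\Null(-\Delta_{a^n}-n) = n$ is the Riemann--Roch dimension of the holomorphic sections of a degree-$n$ line bundle over the elliptic curve $\C/(\Z+\tau\Z)$).

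I expect the main obstacle to be the bookkeeping in this last step: converting the gauge-periodic condition \eqref{gaugeperiod-psi}, with its nontrivial cocycle $c_t$ obeying \eqref{cs-cond}, into the clean theta functional equations, verifying their mutual consistency, and extracting the \emph{exact} dimension $n$ from the Fourier recursion rather than merely an upper or lower bound. The ladder computations, though requiring care with the sign of $[(\Cov{a^n})_1,(\Cov{a^n})_2]$ and with the fact that $D,D^*$ preserve the gauge-periodicity class, are otherwise routine.
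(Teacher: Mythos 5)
Your proposal is correct and follows essentially the same route as the paper's proof: the factorization $-\Delta_{a^n}-n=(\bar\p_{a^n})^*\bar\p_{a^n}$ with $[\bar\p_{a^n},\bar\p_{a^n}^*]=2n$, the harmonic-oscillator ladder argument for \eqref{spec-Landau}, the identification $\Null(-\Delta_{a^n}-n)=\Null\bar\p_{a^n}$ solved by a Gaussian conjugation to holomorphic $\theta$, and the Fourier recursion from the quasi-periodicity relations giving $\dim V_n=n$. The only differences are that you supply details the paper leaves implicit (compact resolvent via the line bundle over $\R^2/\LAT$, the explicit intertwining isomorphisms between Landau levels for the constancy of multiplicities), which is a welcome but not a structural departure.
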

  \begin{proof} 
 The self-adjointness of the operator $-\Delta_{a^n}$ is well-known. To find its spectrum, we  introduce  the complexified covariant derivatives (harmonic oscillator annihilation and creation  operators), $\bar\p_{a^n} $ and $\bar\p_{a^n}^*=- \p_{a^n}$, with 
    \begin{equation}
        \bar\p_{a^n}  := (\nabla_{a^n})_1 + i(\nabla_{a^n})_2 =\partial_{x_1} + i\partial_{x_2} + \frac{1}{2} n (x_1 + i  x_2).
    \end{equation}
    One can verify that these operators satisfy the following relations:
     \begin{align} \label{commut-rel}[\bar\p_{a^n}, (\bar\p_{a^n})^*] &= 2\Curl a^n =2 n;\\
  \label{Landau-cr-annih}    -\Delta_{a^n} -  n &= (\bar\p_{a^n})^*\bar\p_{a^n}.  \end{align}
    As for the harmonic oscillator (see for example \cite{GS2}), this gives explicit information  about the spectrum of $-\Delta_{a^n}$, namely \eqref{spec-Landau}, with each eigenvalue is of the same multiplicity. 
    Furthermore, the above properties imply
    \begin{equation} \label{nullL'}
        \Null (-\Delta_{a^n} - n) = \Null \bar\p_{a^n}.
    \end{equation}

     We find $\Null \bar\p_{a^n}$.
       A simple calculation gives the following operator equation 
\begin{equation*}
  e^{-\frac{ n}{2}(i x_1 x_2-x_2^2)}\bar\p_{a^n} e^{\frac{ n}{2}(i x_1 x_2-x_2^2)} = \partial_{x_1} + i\partial_{x_2}.
\end{equation*}
(The transformation on the l.h.s. is highly non-unique.)    
                 This immediately proves that 
                  \begin{equation}\label{lin-probl'} \bar\p_{a^n} \psi = 0,\end{equation}
                  if and only if $\theta = e^{-\frac{ n}{2}(i x_1 x_2-x_2^2)}\psi$ satisfies $(\partial_{x_1} + i\partial_{x_2})\theta = 0$.
        We now identify $x \in \R^2$ with $z = x_1 + ix_2 \in \C$ and see that this means that $\theta$ is analytic and
                \begin{equation}\label{psi-nullL}    \psi \left( x \right) =   e^{ -\frac{\pi n}{2\im\tau} (|z|^2-z^2) } \theta(z, \tau),\ z = (x_1+ i x_2)/ \sqrt{\frac{2\pi }{\im\tau} }.
        \end{equation}
where we display the dependence of $\theta$ on $\tau$. 
The quasiperiodicity of $\psi$ transfers to $\theta$ as follows
            \begin{equation*}
                \theta(z + 1, \tau) = \theta(z, \tau), \qquad
                \theta(z + \tau, \tau) =  e^{ -2\pi inz } e^{ -in\pi\tau  } \theta(z, \tau).
            \end{equation*}

        The first relation ensures that $\theta$ have a absolutely convergent Fourier expansion of the form
        $    \theta(z, \tau) = \sum_{m=-\infty}^{\infty} c_m e^{2\pi m iz}.$ 
        The second relation, on the other hand, leads to relation for the coefficients of the expansion:        
           $ c_{m + n} = e^{-in\pi z} e^{i2m\pi\tau} c_m$, 
 which together with the previous statement implies \eqref{theta-repr}.              
    \end{proof}
  Next, we claim that the solution \eqref{psi-nullL} satisfies 
  \begin{align}  \label{psi-parity} \psi (x) = \psi (- x). \end{align}
By \eqref{psi-nullL}, it suffices to show that $\theta (z) = \theta (- z)$.  We show this for $n=1$. Denote the corresponding $\theta$ by $\theta (z, \tau)$. Iterating the recursive relation for the coefficients in \eqref{theta-repr}, we obtain the following
  standard representation 
  for the theta function    
\begin{align}  \label{theta-series} &\theta (z, \tau) =   \sum_{m=-\infty}^{\infty} e^{2\pi i  ( \frac12 m^2\tau + m z)}. \end{align} 
We observe that $\theta (- z, \tau)=\theta (z, \tau)$ and therefore $\psi_0 ( - x)=\psi_0 ( x)$. Indeed, using the expression \eqref{theta-series}, we find, after changing $m$ to $-m'$, we find
\[\theta (- z, \tau)=   \sum_{m=-\infty}^{\infty} e^{2\pi i  ( \frac12 m^2\tau - m z)}
=   \sum_{m'=-\infty}^{\infty} e^{2\pi i  ( \frac12 m'^2\tau + m' z)} =\theta (z, \tau).\]


\section{Setup of the bifurcation problem} \label{sec:setup-bifurc}

In this section we reformulate the Ginzburg-Landau equations as a bifurcation problem. 
 We pick the fundamental cell with the center at the origin so that it is invariant under the map $x\ra -x$.  
 
We write $a = a^n + \alpha$ and substitute this into \eqref{rGL} to obtain
    \begin{equation} \label{psia-eqs}  
        (L^n - \lambda)\psi  =- h(\psi, \al),\ \quad
        M \al = J(\psi, \al) ,
    \end{equation}
where $h(\psi, \al):= 2i\alpha\cdot\nabla_{a^n}\psi 
        + |\alpha|^2\psi + \kappa^2|\psi|^2\psi$ and $ J(\psi, \al) :=\Im(\bar{\psi}\Cov{a^n+\al}\psi )$ and
        \begin{equation}\label{Ln}
    L^n := -\Delta_{a^n}
\mbox{ and }
    M := \Curl^*\Curl.
\end{equation}
 The pair $(\psi, \al)$ satisfies the conditions \eqref{gaugeperiod-resc} - \eqref{aver-resc}, 
  with  $ a = a^n + \alpha,\  a^n(x) := \frac{ n}{2} J x,$  which we reproduce here
  \begin{align}            \label{gaugeperiod-psi'}  &\psi(x + t) = e^{i(\frac{ n}{2}x\cdot J t+c_t) }\psi(x),\\ 
   \label{alpha-per'} &\alpha(x + t) = \alpha(x)\ 
  {\rm and}\ \divv \alpha = 0,\ \int_{\Omega} \alpha = 0, \end{align}
where  $t  \in \LAT$ 
 and $c_t$ satisfies the condition  \eqref{cs-cond}.      We take $c_t=0$ on the basis vectors $t=\sqrt{\frac{2\pi }{\im\tau} }, \sqrt{\frac{2\pi }{\im\tau} } \tau$ (see \eqref{LATtau}). Then the relation  \eqref{cs-cond} gives
   \begin{align}   \label{cs-express}c_{s} =\pi n p q,\ \text{ for }\ s=\sqrt{\frac{2\pi }{\im\tau} } (p+ q\tau), p, q\in \Z, \end{align}
which we assume in what follows.

We consider \eqref{psia-eqs} on the space $\sH_{n}^2\times\HA{2}{\Div,0}$, where $\sH^s_n$ and $\HAs{\Div,0}{s}$ are the Sobolev spaces of order $s$ associated with the $L^2$-spaces 
 \[\Lpsi{2}{n}:=\{\psi\in L^2(\R^2, \C): 
 \psi$ satisfies$\ \psi ( - x)=\psi ( x)\  
 $and \eqref{gaugeperiod-psi'}$\},\] 
 \[\LA{p}{\Div,0}:=\{\al\in L^2(\R^2, \R^2)\ |\ \al \text{ satisfies$\  a ( - x)=- a ( x)\  
 $and  \eqref{alpha-per'}} 
 \}, \]
  where $\divv \al$ is understood in the distributional sense, with the inner products of $L^2(\Om, \C)$ and $L^2(\Om, \R^2)$, i.e. $\int \bar{\psi}\psi'$ and $\int \alpha\cdot \alpha'$. 
 
We define $L^n$ and $M$ on the spaces  $\Lpsi{2}{n}$ and $\LA{p}{\Div,0}$, with the domains  $\sH^{2}_{n}$ and $\HA{2}{\Div,0}$, respectively. The properties of $L^n$ where described in Proposition \ref{prop:Landau-ham-spec}. The properties  of $M$ are summarized as in the following proposition

\begin{proposition}\label{thm:M-spec}
    $M$ is a strictly positive operator on $\LA{p}{\Div,0}$ with the domain  $\HA{2}{\Div,0}$ and with purely discrete spectrum.
\end{proposition}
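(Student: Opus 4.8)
The plan is to diagonalize $M$ explicitly by Fourier series, exploiting that elements of $\LA{2}{\Div,0}$ are genuinely $\LAT$-periodic (unlike the gauge-periodic $\psi$ in Proposition \ref{prop:Landau-ham-spec}), so that $M$ becomes a constant-coefficient operator with a clean symbol. First I would record the two elementary facts. By the definition of $\Curl^*$ and integration by parts over the cell $\Omega$ (the boundary terms cancel by periodicity), $M$ is symmetric and its quadratic form is
\[ \lan M\al, \al\ran = \lan \Curl\al, \Curl\al\ran = \|\Curl\al\|_{L^2(\Omega)}^2 \ge 0 . \]
Moreover a direct computation yields the pointwise identity $\Curl^*\Curl = -\Delta + \n\divv$, so that on the divergence-free subspace $\LA{2}{\Div,0}$ the operator $M$ coincides with the componentwise Laplacian $-\Delta$. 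This already indicates that $M$ should be self-adjoint on $\HA{2}{\Div,0}$ and carry the same spectral type as $-\Delta$ on the torus.

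Next I would carry out the Fourier diagonalization. Let $\LAT^* := \{k\in\R^2 : k\cdot t\in 2\pi\Z\ \forall t\in\LAT\}$ be the dual lattice and write $\al(x) = \sum_{k\in\LAT^*}\hat\al_k e^{ik\cdot x}$. The constraint $\divv\al = 0$ becomes $k\cdot\hat\al_k = 0$, the condition $\int_{\Omega}\al = 0$ forces $\hat\al_0 = 0$, and the parity condition $\al(-x) = -\al(x)$ together with reality of $\al$ forces the $\hat\al_k$ to be purely imaginary (so that we work on a real subspace of each Fourier mode). For $k\neq 0$ the divergence-free constraint makes $\hat\al_k$ a scalar multiple of $k^\perp$, and a short computation then gives
\[ \widehat{(M\al)}_k = |k|^2 \hat\al_k , \]
in agreement with $M=-\Delta$ on $\LA{2}{\Div,0}$. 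Hence $M$ is unitarily equivalent, via the Fourier transform, to multiplication by $|k|^2$ on the sequence space indexed by $\LAT^*\setminus\{0\}$ subject to the above linear constraints, and $\al\in\HA{2}{\Div,0}$ precisely when $\sum_k |k|^4 |\hat\al_k|^2 < \infty$.

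From this the proposition follows at once. The spectrum of $M$ is contained in $\{\,|k|^2 : k\in\LAT^*\setminus\{0\}\,\}$, a set of positive reals whose only accumulation point is $+\infty$ and each of whose points has finite multiplicity; thus $M$ is self-adjoint with domain $\HA{2}{\Div,0}$ and has purely discrete spectrum. Strict positivity is immediate because the zero mode $k=0$ is excluded by $\int_{\Omega}\al = 0$, so $\inf\sigma(M) = |k_{\min}|^2 > 0$, where $k_{\min}$ is a shortest nonzero vector of $\LAT^*$.

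The main obstacle is making the reduction to $-\Delta$ rigorous at the operator level rather than merely at the level of forms: one must verify that the divergence-free constraint is genuinely preserved (so that $\n\divv$ drops out and $M$ maps $\HA{2}{\Div,0}$ into $\LA{2}{\Div,0}$) and that the natural domain obtained from the Fourier description coincides with $\HA{2}{\Div,0}$. An alternative route that sidesteps the explicit Fourier bookkeeping—which I would adopt if the lattice constraints made the mode analysis awkward—is to define $M$ as the Friedrichs extension of the closed form $\|\Curl\al\|^2$, establish the div-curl identity $\|\n\al\|^2 = \|\Curl\al\|^2 + \|\divv\al\|^2$ for periodic fields to control the full $H^1(\Omega)$-norm on the constrained space, and then invoke Rellich's compact embedding $H^1(\Omega)\hookrightarrow L^2(\Omega)$ to conclude that $M$ has compact resolvent and hence purely discrete spectrum; strict positivity again follows since any $\al$ with $\Curl\al = \divv\al = 0$ and $\int_{\Omega}\al = 0$ must vanish.
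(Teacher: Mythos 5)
Your proposal is correct, and it is substantially more than the paper provides: the paper does not actually prove Proposition \ref{thm:M-spec} at all, but only appends a two-sentence remark, namely that positivity is immediate from the definition (i.e.\ from the quadratic form identity $\lan M\al,\al\ran=\|\Curl\al\|_{L^2(\Omega)}^2\ge 0$, which is your first display) and that strict positivity comes from restricting to divergence-free, mean-zero fields; no argument whatsoever is given for self-adjointness or for discreteness of the spectrum. Your Fourier diagonalization supplies exactly these missing pieces, and the details check out: the 2D identity $\Curl^*\Curl=-\Delta+\n\divv$ is correct with the paper's conventions, $-\Delta$ preserves the constraints $\divv\al=0$, $\hat\al_0=0$ and the odd parity $\al(-x)=-\al(x)$ (so the reduction to $M=-\Delta$ on $\LA{2}{\Div,0}$ is legitimate at the operator level), and the reality/parity bookkeeping ($\hat\al_{-k}=\overline{\hat\al_k}=-\hat\al_k$, hence purely imaginary coefficients) is right. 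The Fourier route buys more than the proposition asserts: the explicit spectrum $\{|k|^2: k\in\LAT^*\setminus\{0\}\}$ with finite multiplicities and the sharp lower bound $\inf\sigma(M)=|k_{\min}|^2$. Your alternative route (Friedrichs extension of $\|\Curl\al\|^2$, the periodic div-curl identity $\|\n\al\|^2=\|\Curl\al\|^2+\|\divv\al\|^2$, Poincar\'e on mean-zero fields, and Rellich compactness to get compact resolvent) is the standard abstract argument and is presumably what the authors had in mind; it generalizes more readily but yields only qualitative discreteness rather than the explicit eigenvalues. One small point of rigor you correctly flagged and resolved: strict positivity requires not just injectivity (any periodic $\al$ with $\Curl\al=\divv\al=0$ and $\int_\Omega\al=0$ is constant, hence zero) but also the spectral gap, which follows in either route—from the excluded zero mode in the Fourier picture, or from compactness of the resolvent in the abstract one.
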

The fact that $M$ is positive follows immediately from its definition. We note that its being strictly positive is the result of
    restricting its domain to elements having the divergence and mean zero.

\begin{proposition}\label{prop:reconstr}  Assume $(\lambda, \psi, \alpha)$ is a solution of the system \eqref{psia-eqs'} satisfying \eqref{gaugeperiod-psi'}-\eqref{alpha-per'}. 
Then $\divv J (\psi, \al)=0$ and  $\lan J (\psi, \al)\ran =0$ and $(\lambda, \psi, \alpha)$ solves the system \eqref{psia-eqs}.  
   \end{proposition}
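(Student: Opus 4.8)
The plan is to read \eqref{psia-eqs'} as the system that differs from \eqref{psia-eqs} only in its second (vector) equation, where the current $J(\psi,\al)$ has been replaced by its orthogonal projection $PJ(\psi,\al)$ onto $\LA{2}{\Div,0}$. This projection is forced on us because $M=\Curl^*\Curl$ automatically ranges in the divergence-free, mean-zero fields, so $M\al = J(\psi,\al)$ can hold only if $J(\psi,\al)$ already lies in $\LA{2}{\Div,0}$. Accordingly, the whole proposition reduces to showing that, for a solution of the primed system, $J(\psi,\al)$ in fact belongs to $\LA{2}{\Div,0}$, so that $PJ=J$ and the projected equation becomes the genuine equation $M\al=J(\psi,\al)$; the first equation of \eqref{psia-eqs} is then inherited verbatim from \eqref{psia-eqs'}. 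Membership $J\in\LA{2}{\Div,0}$ is exactly the conjunction of the defining properties of that space: periodicity, $\divv J=0$, and $\lan J\ran=0$.

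For $\divv J=0$ I would argue by current conservation, using only the first (scalar) equation, which is common to both systems. First I would record that, since $\divv\al=0$ and $\divv a^n=0$, the equation $(L^n-\lambda)\psi=-h(\psi,\al)$ is exactly the first rescaled Ginzburg-Landau equation \eqref{rGLpsi}, i.e. $-\Delta_a\psi=\lambda\psi-\kappa^2|\psi|^2\psi$ with $a=a^n+\al$ (this is the splitting $-\Delta_a=L^n+2i\al\cdot\Cov{a^n}+|\al|^2$ valid on divergence-free fields). Consequently $\Delta_a\psi$ equals $\psi$ times the real scalar $-\lambda+\kappa^2|\psi|^2$, so $\Im(\bar\psi\,\Delta_a\psi)=0$. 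I would then invoke the pointwise identity $\divv\,\Im(\bar\psi\,\Cov{a}\psi)=\Im(\bar\psi\,\Delta_a\psi)$, verified by a short computation, to conclude $\divv J=0$.

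For $\lan J\ran=0$ I would use the reflection symmetry built into the function spaces. Since $\psi(-x)=\psi(x)$ and $\al(-x)=-\al(x)$, while $a^n(-x)=-a^n(x)$, the field $a=a^n+\al$ is odd; hence $\Cov{a}\psi$ is odd and $J=\Im(\bar\psi\,\Cov{a}\psi)$ is odd. As the cell $\Omega$ is chosen centered at the origin and invariant under $x\mapsto -x$, integrating the odd function $J$ gives $\int_\Omega J=0$, i.e. $\lan J\ran=0$. Finally, the gauge-covariance of $\Cov{a}\psi$ together with \eqref{gaugeperiod-psi'}--\eqref{alpha-per'} makes $J$ periodic, and Sobolev embedding places $J$ in $L^2$; thus $J\in\LA{2}{\Div,0}$ and $PJ=J$, which finishes the argument.

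The main obstacle is conceptual rather than computational: recognizing precisely that $\divv J=0$ and $\lan J\ran=0$ are the two conditions that render the projection in \eqref{psia-eqs'} redundant, and matching each to its correct source --- the scalar field equation for the former, the imposed reflection symmetry for the latter. The one genuinely new calculation, the current-conservation identity $\divv\,\Im(\bar\psi\,\Cov{a}\psi)=\Im(\bar\psi\,\Delta_a\psi)$, is routine and holds for arbitrary $a$, so it introduces no hidden hypothesis into the logic.
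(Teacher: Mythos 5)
Your proof is correct, and its overall architecture coincides with the paper's: reduce the proposition to showing $J(\psi,\al)\in\vec{\mathscr{L}}^2_{\Div,0}$ (periodic, divergence-free, mean zero), so that the projection $P'$ in \eqref{psia-eqs'} acts as the identity on $J$ and the projected equation becomes $M\al=J(\psi,\al)$; your treatment of $\lan J\ran=0$ via reflection symmetry ($\psi$ even, $a=a^n+\al$ odd, hence $J$ odd, integrated over the origin-centered cell) is exactly the paper's. Where you genuinely diverge is the proof of $\divv J=0$. The paper obtains it variationally: it differentiates the gauge-invariance identity $\cE_\lam(e^{is\chi}\psi,\al+s\nabla\chi)=\cE_\lam(\psi,\al)$ at $s=0$ for arbitrary periodic $\chi\in H^1$, integrates by parts, and uses the first equation of \eqref{psia-eqs'} to kill the $\Re\lan\cdot,i\chi\psi\ran$ term, arriving at $\lan J(\psi,\al),\nabla\chi\ran=0$ for all periodic $\chi$, i.e.\ distributional divergence-freeness. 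You instead prove the pointwise conservation law $\divv\,\Im(\bar\psi\,\Cov{a}\psi)=\Im(\bar\psi\,\Delta_a\psi)$ (which indeed holds for arbitrary $a$, as one checks from $\Delta_a=\Delta-i(\divv a)-2ia\cdot\nabla-|a|^2$) and note that the scalar equation, rewritten as $\Delta_a\psi=(\kappa^2|\psi|^2-\lambda)\psi$ using $\divv\al=0$, makes $\Delta_a\psi$ a real multiple of $\psi$, so the right side vanishes identically. The two arguments are the integral (Noether-style) and differential versions of the same current-conservation fact: the paper's weak formulation requires less pointwise regularity and no explicit identity, while yours is more elementary and self-contained, avoids the energy functional entirely, and incidentally sidesteps a small inconsistency in the paper's display \eqref{E-deriv} (which carries $\kappa^2(|\psi|^2-1)\psi$ where the rescaled equation has $\kappa^2|\psi|^2\psi-\lambda\psi$). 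Your added justification of periodicity of $J$ from the gauge-covariance of $\Cov{a}\psi$ under \eqref{gaugeperiod-psi'}--\eqref{alpha-per'} is a detail the paper leaves implicit.
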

\begin{proof} Let $P'$ be the orthogonal projection onto the divergence free, mean zero vector fields ($P'=\frac{1}{-\Delta}\curl^*\curl$). We introduce the new system
    \begin{equation} \label{psia-eqs'} 
        (L^n - \lambda)\psi  + h(\psi, \al)=0,\ \quad
        M \al  -P' J(\psi, \al) =0,
    \end{equation}
where we left the first equation unchanged and in the second equation we introduced the projection $P'$ . 

We rewrite \eqref{psia-eqs'} as a single equation
\begin{equation}\label{Feq0} F(\lambda, \psi, \al) = 0,
\end{equation}
 where the map
$F : \R \times \sH^{2}_{n} \times \HAs{\Div,0}{2}\to \Lpsi{2}{n}\times \LA{p}{\Div,0}$ is defined by the l.h.s. of \eqref{psia-eqs'} as 
\begin{equation}\label{F}
    F(\lambda, u) = A_\lam u +f(u) .
\end{equation}
Here $u:=(\psi, \al),\ A_\lam:=\diag (L^n - \lambda, M)$ and 
\begin{equation} \label{f}
 f(u):= ( h(u), - P' J(u) ). 
 \end{equation}

For a map $F(\lambda, u),\ u=(\psi, \al)$, we denote by $\p_\psi F(\lambda, u)/\p_u F(\lambda, u)$ its G\^ateaux derivative in $\psi/u$. 
Furthermore, we use the obvious notation $F=(F_1, F_2)$. For $f=(f_1, f_2)$, we introduce the gauge transformation as $T_\del f= (e^{i\delta} f_1, f_2)$. The following proposition lists some properties of $F$.
\begin{proposition}\label{prop:F-propert}
    \hfill
    \begin{enumerate}[(a)]
    \item $F$ is analytic as a map between real Banach spaces,
    \item for all $\lambda$, $F(\lambda, 0) = 0$,
    \item for all $\lambda$, $\p_u F(\lambda, 0)=A_\lam, $ 
    \item for all $\delta \in \R$, $F(\lambda, T_{\delta}u) = T_\del  F(\lambda, u)$.
	\item \label{psiF-real} for all $u$ (resp. $\psi$), $\langle u, F(\lambda, u) \rangle \in \R$ (resp. $\langle \psi, F_1(\lambda, u) \rangle \in \R$).
    \end{enumerate}
\end{proposition}
\begin{proof}
    The first property follows from the definition of $F$. 
    (b) through (d) are straightforward calculations. For (e), since $\langle u, F \rangle =\langle \psi, F_1 \rangle +\langle \alpha, F_2 \rangle$ and $\langle \alpha, F_2(\lambda, u) \rangle$ is real, the statements $\langle u, F(\lambda, u) \rangle \in \R$ and $\langle \psi, F_1(\lambda, u) \rangle \in \R$ are equivalent. Now, we calculate that
   	\begin{align*}
		\langle \psi, F_1(\lambda, \psi, \al) \rangle
		&= \langle \psi, (L^n - \lambda)\psi \rangle + 2i\int_{\Omega} \bar{\psi}\al \cdot\nabla\psi\\
			&+ 2\int_{\Omega} (\alpha \cdot a^n)|\psi|^2
			+ \int_{\Omega} |\al |^2 |\psi|^2
			+ \kappa^2 \int_{\Omega} |\psi|^4.
	\end{align*}
	The final three terms are clearly real and so is the first because $L^n - \lambda$ is self-adjoint. For the second term we integrate by parts and  use the fact that the boundary terms vanish due to the periodicity of the integrand to see that
	\begin{equation*}
\Im 2i\int_{\Omega}  \phi \alpha \cdot\nabla\bar{\psi}			= \int_{\Omega}\al \cdot (\bar{\psi} \nabla\psi + \psi  \nabla\bar{\psi} )
			= -\int_{\Omega} (\divv \al) |\psi|^2=0,			\end{equation*}
	where we have used that $\Div \al = 0$. Thus this term is also real and (e) is established.
\end{proof}
 We return to the proof of Proposition \ref{prop:reconstr}. Assume 
 $\chi\in H^1_{\rm loc}$ and is $\lat-$periodic (we say, $\chi \in H^1_{\rm per} $). Following \cite{TS}, we differentiate the equation $\cE_\lam(e^{i s\chi}\psi, \al+s\nabla\chi)=\cE_\lam(\psi, \al) $, w.r.to $s$ at $s=0$, use that $\curl\n \chi=0$  and integrate by parts, to obtain
\begin{align} 
 \Re\lan -\Delta_{a^{n} + \alpha}\psi  + \kappa^2(|\psi|^2 - 1)\psi, & i \chi\psi\ran\notag \\ 
\label{E-deriv}& + \lan 
J (\psi, \al), \n \chi\ran=0.\end{align} 
\DETAILS{\[ \p_\psi \cE_\lam(\psi, \alpha)i\chi\psi+\p_\alpha \cE_\lam(\psi, \alpha)\nabla\chi=0.\] Here $\p_\psi \cE_\lam(\psi, \alpha)$ and $\p_\alpha \cE_\lam(\psi, \alpha)$ are the G\^ateaux derivatives of $\cE_\lam(\psi, \alpha)$ w.r.to $\psi$ and $\alpha$. Since the first equation in \eqref{psia-eqs'} is equivalent to $ \p_\psi \cE_\lam(\psi, \al)=0$ and since  $\p_\alpha \cE_\lam(\psi, \alpha)\nabla\chi= 
  \int_{\Omega^\tau} (M \al -J (\psi, \al))\cdot\nabla\chi,$ we find that \[  \int_{\Omega^\tau} (M\al -J (\psi, \al))\cdot\nabla\chi=0.\]
Taking $\chi \in H^1(\Omega, \R)$ periodic and integrating by parts, we derive that $\int_{\Omega} \divv J (\psi, \al)\chi=0$. Since the last equation holds for any periodic $\chi \in H^1(\Omega, \R)$, we conclude that $\divv J (\psi, \al)=0$.}
 (Due to conditions \eqref{gaugeperiod-psi'} - \eqref{alpha-per'} and the $\lat-$periodicity of $\chi$, 
 there are no boundary terms.) 
This, together with the first equation in \eqref{psia-eqs'},  
implies \begin{align}\label{J-relat} \lan J (\psi, \al), \n\chi\ran=0.\end{align} 
 Since the last equation holds for any  $\chi \in H^1_{\rm per} $, we conclude that $\divv J (\psi, \al)=0$.

Furthermore, since for our class of solutions $\psi$ is even and $a$, odd, we conclude that $J(\psi, a)$ is odd under reflections and therefore $\langle J(\psi, a) \rangle = 0$. 
\end{proof} 
In Sections \ref{sec:reduction} - \ref{sec:bifurc-point-sym} 
 we solve the system \eqref{psia-eqs'}, subject to the conditions \eqref{gaugeperiod-psi'}-\eqref{alpha-per'}. 
 
\section{Reduction to a finite-dimensional problem} \label{sec:reduction}

In this section we reduce the problem of solving  the equation $F(\lambda, u) = 0$, were $F$ is given in \eqref{F} and $u:=(\psi, \al) $, to a finite dimensional problem. We address the latter in the next section.
We use the standard method of Lyapunov-Schmidt reduction.

We do the reduction in the generality we need later on.  Let  $X=X' \times X''$ and $Y=Y' \times Y''$ be closed subspaces of  $\sH^{2}_{n}\times \HA{2}{\Div,0}$ and $\Lpsi{2}{n}\times \LA{p}{\Div,0}$, respectively, s.t.  
 \begin{align}\label{XY-cond} X\subset Y,\ \text{densely,\   and }\ F : \R \times X \to Y \text{ and is } C^2.\end{align} 

 Recall that the operator $A_{\lam}:=\diag (L^n - \lam, M)$ is introduced after the equation \eqref{F}. Since $A_{\lam}= dF(\lambda, 0)$, it maps $X$ into $Y$. We 
 let $K = \Null_{X } A_{n}\subset X$. 
    
       We let $P$ be the orthogonal projection in $Y$ onto $K$ and let $\bar P := I - P$.  Since $ n$ is an isolated eigenvalue of $A_{ n}$, $P$ can be explicitly given as the  Riesz projection,
    \begin{equation}
        P := -\frac{1}{2\pi i} \oint_\gamma (A_{ n} - z)^{-1} \,dz,
    \end{equation}
    where $\gamma \subseteq \C$ is a contour around $0$ that contains no other points of the spectrum of $A_{ n}$. 
    
    Writing $u=v +w$, where $v = P u$ and $w = \bar P u$, we see that the equation $F(\lambda,u) = 0$ is therefore equivalent to the pair of equations
    \begin{align}
        \label{LS:eqn1} &P F(\lambda, v + w) = 0, \\
        \label{LS:eqn2} &\bar P F(\lambda, v + w) = 0.
    \end{align}
   We will now solve \eqref{LS:eqn2} for $w = \bar P u$ in terms of $\lambda$ and $v = P u$. 
   \begin{lemma}\label{w-est} There is a neighbourhood, $U\subset \R \times K$, of $(n, 0)$,
    such that for any $(\lambda, v)$ in that neighbourhood, Eq \eqref{LS:eqn2} has a unique solution $w = w(\lambda, v)$. This solution $w(\lambda, v)=(w_1, w_2)$ satisfies 
  \begin{align}\label{w-prop1}
&  w(\lambda, v)  \ \mbox{real-analytic in}\    (\lambda, v),\\ 
   \label{w-prop2}
    &   \|\p_\lam^m w_i\|=O(\|v\|^{4-i}),\ i=1, 2,\ m=0, 1, \end{align}
where the norms are in the space $\sH^{2}_{n}$.    \end{lemma}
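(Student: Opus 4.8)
The plan is to solve \eqref{LS:eqn2} by the analytic implicit function theorem and then to extract the decay rates \eqref{w-prop2} by a bootstrap that exploits the algebraic structure of the nonlinearity. First I would set $G(\lam, v, w) := \bar P F(\lam, v + w)$, regarded as a map from a neighbourhood of $(n, 0, 0)$ in $\R \times K \times \Ran\bar P$ into $\Ran\bar P$. By Proposition \ref{prop:F-propert}(b) we have $F(n,0)=0$, hence $G(n,0,0)=0$, and by Proposition \ref{prop:F-propert}(a) the map $G$ is real-analytic. Its derivative in $w$ at the base point is, using Proposition \ref{prop:F-propert}(c), $\p_w G(n,0,0) = \bar P A_{n}\big|_{\Ran\bar P}$.

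The key preliminary step is to check that $\bar P A_{n}\big|_{\Ran\bar P}$ is boundedly invertible from $\Ran\bar P\cap X$ onto $\Ran\bar P\cap Y$. Since $A_{n} = \diag(L^n - n, M)$ and, by Propositions \ref{prop:Landau-ham-spec} and \ref{thm:M-spec}, $\Null(L^n - n)$ is $n$-dimensional while $M$ is strictly positive, we get $K = \Null A_{n} = \Null(L^n - n)\times\{0\}$ and $0$ is an isolated point of $\sigma(A_{n})$ with gap $\min(2n, \inf\sigma(M)) > 0$. Hence $\bar P A_{n}\big|_{\Ran\bar P}$ has bounded inverse, and by continuity so does $\bar P A_{\lam}\big|_{\Ran\bar P}$, uniformly for $\lam$ near $n$. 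The analytic implicit function theorem then produces a neighbourhood $U$ of $(n,0)$ and a unique real-analytic $w = w(\lam, v)$ solving \eqref{LS:eqn2}, which is \eqref{w-prop1}; moreover $w(\lam, 0) = 0$ for all $\lam$, since $w=0$ solves \eqref{LS:eqn2} at $v=0$ and the solution is unique.

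For \eqref{w-prop2} I would project \eqref{LS:eqn2} onto the two factors. Writing $v = (v_1, 0)$ with $v_1 \in \Null(L^n - n)$, $w = (w_1, w_2)$, and letting $\bar P_1$ be the projection onto $\Null(L^n-n)^\perp$, the system \eqref{psia-eqs'} reduces to
\[(L^n - \lam)w_1 = -\bar P_1\, h(v_1 + w_1, w_2), \qquad M w_2 = P' J(v_1 + w_1, w_2),\]
where $(L^n-\lam)\big|_{\Ran\bar P_1}$ and $M$ are boundedly invertible uniformly in $\lam$ near $n$, so $\|w_1\| \lesssim \|h\|$ and $\|w_2\| \lesssim \|J\|$ in the relevant $\sH^2_n$ norms (the products are controlled because $H^2$ is a Banach algebra in two dimensions). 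The decisive structural point is that $K$ lies entirely in the $\psi$-component, so the magnetic part $w_2$ has no $O(\|v\|)$ contribution. Starting from $w = O(\|v\|)$ (valid since $w(\lam,0)=0$ and $w$ is $C^1$) and inserting $\psi = v_1 + w_1$, $\alpha = w_2$ into $h(\psi,\alpha)= 2i\alpha\cdot\Cov{a^n}\psi + |\alpha|^2\psi + \kappa^2|\psi|^2\psi$ and $J(\psi,\alpha) = \Im(\bar\psi\nabla\psi) - (a^n+\alpha)|\psi|^2$, one finds $\|J\| = O(\|v\|^2)$, hence $\|w_2\| = O(\|v\|^2)$; feeding this back, the only a priori quadratic term $\alpha\cdot\Cov{a^n}\psi$ is in fact cubic because $\alpha = w_2 = O(\|v\|^2)$, so $\|h\| = O(\|v\|^3)$ and thus $\|w_1\| = O(\|v\|^3)$. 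This settles the $m=0$ cases.

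For the $\lam$-derivatives the cleanest route is to observe that the implicit function theorem actually yields $w$ as an analytic function of $\lam$ on a complex neighbourhood of $n$, on which the uniform bounds $\|w_i\| = O(\|v\|^{4-i})$ just established persist; a Cauchy integral estimate in $\lam$ then upgrades these to $\|\p_\lam w_i\| = O(\|v\|^{4-i})$, giving the $m=1$ cases. Alternatively one may differentiate the reduced system in $\lam$ and rerun the bootstrap on the resulting linear system for $(\p_\lam w_1, \p_\lam w_2)$. I expect the main obstacle to be precisely this order bookkeeping together with the uniformity in $\lam$: one must verify that all resolvent bounds and Banach-algebra product estimates hold uniformly for $\lam$ in a fixed neighbourhood of $n$, and track carefully why the magnetic component is one order smaller than naively expected — the feature responsible for the asymmetric exponents $4-i$ in \eqref{w-prop2}.
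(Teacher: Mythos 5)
Your proposal is correct and follows essentially the same route as the paper: the implicit function theorem applied to $G(\lambda, v, w) = \bar P F(\lambda, v+w)$ with invertibility of $\bar P A_\lambda \bar P\big|_{\Ran \bar P}$ uniformly for $\lambda$ near $n$, analyticity inherited from $F$, and then order counting via the diagonal structure of $A_\lambda$ together with product estimates on $h$ and $J$. Your explicit bootstrap (first $\|w_2\| = O(\|v\|^2)$ from the quadratic bound on $J$, then feeding $\alpha = w_2$ back into the cross term $2i\alpha\cdot\nabla_{a^n}\psi$ to conclude $\|h\| = O(\|v\|^3)$, exploiting that $K$ lies entirely in the $\psi$-component) is in fact a more careful rendering of the paper's terse estimate $\| h(u) \| \ls \| u \|^3$, which taken literally overlooks that this cross term is only quadratic in $u$ jointly.
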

    \begin{proof} We introduce the map
    $G : \R \times K \times \bar X \to \bar Y$, where  $\bar X := \bar P X=X\ominus K$ and $\bar Y :=\bar P Y= Y \ominus K$, defined by \[G(\lambda, v, w) =\bar P F(\lambda, v + w).\] 
 It has the following properties (a) $G$ is $C^2$; (b) $G(\lambda, 0, 0)=0$ $\forall\lambda$; (c) $d_w G(\lambda, 0, 0)$  is invertible for $\lambda =n$.      Applying the Implicit Function Theorem
    to $G=0$, we obtain a function $w : \R \times K \to \bar X$, defined on a neighbourhood of $(n, 0)$,
    such that $w = w(\lambda, v)$ is a unique solution to $G(\lambda, v, w) = 0$, for $(\lambda, v)$ in that neighbourhood. This proves the first statement.
    
    By the implicit function theorem and the analyticity of $F$, the  solution has the  property \eqref{w-prop1}.

    By  \eqref{F} and the fact that product of  $\sH^{2}_{n},  \HA{2}{\Div,0}$  functions is again a  $\sH^{2}_{n},  \HA{2}{\Div,0}$ function (and the norms are bounded correspondingly), implies that  
    \[\| h(u) \|_{H^2} \ls \| u \|_{H^2}^3$\ and\  $\| J(u) \|_{H^2} \ls \| u \|_{H^2}^2,\] where $h$ and $J$ defined after the equation \eqref{Ln}.  
    Using the definition \eqref{F}, we can rewrite  \eqref{LS:eqn2} as 
\begin{equation}\label{w-eq}
  A_\lam w = - \bar P f(\lambda, u).
\end{equation}
Since by Proposition \ref{prop:Landau-ham-spec}, $A_\lam^\perp:= \bar P A_\lam  \bar P\big |_{\Ran  \bar P}$ is invertible for $\lam$ close to $n k$, with the uniformly bounded inverse and since $A_\lam$ is diagonal and $f$ is of the form \eqref{f}, we conclude that   $\|w_1\|\ls \| h(u) \|_{H^2} \ls \| u \|_{H^2}^3$ and $\|w_2\|\ls \| J(u) \|_{H^2} \ls \| u \|_{H^2}^2$. Recalling that $u=v+w$, this gives the second relation in \eqref{w-prop2}. The first relation in \eqref{w-prop2} is proven similarly.   \end{proof}

    We substitute the solution $w = w(\lambda, v)$ into \eqref{LS:eqn1} and see that the latter equation 
     in a neighbourhood of $(n, 0)$ is equivalent to the equation  (the \emph{bifurcation equation})
           \begin{equation}  \label{bif-eqn}
        \gamma(\lambda, v):= PF(\lambda, v + w(\lambda, v)) = 0.
    \end{equation}
    Note that 
     $\gamma : \R \times K \to K$. 
     We show that $w$ and $\gamma$ inherit the symmetry of the original equation:
    \begin{lemma}\label{lem:w-gam-gaugeinv}
        For every $\delta \in \R$, $w(\lambda, e^{i\delta}v) = T_\del w(\lambda, v)$ and $\gamma(\lambda, e^{i\delta}v) = e^{i\delta} \gamma(\lambda, v)$.
    \end{lemma}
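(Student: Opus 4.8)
The plan is to exploit the gauge-equivariance of $F$ recorded in Proposition \ref{prop:F-propert}(d), $F(\lam, T_\del u) = T_\del F(\lam, u)$, together with the fact that the constant-phase action $T_\del$ commutes with the entire Lyapunov--Schmidt apparatus. First I would observe that $T_\del$ commutes with $A_\lam = \diag(L^n - \lam, M)$: since $T_\del$ multiplies the $\psi$-slot by $e^{i\del}$ and fixes the $\al$-slot while $A_\lam$ is block-diagonal, the two commute on each block. Hence $T_\del$ commutes with the resolvent $(A_n - z)^{-1}$, with the Riesz projection $P$, and with $\bar P = I - P$; being a norm-preserving real-linear map of $X$ and $Y$, it therefore preserves the splittings $X = K \oplus \bar X$ and $Y = K \oplus \bar Y$. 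I would also record that, because $M$ is strictly positive (Proposition \ref{thm:M-spec}), $K = \Null_X A_n$ consists exactly of pairs $(\psi_0, 0)$ with $\psi_0 \in \Null(L^n - n)$; on such pairs $T_\del$ acts as scalar multiplication by $e^{i\del}$, so $T_\del v = e^{i\del} v$ for $v \in K$. This reconciles the two notations appearing in the statement (and explains why the symbol $e^{i\del}v$ is meaningful only on $K$, where the $\al$-component vanishes).

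For the $w$-equivariance I would check that $T_\del w(\lam, v)$ solves the equation defining $w(\lam, T_\del v)$. Applying $\bar P F(\lam, \cdot)$ to $T_\del v + T_\del w(\lam, v) = T_\del\bigl(v + w(\lam, v)\bigr)$ and using (d) with $[\bar P, T_\del] = 0$ gives $\bar P F(\lam, T_\del(v+w)) = T_\del\, \bar P F(\lam, v + w) = 0$, since $w = w(\lam, v)$ solves \eqref{LS:eqn2}. As $T_\del w(\lam, v) \in \bar X$ and $T_\del$ is an isometry, the uniqueness clause of Lemma \ref{w-est} forces $w(\lam, T_\del v) = T_\del w(\lam, v)$; rewriting $T_\del v = e^{i\del}v$ yields the first claim.

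The $\gamma$-equivariance then follows by direct substitution: $\gamma(\lam, T_\del v) = P F(\lam, T_\del v + w(\lam, T_\del v)) = P F(\lam, T_\del(v + w(\lam, v)))$ by the $w$-identity just proved, and this equals $P\, T_\del F(\lam, v + w) = T_\del\, P F(\lam, v+w) = T_\del \gamma(\lam, v)$ by (d) and $[P, T_\del] = 0$. Since $\gamma(\lam, v) \in K$, where $T_\del$ acts as $e^{i\del}$, we conclude $\gamma(\lam, e^{i\del} v) = e^{i\del}\gamma(\lam, v)$.

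The only point demanding care — and the mild obstacle of the argument — is that uniqueness in Lemma \ref{w-est} holds only on a neighbourhood $U$ of $(n,0)$, so I must ensure $(\lam, T_\del v) \in U$ whenever $(\lam, v) \in U$. This is automatic once $U$ is taken to be a product of balls $\{|\lam - n| < \e\} \times \{\|v\| < \e\}$, because $T_\del$ preserves the $\sH^2_n$-norm; shrinking the implicit-function neighbourhood to this form makes it $T_\del$-invariant and the uniqueness statement applies verbatim. Everything else is formal manipulation with the commuting projections and the equivariance identity.
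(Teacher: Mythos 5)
Your proof is correct and follows essentially the same route as the paper's: the gauge equivariance of $F$ from Proposition~\ref{prop:F-propert}(d), commutation of $T_\delta$ with $P$ and $\bar P$, uniqueness of $w$ from the implicit function theorem, and then direct substitution into $\gamma$ together with the fact that $P$ acts only on the $\psi$-component. Your extra checks --- that $T_\delta$ commutes with $A_\lambda$ and hence with the Riesz projection, that $T_\delta v = e^{i\delta}v$ on $K$, and that the uniqueness neighbourhood can be chosen $T_\delta$-invariant since $T_\delta$ is an isometry --- are details the paper leaves implicit but do not change the argument.
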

    \begin{proof}
        We first check that $w(\lambda, e^{i\delta}v) = T_\del w(\lambda, v)$. We note that by definition of $w$,
         \begin{align*}G(\lambda,  e^{i\delta} v, w(\lambda, e^{i\delta} v)) = 0, \end{align*} but by the symmetry of $F$, we also have
        $G(\lambda,  e^{i\delta} v,  e^{i\delta} w(\lambda, v)) = T_\del  G(\lambda,v, w(\lambda,v)) = 0$. The uniqueness of $w$
        then implies that $w(\lambda, e^{i\delta} v) = T_\del w(\lambda, v)$.
        Using that $e^{i\delta} v=T_\del v$, we can now verify that
        \begin{align*}
            \gamma(\lambda, e^{i\delta} v) = PF(\lambda, e^{i\delta} v + w(\lambda,  e^{i\delta} v))&= PF(\lambda, T_\del (v + w(\lambda,  v)))\\
                &= PT_\del F(\lambda, v + w(\lambda,  v)). 
        \end{align*}
 Since  
$ P$ is of the form $P=P_1\oplus 0$, where $P_1$ acts  on the first component, we have $PT_\del F(\lambda, v + w(\lambda,  v))= e^{i\delta} PF(\lambda, v + w(\lambda, v) ) = e^{i\delta}\gamma(\lambda,v)$, which implies the second statement.   \end{proof}
  Thus we have shown the following
  \begin{corollary}\label{cor:reduct-fd}
  In a neighbourhood of $(n, 0)$ in $\R \times X$, $(\lambda, u),$ where $ u=(\psi, \al),$ solves Eqs \eqref{rGL} or \eqref{psia-eqs} 
    if and only if $(\lambda, v)$, with $v = P u$, solves
    \eqref{bif-eqn}. Moreover, the solution $u$ of \eqref{psia-eqs} can be reconstructed from the solution $v$ of \eqref{bif-eqn} according to the formula
     \begin{equation} \label{uvw}
     u =v+ w(\lambda, v).
    \end{equation} 
    \end{corollary}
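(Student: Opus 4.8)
The plan is to assemble the two ingredients already in place: the Lyapunov--Schmidt splitting of $F(\lambda,u)=0$ effected by Lemma \ref{w-est}, and the equivalence of the projected system \eqref{psia-eqs'} with the genuine Ginzburg--Landau system \eqref{psia-eqs} supplied by Proposition \ref{prop:reconstr}. The only genuinely new point is to verify that passing through the auxiliary projection $P'$ neither creates nor destroys solutions in the relevant neighbourhood; everything else is bookkeeping.

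First I would check that, for $(\lambda,u)$ in a neighbourhood of $(n,0)$, the equation $F(\lambda,u)=0$ (i.e. \eqref{psia-eqs'}) is equivalent to \eqref{psia-eqs}. The first components of the two systems coincide, so only the $\al$-equation is at issue. If $(\lambda,u)$ solves \eqref{psia-eqs}, then $M\al = J(\psi,\al)$; since $M=\Curl^*\Curl$ maps into the divergence-free, mean-zero fields, the left-hand side lies in $\Ran P'$, whence $P'J(\psi,\al)=J(\psi,\al)$ and the projected equation $M\al = P'J(\psi,\al)$ holds, i.e. $F(\lambda,u)=0$. Conversely, Proposition \ref{prop:reconstr} shows that any solution of \eqref{psia-eqs'} satisfies $\divv J(\psi,\al)=0$ and $\lan J(\psi,\al)\ran=0$, so again $P'J(\psi,\al)=J(\psi,\al)$ and $M\al=P'J(\psi,\al)=J(\psi,\al)$, recovering \eqref{psia-eqs}. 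Hence the two systems have exactly the same solutions near $(n,0)$, and by the rescaling equivalence (A) of Section \ref{sec:rescaling} these coincide with the solutions of \eqref{rGL}.

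With this in hand the reduction is immediate. Writing $u=v+w$ with $v=Pu\in K$ and $w=\bar P u$, and using that $P$ and $\bar P=I-P$ are complementary orthogonal projections, $F(\lambda,u)=0$ is equivalent to the pair \eqref{LS:eqn1}--\eqref{LS:eqn2}. By Lemma \ref{w-est}, for $(\lambda,v)$ in a neighbourhood of $(n,0)$ the equation \eqref{LS:eqn2} has the unique small solution $w=w(\lambda,v)$; in particular the $\bar P$-component of any solution $u$ of $F=0$ lying over such $(\lambda,v)$ must equal $w(\lambda,v)$. Substituting this into \eqref{LS:eqn1} turns it into the bifurcation equation $\gamma(\lambda,v)=PF(\lambda,v+w(\lambda,v))=0$ of \eqref{bif-eqn}. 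Thus $(\lambda,u)$ solves $F=0$ near $(n,0)$ if and only if $v=Pu$ solves \eqref{bif-eqn} and $u=v+w(\lambda,v)$, which is the reconstruction formula \eqref{uvw}. Combining this with the previous paragraph yields the stated equivalence with \eqref{rGL} and \eqref{psia-eqs}.

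The step requiring the most care is the uniqueness clause: to know that a solution $u$ of $F=0$ is genuinely captured by its projection $v=Pu$, one must ensure that its complementary part $\bar P u$ lies in the range of validity of Lemma \ref{w-est} and therefore coincides with $w(\lambda,v)$. This forces the neighbourhood in the corollary to be taken inside the one produced by the implicit function theorem in Lemma \ref{w-est}, where $w(\lambda,v)$ is the \emph{only} small solution of \eqref{LS:eqn2}; without shrinking in this way one could in principle miss solutions whose $\bar P$-part is large. The equivalence of \eqref{psia-eqs} and \eqref{psia-eqs'} is the other point to get right, but, as shown above, it reduces to the elementary fact that $\Ran M \subseteq \Ran P'$ together with Proposition \ref{prop:reconstr}; neither presents a real difficulty.
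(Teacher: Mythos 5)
Your proposal is correct and takes essentially the same route as the paper: there the corollary is simply the summary of the Lyapunov--Schmidt splitting $u = v + w$, the unique solvability of \eqref{LS:eqn2} for $w = w(\lambda,v)$ via the Implicit Function Theorem in Lemma \ref{w-est}, and the substitution into \eqref{LS:eqn1} producing the bifurcation equation \eqref{bif-eqn}, combined with Proposition \ref{prop:reconstr} for passing between the projected system \eqref{psia-eqs'} and the original system \eqref{psia-eqs}. Your only additions---the explicit forward-direction check that $\Ran M$ consists of divergence-free, mean-zero fields (so $P'J = J$ automatically for solutions of \eqref{psia-eqs}) and the remark that the neighbourhood must lie inside the one furnished by Lemma \ref{w-est} so that $\bar P u$ is forced to equal $w(\lambda,v)$---are points the paper leaves implicit, and you supply them correctly.
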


Solving the bifurcation equation \eqref{bif-eqn} is a subtle problem. 
We do this in the next section assuming  $\dim_\C\Null_{X'} (L^{n}-n)=1$.

\section{Existence result assuming $\dim_\C\Null_{X'} (L^{n}-n)=1$} 
 \label{sec:bifurc-dimK=1} 

The main result of this section is the following theorem which gives a general, but conditional result. 

\begin{theorem} \label{thm:bif-thm-Kdim1}
Assume (i) $\LAT$ satisfies \eqref{LAT-cond}, (ii) \eqref{XY-cond} holds and (iii)  
\begin{align}            \label{Null-dim1-cond}  \dim_{\C}\Null_{X' } (L^{n}-n)=1. 
\end{align}   Then, 
for every $\tau$, 
    there exist $\epsilon > 0$ and a branch, $ (\lambda_s, \psi_s, \alpha_s)$, $s \in [0, \sqrt{\e})$, 
    of nontrivial solutions of  the rescaled Ginzburg-Landau equations \eqref{rGL}, unique modulo the global gauge symmetry  (apart from the trivial solution $(n, 0, a^n)$) in a sufficiently small neighbourhood of $(n, 0, a^n)$ in $\R \times X$, 
    and such that
    \begin{equation}\label{s-expansions}
    \begin{cases}
        \lambda_s =  n + g_\lambda(s^2), \\
        \psi_s = s\psi_0 + g_\psi(s^3), \\
        a_s = a^n + g_a(s^2),
    \end{cases}
    \end{equation}
    where $\psi_0$ is the solution of the problem \eqref{lin-probl} - \eqref{gaugeperiod-psi}, with $\lam=n $, 
   (normalized as $\lan|\psi_0|^2\ran =1$), 
    $ g_\psi$ is orthogonal to $\Null(L^n -  n)$,  $g_\lambda : [0,\epsilon) \to \R$, $g_\psi : [0,\epsilon) \to \sH^{2}_{n}$, and $g_\al : [0,\epsilon) \to  \HA{2}{\Div,0}$ 
     are real-analytic functions such that $g_\lambda(0) = 0$, $g_\psi(0) = 0$, and $g_\al(0) = 0$.  
    \end{theorem}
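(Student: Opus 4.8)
The plan is to solve the bifurcation equation \eqref{bif-eqn} using the Lyapunov--Schmidt reduction already established in Corollary \ref{cor:reduct-fd}, together with the gauge equivariance of Lemma \ref{lem:w-gam-gaugeinv} and the reality identity of Proposition \ref{prop:F-propert}(\ref{psiF-real}). First I would pin down the kernel $K=\Null_X A_n$. Since $M$ is strictly positive by Proposition \ref{thm:M-spec}, its kernel is trivial, so $K=\Null_{X'}(L^n-n)\times\{0\}$, which by hypothesis \eqref{Null-dim1-cond} is one complex--dimensional, spanned by $(\psi_0,0)$ with $\psi_0$ the normalized solution of \eqref{lin-probl}--\eqref{gaugeperiod-psi} at $\lam=n$. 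Writing a general $v\in K$ as $v=s\,e^{i\delta}\psi_0$ and invoking $\gamma(\lam,e^{i\delta}v)=e^{i\delta}\gamma(\lam,v)$, it suffices to solve \eqref{bif-eqn} for $v=s\psi_0$ with $s\ge 0$ real; all remaining solutions are recovered by the global gauge symmetry, which simultaneously yields the uniqueness-modulo-gauge claim.

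Next I would reduce to a scalar \emph{real} equation. On any solution $w=w(\lam,v)$ of \eqref{LS:eqn2} one has $\bar P F(\lam,u)=0$ with $u=v+w$; since $\bar P=I-P$ and $P=P_1\oplus 0$ (as in the proof of Lemma \ref{lem:w-gam-gaugeinv}), this forces $F_2(\lam,u)=0$ and $F_1(\lam,u)\in\Null(L^n-n)=\C\psi_0$, so that $\gamma(\lam,s\psi_0)=\zeta(\lam,s)\,\psi_0$ for a single complex coefficient $\zeta(\lam,s)$. To see that $\zeta$ is real for real $s$, write $F_1(\lam,u)=\mu\psi_0$; since $w\in\bar X=\bar P X$ we have $w_1\perp\psi_0$, and hence $\lan s\psi_0+w_1,F_1(\lam,u)\ran=\mu\,s\,\|\psi_0\|^2$, which is real by Proposition \ref{prop:F-propert}(\ref{psiF-real}). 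Thus $\mu\in\R$ and $\zeta(\lam,s)\in\R$, and the bifurcation equation is equivalent to the real, real-analytic scalar equation $\zeta(\lam,s)=0$, with $\zeta(\lam,0)=0$ for all $\lam$ (from $F(\lam,0)=0$) and $\zeta$ odd in $s$ (take $\delta=\pi$ in the equivariance).

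I would then solve $\zeta(\lam,s)=0$ by splitting off the trivial branch. Oddness and analyticity give $\zeta(\lam,s)=s\,\Xi(\lam,s^2)$ with $\Xi$ real-analytic, so nontrivial solutions satisfy $\Xi(\lam,s^2)=0$. Differentiating \eqref{bif-eqn} in $s$ at $s=0$ and using $\p_v w(\lam,0)=0$ from the order estimates \eqref{w-prop2}, I get $\p_s\gamma(\lam,s\psi_0)\big|_{s=0}=P A_\lam\psi_0=(n-\lam)\psi_0$, whence $\Xi(\lam,0)=n-\lam$ and $\p_\lam\Xi(n,0)=-1\neq0$. The Implicit Function Theorem applied to $\Xi(\lam,t)=0$ near $(n,0)$ produces a real-analytic $\lam=\lam(t)$ with $\lam(0)=n$; setting $t=s^2$ gives $\lam_s=n+g_\lam(s^2)$. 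Reconstructing $u=v+w$ through \eqref{uvw} with $v=s\psi_0$ and $w=w(\lam_s,s\psi_0)=(w_1,w_2)$, the estimates \eqref{w-prop2} give $\psi_s=s\psi_0+w_1$ with $w_1=O(s^3)$ and $a_s=a^n+w_2$ with $w_2=O(s^2)$; the parities $w_1(\lam,-v)=-w_1(\lam,v)$ and $w_2(\lam,-v)=w_2(\lam,v)$ from Lemma \ref{lem:w-gam-gaugeinv} cast these in the form \eqref{s-expansions}, and the real-analyticity of $g_\lam,g_\psi,g_a$ follows from \eqref{w-prop1} and the IFT.

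The main obstacle is the second step: showing that the one complex--dimensional reduced equation is genuinely a single \emph{real} scalar equation. This is precisely where the structure of the problem is used—strict positivity of $M$ to eliminate the $\al$-kernel, gauge equivariance to quotient out the phase, and the reality identity of Proposition \ref{prop:F-propert}(\ref{psiF-real}) combined with the orthogonality $w_1\perp\psi_0$ to force the scalar coefficient to be real. Once the equation is scalar and real, the remainder is the standard Crandall--Rabinowitz bifurcation argument, with the transversality $\p_\lam\Xi(n,0)\neq0$ read off directly from the linearization.
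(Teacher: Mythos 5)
Your proposal is correct and takes essentially the same route as the paper's proof: the same Lyapunov--Schmidt kernel identification $K=\Null_{X'}(L^n-n)\times\{0\}$ via Proposition \ref{thm:M-spec}, reduction to real $s$ via the gauge equivariance of Lemma \ref{lem:w-gam-gaugeinv}, reality of the scalar bifurcation function via Proposition \ref{prop:F-propert}(\ref{psiF-real}) together with the orthogonality of $w$ to the kernel, factoring out the trivial branch, and the Implicit Function Theorem with the transversality $\p_\lam\gamma_1(n,0)=-\|\psi_0\|^2\neq 0$. Your variants --- deducing reality from $F(\lam,u)\in K$ on the reduced manifold rather than from the paper's identity $\lan\psi_0,F_1\ran=s^{-1}\lan sv_0+w,F\ran$, and reading off $\Xi(\lam,0)=n-\lam$ by explicit differentiation --- are equivalent rearrangements of the same steps.
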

\begin{proof} The proof of this theorem is a slight modification of a standard result from bifurcation theory. 
Our first goal is to solve the equation \eqref{bif-eqn} for $\lam$.       By Proposition \ref{thm:M-spec}, we have 
   \begin{align}            \label{NullA-NullL}  \Null_{X } A_{n}= \Null_{X' } (L^{n}-n)\times \{0\}. 
\end{align}    
 This relation and  assumption \eqref{Null-dim1-cond} yield that    the projection $P$ is rank one 
 and therefore it can be written, for $u=( \phi, \beta)$, as 
    \begin{align}    \label{P}
        P u 
        = s v_0,\ \textrm{with}\ & s:=\frac{1}{\|\psi_0\|^2} \langle \psi_0, \phi \rangle,\ v_0:=  (\psi_0, 0),\\ \notag
        &\psi_0 \in \Null_{X' } (L^n - n),\ \|\psi_0\|=1. 
    \end{align}
Hence,          we can  write the map $\gamma$ in the bifurcation equation \eqref{bif-eqn} as $\gamma=\psi_0 \tilde \g$, where $\tilde \gamma : \R \times \C \to \C$ is given by
		\begin{equation} \label{gam}
		\tilde \gamma(\lambda, s) = \langle \psi_0, F_1(\lambda, s v_0 + w(\lambda, s v_0)) \rangle.
	\end{equation}
We now show that $\tilde \gamma(\lambda, s) \in \R$ for $s \in \R$.
Since the projection $\bar P$ is self-adjoint,  $\bar P w(\lambda, v) = w(\lambda, v), $ $w(\lambda, v)$ solves  $ \bar P F(\lambda, v + w)=0$ and $v=(s \psi_0, 0)=s  v_0,$ we have
	\begin{align*}
		\langle w(\lambda, s v_0), F(\lambda, s v_0+ w(\lambda, s v_0)) \rangle
		= \langle w(\lambda, s v_0), \bar P F(\lambda, s v_0 + w(\lambda, s v_0)) \rangle
		= 0.
	\end{align*}
	Therefore, for $s \neq 0$,
	\begin{align*}
		\langle \psi_0, F_1(\lambda, s v_0 + w(\lambda, s v_0)) \rangle
	&= s^{-1}\langle s  v_0, F(\lambda, s v_0 + w (\lambda, s v_0)) \rangle	\\ 
	&= s^{-1}\langle s  v_0 + w (\lambda, s v_0), F(\lambda, s v_0 + w (\lambda, s v_0)) \rangle,
	\end{align*}
	and this is real by property \eqref{psiF-real} of Proposition \ref{prop:F-propert} and the fact that the part $\langle  w_2 (\lambda, s v_0),$ $ F_2(\lambda, s v_0 + w (\lambda, s v_0)) \rangle$ of the inner product on the r.h.s.  is real. Next, by Lemma \ref{lem:w-gam-gaugeinv}, $\tilde\gamma(\lambda, s) = e^{i\arg s} \tilde\gamma(\lambda, |s|)$. Therefore it suffices to solve the equation
\begin{equation} \label{bif-eqn2} \gamma_0(\lambda, s) =0
\end{equation}
for the restriction $\gamma_0 : \R \times \R \to \R$ of the function $\tilde\gamma$ to $\R \times \R$, i.e., for real $s$. It follows from \eqref{w-prop2} that $w(\lambda, s v_0)=O(s^3)$, and therefore \eqref{bif-eqn2} has the trivial branch of solutions $s\equiv 0$ for all $\lam$. Hence we factorize $\gamma_0(\lambda, s)$ as $\gamma_0(\lambda, s) =s\gamma_1(\lambda, s)$, i.e., 
 we define the function
 \begin{align} \label{gam1}
		\gamma_1(\lambda, s) &:= s^{-1}\gamma_0(\lambda, s),\  \text{ if }\ s>0,\    \text{ and }\ =0   \text{ if }\ s=0,
	\end{align}
and solve the equation $\gamma_1(\lambda, s) =0$ for $\lam$. The definition of the function $\gamma_1(\lambda, s)$  and \eqref{w-prop1} imply that it has the following properties: $\gamma_1(\lambda, s)$ is real-analytic, $\gamma_1(\lambda, -s) = \gamma_1(\lambda, s)$, $\gamma_1 (1, 0)=0$ and, by \eqref{F} and \eqref{w-prop2}, $\p_\lam\gamma_1(1, 0)=-\|\psi_0\|^2\ne 0$. Hence by a standard application of  the Implicit Function Theorem,  there is $\epsilon > 0$ and a real-analytic function
$\widetilde{\phi}_\lambda : (-\sqrt{\epsilon}, \sqrt{\epsilon}) \to \R$ such that $\widetilde{\phi}_\lambda(0) = 1$ and $\lambda = \widetilde\phi_\lambda(|s|)$ solves the equation  $\gamma_1(\lambda, s) = 0$ with $|s| < \sqrt{\epsilon}$. 

We also note that because of the symmetry, $\widetilde{\phi}_\lambda(-|s|) = \widetilde{\phi}_\lambda(|s|)$, 
$\widetilde{\phi}_\lambda$ is an even real-analytic function, and therefore must in fact be a function solely of $s^2$. We therefore set $\phi_\lambda(s) = \widetilde{\phi}_\lambda(\sqrt{s})$ for $s \in [0, \e)$, and
so $\phi_\lambda$ is real-analytic.

Let  $w=(w_1, w_2)$.  We now define $g_\psi : [0, \epsilon) \to \sH^{2}_{1} $ and $g_a : [0, \epsilon) \to   \HA{2}{\Div,0}$ as 
    \begin{equation}
        g_\psi(s) = \begin{cases}
                        \frac{1}{\sqrt{s}} w_1(\phi_\lambda(s), \sqrt{s} v_0) &  s \neq 0\\                        0 & s = 0,
                    \end{cases}\ 
                    \qquad  g_a (s) = \begin{cases}
                         w_2(\phi_\lambda(s), \sqrt{s} v_0) &  s \neq 0\ \qquad  \\
                        0 & s = 0,
                    \end{cases}
    \end{equation}
By the definition,  $(s g_\psi(s^2), g_a(s^2)) = w(\widetilde{\phi}_\lambda(s), s v_0)$ for any $s \in [0, \sqrt{\e})$. 

Now, we know that there is a neighbourhood of $(k, 0, 0)$ in $\R \times X_n$, 
such that in this neighbourhood $F(\lambda, u) = 0$
if and only if $u=s v_0 +w(\lam, s v_0)$ and $\gamma(\lambda, s) = 0$. 
By taking a smaller neighbourhood if necessary, we have proven that
$F(\lambda, u) = 0$ in this neighbourhood if and only if either $s = 0$ or $\lambda = \phi_\lambda(s^2)$. If $s = 0$, we have
$u = s v_0 + w(\widetilde{\phi}_\lam(s), s v_0) = 0$, which gives the trivial solution.
In the other case, $u = s v_0 + w(\widetilde{\phi}_\lam(s), s v_0) = (s \psi_0 + sg_\psi(s^2), g_a (s^2))$. 

This gives the unique non-trivial solution, $(\phi, \beta)$, to the equation \eqref{Feq0}, or \eqref{psia-eqs}.  By Proposition \ref{prop:reconstr}, this solution is gauge equivalent to a solution, $(\psi, \al)$, to the equation \eqref{Feq0}, or \eqref{psia-eqs}. One can show easily that $(\psi, \al)$ satisfies the same estimates as $(\phi, \beta)$, so we keep the notation $g_\psi(s^2)$ and $g_a (s^2)$ for the corresponding terms for $(\psi, \al)$.

If we now also define $\lambda_s:=\phi_\lam(s)=: k n + g_\lam(s^2),\ \psi_s:=s \psi_0 (1 + g_\psi(s^2))$ and $a_s := a_0 + g_a(s^2)$, we see that the solution, $ (\lambda_s, \psi_s, a_s)$, we obtained,  solves  the rescaled Ginzburg-Landau equations \eqref{rGL} and is of the form \eqref{s-expansions}. \end{proof}

\section{Bifurcation theorem for $n=1$} 
 \label{sec:bifurc-n=1}

In this section we let $\LAT=\LAT_\om$ be an arbitrary 
 lattice satisfying  \eqref{LAT-cond} and take $n=1$. 
For $n=1$, we can take $X =\sH^{2}_{1}\times \HA{2}{\Div,0}$ and $Y = \Lpsi{2}{1}\times \LA{p}{\Div,0}$ (see the paragraph preceding \eqref{XY-cond}).     By Eq \eqref{Null-dim1-cond} and Proposition \ref{prop:Landau-ham-spec}, the space  $K = \Null A_{n}$ has the complex dimension $1$ and therefore Theorem \ref{thm:bif-thm-Kdim1} is applicable and gives Theorem \ref{thm:main-resultTS}, statements (I) and (II). An additional simple argument gives (III).

\section{Bifurcation theorem  for $n > 1$ and point symmetries} 
 \label{sec:bifurc-point-sym}

 As above, $n$ will denote the number of flux quanta through a fundamental cell of $\cL$. We want to prove the existence of Abrikosov lattices for $n \geq 2$. 
 The main notions of the section are as follows:
\begin{enumerate}
	\item Number of flux quanta, $n$.
	\item $\cL-$ irreducibility. We are interested in $\cL-$equivariant solutions which are not equivariant for any finer lattice. We call such solutions $\cL-$irreducible. 
	\item Multiplicity, which is defined as the dimension of the linear subspace, $\Null_{X } A_n$. The difficulty of bifurcation theory reduces considerably if the multiplicity is one. 	We call the corresponding solutions {\it simple}. 
\end{enumerate}

The former is achieved by  employing the symmetries of the lattice to reduce the dimension of $\Null_{X } A_{ n}$, more precisely, to find $X$ satisfying  \eqref{XY-cond} and $\dim_{\C}\Null_{X } A_{ n}=1.$

 In the next two subsections, we outline the general strategy of reducing multiplicity by the group symmetry and choose appropriate subgroups of the point group to impose as symmetry group. Then, in the following three subsections, we give the actual proof. 

\subsection{Symmetry reduction} \label{sec:sym-reduct} 
 
Let $X_n=\sH^{2}_{n}\times \HA{2}{\Div,0} $ and $Y_n=\Lpsi{2}{n}\times \LA{p}{\Div,0}$. 
Define the action of $H(\cL)$ on our spaces by
\begin{align}\label{rho-action-gen}
	\rho_g  (\psi(x), \al(x)) = (\psi(g x), g^{-1}\al(g x)),
\end{align}
where $g \in H(\cL)$. (The groups we deal with are abelian, so \eqref{rho-action-gen} defines a representation.)
We begin with the following
\begin{lemma}\label{lem:Xn-invar}
 Let $n$ be even.  Then 
$\rho_{g}: X_n\ra X_n\ \forall g\in SH(\cL)$.\end{lemma}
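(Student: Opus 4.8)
The plan is to verify directly that the linear map $\rho_g$ sends $X_n = \sH^{2}_{n}\times\HA{2}{\Div,0}$ into itself by checking that $(\tilde\psi,\tilde\alpha):=\rho_g(\psi,\alpha)=(\psi(g\,\cdot),\,g^{-1}\alpha(g\,\cdot))$ satisfies every defining condition of $\Lpsi{2}{n}$ and $\LA{p}{\Div,0}$ whenever $(\psi,\alpha)$ does, for $g\in SH(\cL)=G(\cL)\cap SO(2)$ and $n$ even. Since $g\in SO(2)$ is orthogonal, the change of variables $x\mapsto gx$ preserves $H^2$-regularity, so only the algebraic, parity and (quasi)periodicity conditions need attention.

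First I would dispose of the easy conditions. The parity requirements hold because $g$ commutes with $-I$: indeed $\tilde\psi(-x)=\psi(-gx)=\psi(gx)=\tilde\psi(x)$ and $\tilde\alpha(-x)=g^{-1}\alpha(-gx)=-g^{-1}\alpha(gx)=-\tilde\alpha(x)$. For $\tilde\alpha$ the periodicity in \eqref{alpha-per'} uses that $g$ preserves $\cL$, so $gt\in\cL$ for $t\in\cL$, giving $\tilde\alpha(x+t)=g^{-1}\alpha(gx+gt)=g^{-1}\alpha(gx)=\tilde\alpha(x)$; the divergence-free condition follows from the elementary identity $\divv\big(g^{-1}\alpha(g\,\cdot)\big)(x)=(\divv\alpha)(gx)$ valid for orthogonal $g$; and the mean-zero condition follows by substituting $y=gx$, since $g\Omega$ is again a fundamental cell and $\int_\Omega\alpha$ is independent of the chosen cell.

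The main step, and the only place where evenness of $n$ is used, is the gauge-periodicity \eqref{gaugeperiod-psi'} for $\tilde\psi$. Here I would compute $\tilde\psi(x+t)=\psi(gx+gt)$ and apply \eqref{gaugeperiod-psi'} with lattice vector $gt\in\cL$, obtaining the phase $\exp\{i(\tfrac{n}{2}(gx)\cdot J(gt)+c_{gt})\}$. The quadratic part is unchanged because $g\in SO(2)$ commutes with the rotation $J$, so $g^{T}Jg=J$ and hence $(gx)\cdot J(gt)=x\cdot Jt$. It then remains to show $c_{gt}\equiv c_t\pmod{2\pi}$. Writing $g$ in the lattice basis as $M=\begin{pmatrix} a & b \\ c & d\end{pmatrix}\in SL(2,\Z)$ (the determinant is $+1$ since $g$ preserves orientation), the vector $t=\sqrt{2\pi/\im\tau}\,(p+q\tau)$ maps to $gt=\sqrt{2\pi/\im\tau}\,(p'+q'\tau)$ with $(p',q')^{T}=M(p,q)^{T}$. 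Using $ad-bc=1$ and the explicit formula \eqref{cs-express}, a short computation gives $c_{gt}-c_t=\pi n\,(ac\,p^2+2bc\,pq+bd\,q^2)$, where the cross term's coefficient is $ad+bc-1=2bc$. Since $n$ is even, $\tfrac{n}{2}\in\Z$ and the bracket is an integer, so $c_{gt}-c_t\in 2\pi\Z$, completing this step.

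I expect the only real obstacle to be the bookkeeping in this last computation: representing the action of $g$ on $\cL$ as an element of $SL(2,\Z)$ and simplifying $p'q'-pq$ via $ad-bc=1$. The key algebraic point is that $ad+bc-1=2bc$ produces an even coefficient, so that evenness of $n$ is exactly what renders $\pi n\,(p'q'-pq)\in 2\pi\Z$. The geometric inputs—$g$ commuting with both $-I$ and $J$ and preserving the lattice—are routine once isolated.
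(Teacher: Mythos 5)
Your proof is correct and follows essentially the same route as the paper: check the gauge-periodicity \eqref{gaugeperiod-psi'} for $\psi(g\,\cdot)$, use $g^{T}Jg=J$ for $g\in SO(2)$ (plus $g\LAT=\LAT$), and handle the phase $c_{gt}$ via \eqref{cs-express} and the evenness of $n$. The only remark is that your $SL(2,\Z)$ computation is more than is needed: by \eqref{cs-express}, $c_s=\pi n\,pq\in 2\pi\Z$ for \emph{every} $s\in\LAT$ once $n$ is even, so $e^{ic_{gt}}=e^{ic_t}=1$ already follows from $gt\in\LAT$ alone, without invoking $ad-bc=1$ or the evenness of the cross-term coefficient $2bc$.
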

\begin{proof} By the definition, it suffices to show that if $\psi$ satisfies \eqref{gaugeperiod-psi'}. 
We check this condition. Recalling \eqref{gaugeperiod-psi'}, we have $\psi(g(x + t)) = e^{i(\frac{ n}{2} g x\cdot J g t+c_{g t}) }\psi(g x)$. One can compute easily that $g x\cdot J g t= x\cdot J t, \forall g\in SH(\cL)$. Furthermore by \eqref{cs-express}, we have $e^{ic_{g t}}= e^{ic_t}$, for $n$ even, and hence the result follows. \end{proof}
Let $F(\lambda, u), u=(\psi, \al)$, be the map defined in \eqref{F},  and $u_0=(0, 0)$, the normal state. Since the map $F(\lam, u)$ is rotationally,  translationally and gauge covariant and $\rho_{g}a^n=a^n\ \forall g\in H(\cL)$ (recall, $ a = a^n + \alpha$), 
we have the following lemma 
\begin{lemma}\label{lem:F-covar}
 Let 
  $\tilde T^{\rm gauge}_\chi: (\psi, \al) \ra (e^{i \chi} \psi, \al)$. Then 
\begin{align} \label{cov-F-gen} &F(\lam, \rho_{g} u)=\rho_{g} F(\lam, u),\  g \in H(\cL),\\  &
 F(\lam, \tilde T^{\rm gauge}_\chi u)=\tilde T^{\rm gauge}_\chi F(\lam, u),\  \forall \chi \in \R. 
 \end{align}
\end{lemma}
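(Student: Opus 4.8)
The plan is to read both identities as expressions of the fact that the map $F$ inherits the covariances of the rescaled Ginzburg--Landau equations \eqref{rGL}. The gauge identity is essentially free: the operator $\tilde T^{\rm gauge}_\chi$ with a \emph{constant} $\chi$ is exactly the operator $T_\chi$ appearing in Proposition \ref{prop:F-propert}, so the second relation in \eqref{cov-F-gen} is just part (d) of that proposition. Thus the real content is the first relation, $F(\lam, \rho_g u)=\rho_g F(\lam, u)$, and I would prove it by showing that, under the hypothesis $\rho_g a^n = a^n$, the action $\rho_g$ on $(\psi,\al)$ is nothing but the full $O(2)$ symmetry of Section 2.1 acting on $(\psi, a)$, $a=a^n+\al$, restricted to our gauge slice, and then verifying term by term that each constituent of $F$ commutes with it.

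First I would record the elementary building blocks, all consequences of $g\in O(2)$ (so $g^{\mathsf T}=g^{-1}$). Writing $\psi'(x)=\psi(gx)$ and $a'(x)=g^{-1}a(gx)$, a direct computation gives the covariance of the covariant derivative, $\Cov{a'}\psi' = g^{-1}(\Cov{a}\psi)(g\,\cdot)$, whence the covariant Laplacian obeys $(-\Delta_{a'})\psi' = \big((-\Delta_{a})\psi\big)(g\,\cdot)$ and the current obeys $\Im(\bar\psi'\,\Cov{a'}\psi') = g^{-1}\,\Im(\bar\psi\,\Cov{a}\psi)(g\,\cdot)$, i.e.\ $J(\rho_g u)=\rho_g J(u)$. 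I would also note that $M=\Curl^*\Curl$ is $O(2)$-covariant (the two factors of $\det g$ produced by $\Curl$ and $\Curl^*$ cancel), and that $P'$ commutes with $\rho_g$ because $\rho_g$ is unitary on $L^2$ and preserves the subspace of divergence-free, mean-zero fields; hence $P'J(\rho_g u)=\rho_g P'J(u)$.

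Assembling these is then bookkeeping. Combining $(L^n-\lam)\psi$ with $h(\psi,\al)$ recovers the first Ginzburg--Landau operator, $F_1(\lam,\psi,\al)=(-\Delta_{a}-\lam)\psi+\kappa^2|\psi|^2\psi$ with $a=a^n+\al$; applying the scalar identities above gives $F_1(\lam,\rho_g u)=F_1(\lam,u)(g\,\cdot)$, which is the scalar component of $\rho_g F$. For the vector component, $F_2=M\al - P'J$, the covariance of $M$, $J$, and $P'$ gives $F_2(\lam,\rho_g u)=g^{-1}F_2(\lam,u)(g\,\cdot)$, the vector component of $\rho_g F$. Together these are exactly $F(\lam,\rho_g u)=\rho_g F(\lam,u)$. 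The one point requiring care is the splitting $a=a^n+\al$: since $\rho_g u=(\psi(g\,\cdot),g^{-1}\al(g\,\cdot))$ corresponds to the full potential $a^n+g^{-1}\al(g\,\cdot)$ only when $\rho_g a^n=a^n$, this invariance of the background is what lets $\rho_g$ coincide with the genuine field symmetry.

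This last invariance is the main (and only genuinely delicate) obstacle. Since $a^n(x)=\tfrac n2 Jx$, one has $(\rho_g a^n)(x)=\tfrac n2\,g^{-1}Jg\,x$, and $g^{-1}Jg=J$ holds for $g\in SO(2)$ but $g^{-1}Jg=-J$ for a reflection; so the clean covariance is available precisely on the rotation subgroup $SH(\cL)=G(\cL)\cap SO(2)$, which is exactly the group used in the symmetry reduction below and in Lemma \ref{lem:Xn-invar}. Everything else is routine checking that the $g^{-1}(\cdots)(g\,\cdot)$ factors attach to the correct (scalar versus vector) components of $F$.
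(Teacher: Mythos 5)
Your proposal is correct and follows the same route the paper takes: the paper offers no computation at all, simply asserting the lemma from the rotational and gauge covariance of the Ginzburg--Landau map together with the claim $\rho_g a^n = a^n$, and your term-by-term verification (covariance of $\Cov{a}$, $-\Delta_a$, $J$, $M$, and the commutation of $P'$ with $\rho_g$) is exactly the bookkeeping being suppressed. Your gauge identity observation is also right: for constant $\chi$, $\tilde T^{\rm gauge}_\chi$ coincides with $T_\delta$ and the second relation is literally part (d) of Proposition \ref{prop:F-propert}.

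One substantive point in your favor deserves emphasis: your caveat about reflections is not merely care, it flags a genuine overstatement in the paper. The sentence preceding the lemma asserts $\rho_g a^n = a^n$ for all $g \in H(\cL)$, but as you compute, $(\rho_g a^n)(x) = \tfrac n2\, g^{-1}Jg\,x$, and $g^{-1}Jg = -J$ for reflections, so $\rho_g a^n = -a^n$ on the non-$SO(2)$ part of $H(\cL)$, and the covariance \eqref{cov-F-gen} as stated fails there (the background flips sign in $F_1$ and in $J$). The clean statement holds precisely on $SH(\cL) = G(\cL)\cap SO(2)$, which is consistent with the paper's own narrative — Appendix \ref{sec:no-sol} rules out reflection-symmetric solutions, and all subsequent uses of the lemma (Lemma \ref{lem:Xn-invar}, the cyclic groups $C_k$) involve only rotations — so nothing downstream is affected, but your restriction is the correct hypothesis.
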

Define $\tilde\rho_{g}:= \tilde T^{\rm gauge}_{-\chi_g}\rho_{g}$, where  $\chi_g$ are some constants. (If $\chi_g$ is a representation of $G$, then so is $\tilde\rho_{g}$. By Proposition \ref{prop:xig-xindep}, we do not loose any generality by assuming $\chi_g$ are constants.) 
The lemma above leads to the following 
\begin{proposition}\label{prop:NulldF-invar}   Let $n$ be even. Then for $\rho$ given in \eqref{rho-action-gen}, we have
\begin{align} \label{NuldF-invar} \tilde\rho_g \nul_{X_n} dF(\lam, u)=\nul_{X_n} dF(\lam, \tilde \rho_g u),\ \forall g\in H(\lat). 
\end{align}   
Hence, if  $G$ is a subgroup of $H(\lat)$ and $\tilde\rho_g u_* =u_*, \forall g\in G$, then the subspace $\nul dF(\lam, u_*)$ is invariant under  the action of $G$.    \end{proposition}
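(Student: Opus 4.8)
The plan is to derive everything from the equivariance of $F$ under the composite operator $\tilde\rho_g = \tilde T^{\rm gauge}_{-\chi_g}\rho_g$ together with a single differentiation.

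First I would verify that each $\tilde\rho_g$ is a bounded invertible linear operator mapping $X_n$ into $X_n$ and $Y_n$ into $Y_n$. For the spatial factor $\rho_g$ this is exactly Lemma~\ref{lem:Xn-invar}, which is where the hypothesis that $n$ is even enters (through $gx\cdot Jgt=x\cdot Jt$ and $e^{ic_{gt}}=e^{ic_t}$ coming from \eqref{cs-express}); the global gauge factor $\tilde T^{\rm gauge}_{-\chi_g}$ with constant $\chi_g$ visibly preserves the gauge-periodicity \eqref{gaugeperiod-psi'}, the parity, and the constraints \eqref{alpha-per'}. Linearity is immediate, and invertibility follows from $\rho_g^{-1}=\rho_{g^{-1}}$ and $(\tilde T^{\rm gauge}_\chi)^{-1}=\tilde T^{\rm gauge}_{-\chi}$.

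Next I would record the equivariance of $F$ itself. Combining the two covariance relations of Lemma~\ref{lem:F-covar} and using that the spatial transformation $\rho_g$ commutes with the phase transformation $\tilde T^{\rm gauge}$,
\begin{align*}
F(\lam,\tilde\rho_g u)
= \tilde T^{\rm gauge}_{-\chi_g} F(\lam,\rho_g u)
= \tilde T^{\rm gauge}_{-\chi_g}\rho_g F(\lam, u)
= \tilde\rho_g F(\lam, u).
\end{align*}
The key step is then to differentiate this identity in $u$. Since $\tilde\rho_g$ is linear and bounded, the chain rule applied to $F(\lam,\cdot)\circ\tilde\rho_g$ yields the operator intertwining relation
\begin{align*}
dF(\lam,\tilde\rho_g u)\,\tilde\rho_g = \tilde\rho_g\, dF(\lam, u).
\end{align*}

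From this the null-space identity follows formally. If $\xi\in\nul_{X_n} dF(\lam,u)$, then $dF(\lam,\tilde\rho_g u)\tilde\rho_g\xi=\tilde\rho_g dF(\lam,u)\xi=0$, so $\tilde\rho_g\xi\in\nul_{X_n} dF(\lam,\tilde\rho_g u)$, giving the inclusion $\tilde\rho_g\,\nul_{X_n} dF(\lam,u)\subseteq\nul_{X_n} dF(\lam,\tilde\rho_g u)$; applying the same reasoning with $g^{-1}$ in place of $g$ and $\tilde\rho_g u$ in place of $u$, and using invertibility of $\tilde\rho_g$, gives the reverse inclusion and hence the equality \eqref{NuldF-invar}. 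Specializing to $u=u_*$ with $\tilde\rho_g u_*=u_*$ for all $g$ in a subgroup $G\subseteq H(\cL)$ collapses \eqref{NuldF-invar} to $\tilde\rho_g\,\nul dF(\lam,u_*)=\nul dF(\lam,u_*)$, i.e.\ $G$-invariance of the null space. The only genuine obstacle is the first step, namely checking that $\tilde\rho_g$ really preserves the constrained spaces $X_n$ and $Y_n$ (parity, gauge-periodicity, and the divergence-free and mean-zero conditions on $\al$), for which Lemma~\ref{lem:Xn-invar} and the evenness of $n$ are essential; once $\tilde\rho_g$ is a linear automorphism of these spaces commuting with $F$, the differentiation and kernel manipulations are purely formal.
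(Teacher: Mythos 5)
Your proposal is correct and follows essentially the same route as the paper: differentiate the equivariance identity $F(\lam,\tilde\rho_g u)=\tilde\rho_g F(\lam,u)$ in $u$ to obtain the intertwining relation $dF(\lam,\tilde\rho_g u)\,\tilde\rho_g=\tilde\rho_g\,dF(\lam,u)$ and read off the null-space identity, with the evenness of $n$ entering only through Lemma~\ref{lem:Xn-invar} to guarantee $\tilde\rho_g$ is an automorphism of $X_n$. You are in fact slightly more careful than the paper, which writes $\rho_g\xi$ rather than $\tilde\rho_g\xi$ in the differentiated identity (harmless, since the two differ by the invertible constant phase $e^{-i\chi_g}$) and leaves the two-sided inclusion argument via $g^{-1}$ implicit.
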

\begin{proof} 
 Indeed, differentiating $F(\lam, \tilde\rho_g u)= \tilde\rho_g F(\lam, u)$ w.r. to $u$, we obtain
  \[d F(\lam, \tilde\rho_g u) \rho_g \xi =\tilde\rho_g d F(\lam, u)\xi,\] which gives \eqref{NuldF-invar}.  
\end{proof}

  Clearly,  $\tilde\rho_{g} u_0=u_0,\ \forall g \in H(\lat)$, for the normal state $u_0:= (0, 0)$, so, by Proposition \ref{prop:NulldF-invar}, 
 \begin{align} \label{dF-invar-gen}\nul_{X_n} dF(\lambda, u_0)\ \text{ is invariant under }\ \rho_{g},\ \forall g \in H(\lat). \end{align}

Recall that $ dF(\lambda, u_0)= A_{ \lambda}$. By formula \eqref{NullA-NullL}, it suffices to concentrate on $\Null_{\sH^{2}_{n}} (L^n -  n)$.  The action $\rho_{g}$ induces the action, $\rho_{g}'$ on$\psi$'s: 
\begin{align} \label{rho'-action-gen}	\rho_{g}' \psi(x)=  \psi(g^{-1} x),\  \forall g\in SH(\cL). 
\end{align}
Since $\Null_{\sH^{2}_{n}\times \HA{2}{\Div,0}} A_{ n}$ is invariant under $\tilde \rho_{g}$ and due to formula \eqref{NullA-NullL}, we conclude 
\begin{corollary}\label{cor:NullL-invar}  Let $n$ be even. Then  $ \Null_{\sH^{2}_{n}} (L^n -  n)$ is invariant under 
 the gauge and \eqref{rho'-action-gen} transformations, and therefore under $\tilde\rho_{g}',\   \forall g\in SH(\cL)$, where $\tilde\rho_{g}':=  e^{- i \chi_g}\rho_{g}'$, the restriction of  $\tilde\rho_{g}$ to $\psi$'s. 
\end{corollary}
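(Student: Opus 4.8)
The plan is to obtain the Corollary as a direct translation of the invariance statement \eqref{dF-invar-gen} for the full null space $\Null_{X_n} A_n$ into the language of its $\psi$-factor, using the block-diagonal structure \eqref{NullA-NullL}. Since $dF(\lambda, u_0) = A_\lambda = \diag(L^n - \lambda, M)$ and $M$ is strictly positive by Proposition \ref{thm:M-spec} (so that $\Null M = \{0\}$), formula \eqref{NullA-NullL} reads $\Null_{X_n} A_n = \Null_{\sH^2_n}(L^n - n) \times \{0\}$.

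First I would record that, by \eqref{dF-invar-gen} together with Proposition \ref{prop:NulldF-invar} and the identity $\tilde\rho_g u_0 = u_0$ (immediate since $u_0 = 0$ and $\tilde\rho_g$ is linear), the subspace $\Null_{X_n} A_n$ is invariant under $\tilde\rho_g$ for every $g \in SH(\cL)$. Here the hypothesis that $n$ is even enters through Lemma \ref{lem:Xn-invar}, which is what guarantees $\rho_g$ (hence $\tilde\rho_g$) maps $X_n$ into itself to begin with; and for a complex subspace of the form $V \times \{0\}$ the invariances under $\rho_g$ and under $\tilde\rho_g = \tilde T^{\rm gauge}_{-\chi_g}\rho_g$ coincide, the two differing only by the constant scalar $e^{-i\chi_g}$ on the first slot.

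Next I would transfer this invariance to the first factor. The key observation is that $\tilde\rho_g$ respects the block decomposition: it carries a pair $(\psi, 0)$ to $(\tilde\rho'_g\psi, 0)$, where $\tilde\rho'_g = e^{-i\chi_g}\rho'_g$ is the induced action on $\psi$'s built from \eqref{rho'-action-gen}, and in particular the second slot remains zero. Consequently the invariance of $\Null_{\sH^2_n}(L^n - n) \times \{0\}$ under $\tilde\rho_g$ descends to invariance of $\Null_{\sH^2_n}(L^n - n)$ under $\tilde\rho'_g$ for all $g \in SH(\cL)$ (the group inversion $g \mapsto g^{-1}$ implicit in \eqref{rho'-action-gen} being immaterial to the invariance of a subspace under the whole group).

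Finally I would separate the gauge and rotation parts of $\tilde\rho'_g = e^{-i\chi_g}\rho'_g$. Because $L^n$ is $\C$-linear and $\chi_g$ may be taken constant (by Proposition \ref{prop:xig-xindep}), multiplication by the global phase $e^{-i\chi_g}$ is a gauge transformation that preserves the complex subspace $\Null_{\sH^2_n}(L^n - n)$; combining this with invariance under $\tilde\rho'_g$ yields invariance under $\rho'_g$ alone. Thus the null space is invariant under the constant gauge and the \eqref{rho'-action-gen} transformations separately, and hence under their composite $\tilde\rho'_g$, for every $g \in SH(\cL)$, which is the assertion. I expect the only delicate point to be the bookkeeping that confirms $\tilde\rho_g$ genuinely preserves the block structure and that the phase can be chosen constant; the substantive equivariance has already been supplied by Lemmas \ref{lem:Xn-invar}--\ref{lem:F-covar} and Proposition \ref{prop:NulldF-invar}, so no new analytic input is required.
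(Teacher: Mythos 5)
Your proposal is correct and takes essentially the same route as the paper: the paper likewise deduces the Corollary by combining the invariance of $\Null_{X_n}A_{n}$ under $\tilde\rho_g$ (from Proposition \ref{prop:NulldF-invar} applied at the normal state $u_0$, i.e.\ \eqref{dF-invar-gen}, with the evenness of $n$ entering through Lemma \ref{lem:Xn-invar}) with the block identity \eqref{NullA-NullL} to pass to the $\psi$-component. Your additional bookkeeping --- that $\tilde\rho_g$ maps $(\psi,0)$ to $(\tilde\rho'_g\psi,0)$ and that the constant phase $e^{-i\chi_g}$ (Proposition \ref{prop:xig-xindep}) preserves any complex subspace, so the gauge and \eqref{rho'-action-gen} invariances can be separated --- simply makes explicit what the paper states without detail.
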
 

For a subgroup $G\subset G(\cL)$, we require that a solution in question is \textit{$G-$equivariant} w.r.to this action, in the sense that it satisfies
\begin{align}\label{u-Gequiv}
	\rho_g u = \tilde T^{\rm gauge}_{\chi_g} u.
\end{align}
where $u= (\psi, \al)$ and $\tilde T^{\rm gauge}_\chi: (\psi, \al) \ra (e^{i \chi} \psi, \al)$, for some functions $\chi_g$ (satisfying the corresponding co-cycle condition).  (It turns out it is sufficient to assume that $\chi_g$ are constants, see Proposition \ref{prop:xig-xindep}.)

 Note that, 
if  $u= (\psi, \al)$ satisfies Eq \eqref{u-Gequiv}, then $\psi$ obeys the equivariance condition
\begin{align}\label{psi-Gequiv}
	\rho_g' \psi = \xi_g \psi,\  \quad 
	\xi_g:= e^{i \chi_g},\  \quad g\in G.
\end{align}

Now, let $G$ be a subgroup of $H(\lat)$ 
with the irreducible representations labeled 
 by $\s$. We define  the subspaces 
\begin{align}
& X_{n \s}\subset X_n: 
\tilde\rho\big|_{X_{n \s}}\ \text{ is multiple of }\ \tilde\rho^\s,\\ 
&Y_{n \s} \subset Y_n: 
 \tilde\rho\big|_{X_{n \s}}\ \text{ is multiple of }\ \tilde\rho^\s. 
\end{align}
Then  $F : \R\times X_{n \s} \rightarrow Y_{n \s}$. Now, our goal is to choose $G$ and $\s$ such that 
  $\nul_{X_{n \s}} dF(\lambda, u_0)$  is one-dimensionall at the bifurcation point $\lambda=n$. Then Theorem \ref{thm:bif-thm-Kdim1}, with  the spaces $X$ and $Y$, appearing in \eqref{XY-cond},  given by $X=X_{n \s}$ and $Y=Y_{n \s}$, would be applicable and would give the desired result,  Theorem \ref{thm:MultiFluxExist}.
 
 Note that for any $G$ with $\rho_g u_0=u_0, \forall g\in G$, the bifurcation equation \eqref{bif-eqn} is invariant under $\rho_g$,
   \begin{equation}  \label{bif-eqn-invar-rho}
        \gamma(\lambda, \rho_g v)= \gamma(\lambda, v).
    \end{equation}


\subsection{Discrete Subgroups of $SO(2)$}
As was discussed above the maximal symmetry group of $\Null_{X_n} A_{ n}$ is  the group $G(\cL)\cap SO(2)=H(\cL)\cap SO(2)$. 
The Crystallographic restriction theorem says that $H(\lat)$ is either the cyclic, $C_k$, or dihedral, $D_k$, group, with $k=1, 2, 3, 4, 6$. Above, we ruled out $D_k$.  The case $k=1$ is trivial and gives us nothing new. Hence as a  symmetry group, $G$, 
 we take one of  the cyclic group of rotations, $C_k$, of order $k= 2, 3, 4, 6$.

  For $k=3$, the lattice whose symmetry group is $C_3$ is the hexagonal lattice. So it is to our advantage to consider $C_6$ instead for a stronger symmetry reduction. The case $k=4$ corresponds to square lattice, the proof of existence in this case is similar to the case $k=6$ but requires a smaller selection of flux $n$'s.
Thus, we consider only $C_2$ and $C_6$.  
 
 The group $C_k$ is  generated by a rotation $R_{k} \in SO(2)$ by the angle $2\pi/k$. If we identify $\R^2$ with $\C$, under $(x_1,x_2) \leftrightarrow x_1+ix_2$, then $R_{k}$ is identified  with the multiplication by  \[\xi_k = e^{2\pi i/k} \in U(1).\]  
 
We can specify the action \eqref{rho-action-gen} and \eqref{rho'-action-gen} to the present group by defining 
\begin{align}\label{rho}	&\rho_{k}  (\psi(x), \alpha(x)) = (
\psi(R_{k}^{-1} x), R_{k} \alpha(R_{k}^{-1} x)), \\
& \rho_{k}'  \psi(x)= \psi(R_{k}^{-1} x), 
\end{align}
where $k \in \F{Z}$. Then the equivariance conditions \eqref{u-Gequiv} and \eqref{psi-Gequiv} become, respectively,
\begin{align}\label{u-Gequiv'}
	\rho_k u = \tilde T^{\rm gauge}_{r\chi_k} u,\  \xi_k:=e^{i \chi_k},\ 
 \quad 	\rho_k' \psi = \xi_k^r \psi. 
\end{align}
Thus the group representation problem is eventually reduced to the eigenvalue problem for the operator $\rho_k'$.

\subsection{Spaces $X$ and $Y$}

	Since the groups  $C_k$ are  finite abelian groups, their irreducible unitary representations are 1-dimensional and, on $\cH^2$, coincide with the eigenspaces of the operator $\rho_{k}'$. Since $\rho_{k}'$ is unitary and satisfies 
	\begin{align}\label{rho'-nilp}(\rho_{k}')^k=\one,\end{align} it has exactly $k$  eigenvalues, $\xi_k^r = e^{2\pi i r/k}, r=0, \dots k-1$.  In this case, we specify our spaces for $(\psi, \al)$'s as
\begin{align}
& X_{n, k, r}:=\{u\in X_n: \rho_{k} u = \tilde T^{\rm gauge}_{r\chi_k} u\},\\ 
&Y_{n, k, r}=\{u\in  Y_n: \rho_{k} u =  \tilde T^{\rm gauge}_{r\chi_k} u\}. 
\end{align}
and the corresponding  spaces for $\psi$'s as:
 \begin{align}
& X_{n, k,  r}':=\{\psi\in \sH^{2}_{n}: \rho_{k}' \psi = \xi_k^r \psi\},\\ 
&Y_{n, k, r}'=\{\psi\in  \Lpsi{2}{n}: \rho_{k}' \psi = \xi_k^r \psi\}. 
\end{align}
Then, by Lemma \ref{lem:F-covar},  $F : \R\times X_{n, k,r} \rightarrow Y_{n, k, r}$, so condition  \eqref{XY-cond} holds.

\subsection{Multiplicity (Spaces $V_{n, k, r}$)} 
Let $n$ be the flux quantum number. 
For $n= 1, 2, \dots, k= 1, 2, \dots, r=  0, 1, 2, \dots, k-1$, we define the spaces  
 \begin{align}\label{tildeV-V}\tilde{V}_n := \Null_{X_n} (L^n -  n)\ \text{ and }\ \tilde{V}_{n, k, r} := \Null_{X_{n, k,  r}'} (L^n - n).\end{align} 
Our first goal is to prove the following 
\begin{theorem} \label{nEvenTildeV-classif}
 Let $n$ be even. Then $\tilde V_{n,k,r}$ is one dimensional for $k=6$ and for  the pairs 
\begin{align}
	(n, r) =& (2, 0), (2, 2), (4, 0), (4, 1), (4, 2), (4,4), \\
		& (6,1), (6,2), (6,3), (6,4) \\
		& (8,1), (8,3), (8,4), (8,5) \\
		& (10, 3), (10,5)
\end{align}
\end{theorem}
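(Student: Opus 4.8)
The plan is to reduce the statement to a purely representation-theoretic computation: to determine the multiplicity of each character $\xi_6^r$ in the action of the rotation operator $\rho_6'$ on the Landau-level null space, and then to evaluate it for $n\in\{2,4,6,8,10\}$. First I would use Proposition~\ref{prop:Landau-ham-spec} to identify $\tilde V_n=\Null_{X_n}(L^n-n)$, via the fixed intertwiner $e^{\frac{in}{2}x_2(x_1+ix_2)}$, with the $n$-dimensional space $V_n$ of theta functions $\theta(z,\tau)$ of \eqref{theta-repr}, i.e.\ with the sections of a degree-$n$ line bundle on the torus $\C/\LAT$. I specialize to the hexagonal lattice, where $\tau=e^{i\pi/3}$ and multiplication by $\xi_6=e^{i\pi/3}=\tau$ is a lattice automorphism; by Lemma~\ref{lem:Xn-invar} (for $n$ even) $\rho_6'$ preserves the relevant space and is unitary with $(\rho_6')^6=\one$, so $V_n$ splits into the eigenspaces $\tilde V_{n,6,r}$, $r=0,\dots,5$, with $\sum_r \dim\tilde V_{n,6,r}=n$. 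Since $(\rho_6')^3$ is the parity $z\mapsto -z$, the eigenvalue $\xi_6^r$ forces parity $(-1)^r$; this records which $r$ are compatible with the reflection/evenness constraint.

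The core step is to make the action of $\rho_6'$ on $V_n$ explicit. Writing $\rho_6'\theta(z)=\theta(\xi_6^{-1}z)$ and using the quasi-periodicity of $\theta$ together with its transformation under $z\mapsto \xi_6^{-1}z$ (equivalently, the action of the order-$6$ element of $SL(2,\Z)$ fixing $\tau=e^{i\pi/3}$), I obtain the matrix $U$ of $\rho_6'$ in the basis $\{\theta_j\}_{j=0}^{n-1}$ indexed by the characteristics $j\in\Z/n$. This $U$ is a monomial-times-finite-Fourier operator: the automorphism acts on the characteristics by an affine map of $\Z/n$ and contributes phases given by quadratic Gauss sums in $\Z/n$. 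The whole content of the theorem is encoded in the eigenvalue multiplicities of this explicit $n\times n$ unitary.

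Finally I would extract the multiplicities. One clean route is the character (finite Fourier) formula $\dim\tilde V_{n,6,r}=\frac16\sum_{j=0}^5 \xi_6^{-rj}\,\Tr(U^j)$, where $\Tr(U^j)=\Tr\big((\rho_6')^j\mid V_n\big)$ is computed by the holomorphic Lefschetz fixed-point formula: since the bundle has positive degree, $H^1$ vanishes, and the trace is a sum over the finitely many fixed points of $\xi_6^j$ on the hexagonal torus (the center and the order-$2$ and order-$3$ special points), each contributing a term $\mu_p/(1-\nu_p)$ with $\mu_p,\nu_p$ the linearizations of the bundle and of the map at $p$. Equivalently, for the short list $n\in\{2,4,6,8,10\}$ one simply diagonalizes $U$ directly. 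Either way one tabulates $\dim\tilde V_{n,6,r}$ and reads off the pairs for which it equals $1$, which should be exactly those listed.

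\textbf{Main obstacle.} The delicate point is the bookkeeping of the automorphy/theta-multiplier phases — the linearization $\mu_p$ of the degree-$n$ bundle at each fixed point, and the quadratic Gauss sums appearing in the entries of $U$ — with enough precision that the sum in the Lefschetz/character formula collapses to the correct integers. Getting these roots of unity exactly right (including their dependence on $n\bmod 6$ and on the characteristic $j$) is where the real work lies; the subsequent tabulation, the identification of the multiplicity-one pairs, and the matching of the parity $(-1)^r$ with the admissible branches are then routine.
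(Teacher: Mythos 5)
Your proposal is correct in outline but takes a genuinely different route from the paper. You reduce the theorem to a representation-theoretic computation: identify $\tilde V_n$ with the theta-function space $V_n$, realize the lifted rotation as an explicit $n\times n$ unitary $U$ on the basis $\theta_{n,m}$, and read off eigenvalue multiplicities either by direct diagonalization or via the character formula $\dim \tilde V_{n,6,r}=\frac16\sum_{j=0}^{5}\xi_6^{-rj}\Tr(U^j)$, with the traces evaluated by the holomorphic Lefschetz fixed-point formula. That machinery does apply here: $\deg = n>0$ kills $H^1$ on the elliptic curve, and the fixed points of $\xi_6^{j}$ on the hexagonal torus are exactly the origin, the two order-$3$ points and the three order-$2$ points, matching the paper's orbit indices $1,2,3$. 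The paper instead argues geometrically with divisors: Lemma \ref{Tevs} shows that a $T_{n,6}$-eigenfunction with eigenvalue $\xi_6^{r}$ vanishes at the origin to order $\equiv r$; Theorem \ref{thm:Gclassify-theta} (proved in Section \ref{sec:V-Vdiv} by explicitly constructing $C_6$-invariant theta functions via determinant and Wronskian tricks, plus a division algorithm) gives a bijection between $V_{n,6,r}$ modulo scaling and $C_6$-invariant divisors; and the counting identity $m_1+2m_2+3m_3+6m_6=n$ with $m_1=r$ (this is \eqref{mi-cond}, where the coefficient of $m_3$ should read $3$) is solved to determine exactly when the divisor is rigid --- no index-$6$ orbit and unique $m_2,m_3$ --- which yields precisely the listed pairs. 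Your route buys uniformity: all multiplicities for all $(n,r)$ in one finite computation, with no need to construct spanning theta functions. The paper's route buys the explicit zero configurations (tabulated in Appendix \ref{app:ThetaTable}), which are used essentially downstream in the irreducibility proof (Theorem \ref{nEvenLin}), and it settles the one point your plan leaves to ``bookkeeping'': since $U$ is defined only up to the scalar $e^{i\chi_g}$ of the lift, a different choice cyclically permutes the labels $r$, so to land on the paper's list you must normalize the lift so that it acts trivially on the fiber at the origin --- in effect re-deriving Lemma \ref{Tevs} (eigenvalue $\xi_6^{r}$ iff vanishing order $r$ modulo $6$ at $0$). With that normalization pinned down, and keeping the restriction to even $n$ (needed, as in Lemma \ref{lem:Xn-invar}, for the rotation to preserve the space with constant multiplier), your Gauss-sum/Lefschetz computation is a finite verification that would indeed establish the theorem.
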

 To prove this theorem, we  pass to the corresponding spaces of theta functions. The latter are more rigid since they are holomorphic.  

By the definition, the space $\tilde{V}_n$ is related to  the space $V_n$, 
 defined in Proposition \ref{prop:Landau-ham-spec}  
  as \begin{align} \label{Vn-tildeVn} \tilde{V}_n =f_n V_n,\  
  f_n(x):=e^{\frac{in}{2}x_2(x_1 + ix_2) }=e^{-\frac{C}{2}(|z|^2-z^2)}, \end{align}
where $C:= \frac{\pi n}{\im \tau}$ and $z:=  (x_1+i x_2)/ \sqrt{\frac{2\pi}{\im\tau} }$, or in terms of the functions,
\begin{align}\label{transfer}	\psi(x) = f_n (z)\theta(z),\  f_n (z):=e^{\frac{in}{2}x_2(x_1 + ix_2) }=e^{-\frac{C}{2}(|z|^2-z^2)}.   \end{align}
 Elements, $\theta$, of the subspace $V_n$, will be called \textit{$n$-theta functions}.  Similarly, we define the spaces $V_{n, k, r}$ by
\begin{align} \label{Vnkr-tildeVnkr} 
\tilde{V}_{n, k, r}=f_n V_{n, k, r}. 
 \end{align}
We define the induced action on theta functions via $T_{n, k}:= f_n ^{-1}\tilde\rho_{k,j}' f_n $. We have 
\begin{lemma} \label{Tevs} 
The operator $T_{n, k}$ 
is unitary and satisfies $(T_{n, k})^k=\one$. Consequently, its spectrum consists of the eigenvalues of the form $\xi_k^r$ for some $r =0, \dots, k-1$. Moreover, the eigenfunctions corresponding to the eigenvalue $\xi_k^r$ has zero at $z=0$ of the order $r$.
\end{lemma}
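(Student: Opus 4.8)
The plan is to analyze the operator $T_{n,k} = f_n^{-1}\tilde\rho_{k}' f_n$ acting on the space $V_n$ of $n$-theta functions, establishing its three asserted properties in turn. First I would verify unitarity: since $\tilde\rho_k'$ is a composition of a rotation action $\rho_k'$ (which is unitary on $\sH^2_n$ by the invariance of the $L^2$ inner product under the measure-preserving map $x \mapsto R_k^{-1}x$) with a scalar phase $e^{-i\chi_g}$, it is unitary on $\tilde V_n$; the conjugation by $f_n$ is an isometry between $\tilde V_n$ and $V_n$ with the transported inner product, so $T_{n,k}$ inherits unitarity. Next, for $(T_{n,k})^k = \one$, I would invoke the relation \eqref{rho'-nilp}, namely $(\rho_k')^k = \one$, which expresses that $R_k^k$ is the identity rotation; the phase factors must multiply to $1$ after $k$ applications because $\xi_k^k = 1$, and conjugation by $f_n$ preserves the $k$-th power being the identity. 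Consequently, by the spectral theorem for unitary operators satisfying $(T_{n,k})^k = \one$, the spectrum is contained in the $k$-th roots of unity, i.e.\ the eigenvalues are exactly of the form $\xi_k^r = e^{2\pi i r/k}$ for $r \in \{0, 1, \dots, k-1\}$.

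The substantive claim is the last one: an eigenfunction $\theta$ with eigenvalue $\xi_k^r$ vanishes to order exactly $r$ at $z = 0$. To see this, I would first compute the action of $T_{n,k}$ explicitly on theta functions. The rotation $R_k^{-1}$ corresponds, under the identification with $\C$, to multiplication by $\xi_k^{-1} = e^{-2\pi i/k}$, so $\rho_k'$ sends $z \mapsto \xi_k^{-1} z$ in the argument. Conjugating by $f_n = e^{-\frac{C}{2}(|z|^2 - z^2)}$ and tracking how $f_n(z)/f_n(\xi_k^{-1}z)$ behaves — the $|z|^2$ term is rotation-invariant while the $z^2$ term picks up a phase — one finds that $T_{n,k}$ acts on $\theta(z)$ essentially by $\theta(z) \mapsto (\text{phase})\,\theta(\xi_k^{-1} z)$, up to the constant phase $e^{-i\chi_k}$. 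Since $T_{n,k}\theta = \xi_k^r\theta$ and $\theta$ is holomorphic, I would expand $\theta$ in its Taylor series at $z = 0$, $\theta(z) = \sum_{j \geq 0} a_j z^j$, and read off that the eigenvalue condition forces $a_j = 0$ unless $j \equiv r \pmod{k}$ (the power $z^j$ transforms by $\xi_k^{-j}$ under $z \mapsto \xi_k^{-1}z$, which must match $\xi_k^r$ after accounting for the fixed phase).

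The main obstacle, and the step demanding the most care, is pinning down the correct normalization of the constant phase $\chi_k$ so that the lowest surviving Taylor coefficient is precisely $a_r$ (order exactly $r$) rather than some other residue class representative. This requires computing the transformation factor coming from $f_n(z)/f_n(\xi_k^{-1}z)$ at leading order near $z=0$ and reconciling it with the definition of $\chi_k$ fixed in \eqref{u-Gequiv'}; the holomorphic $z^2$ contribution in $f_n$ must be handled exactly, since it shifts which eigenvalue corresponds to which vanishing order. Once the phase bookkeeping is settled, the conclusion that the eigenfunction for $\xi_k^r$ has its first nonzero Taylor coefficient at degree $r$ (hence vanishes to order exactly $r$ at the origin) follows immediately from the congruence $j \equiv r \pmod k$ together with minimality of the generic lowest allowed degree. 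I would close by noting that this order-of-vanishing dictionary is exactly what makes the subsequent dimension count of $\tilde V_{n,k,r}$ tractable, since it translates the eigenspace condition into a constraint on zeros of holomorphic theta functions, which are rigid.
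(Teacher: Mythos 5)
Your treatment of unitarity, of $(T_{n,k})^k=\one$ via \eqref{rho'-nilp}, and of the spectrum lying in the $k$-th roots of unity matches the paper, and your idea of Taylor-expanding $\theta$ at $z=0$ is exactly the paper's method for the last claim. But the way you execute that last step contains a genuine error. Carrying out the conjugation explicitly, with $f_n(z)=e^{-\frac{C}{2}(|z|^2-z^2)}$ and $C=\pi n/\im\tau$ as in \eqref{Vn-tildeVn}, one finds
\begin{equation*}
(T_{n,k}\theta)(z)=e^{-i\chi_k}\,e^{c z^2}\,\theta(\xi_k^{-1}z),\qquad c=\tfrac{C}{2}\bigl(\xi_k^{-2}-1\bigr),
\end{equation*}
and $c\neq 0$ whenever $k\geq 3$ (in particular for $k=6$, the case the paper needs). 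The factor $e^{cz^2}$ is a \emph{nonconstant holomorphic function}, not a constant phase as your "essentially $\theta(z)\mapsto(\text{phase})\,\theta(\xi_k^{-1}z)$" suggests; in the monomial basis $T_{n,k}$ is therefore triangular rather than diagonal, coupling $a_j$ to $a_{j-2},a_{j-4},\dots$. Consequently your claim that the eigenvalue equation forces $a_j=0$ unless $j\equiv r\pmod{k}$ is false: if $a_m$ is the lowest nonzero coefficient, so that $\lambda=e^{-i\chi_k}\xi_k^{-m}$, then comparing coefficients of $z^{m+2}$ in $T_{n,k}\theta=\lambda\theta$ yields $a_{m+2}=\frac{c}{1-\xi_k^{-2}}\,a_m\neq 0$, and $m+2\not\equiv m\pmod{k}$ for $k\geq 3$.

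Relatedly, the "main obstacle" you set yourself — choosing the normalization of $\chi_k$ so that the vanishing order is \emph{exactly} $r$ — cannot be resolved, because exactness genuinely fails: the order of the zero at the origin is determined only modulo $k$. The paper's own table \eqref{theta-table} shows this: $\theta_2^3$ spans part of $V_{6,6,0}$ and vanishes to order $6$ (not $0$) at the origin, and consistently the definition of $V_{n,k,r}^{\divv}$ requires only $|\divv(0)|\equiv r\bmod k$. What survives — and is all the paper's one-line proof establishes, and all the subsequent classification uses — is the leading-order statement: since $e^{cz^2}=1+O(|z|^2)$, the lowest-order term of the eigenvalue equation reads $\lambda a z^m=e^{-i\chi_k}\xi_k^{-m}a z^m$ (in the paper's normalization, $\lambda a z^m=a\xi_k^m z^m$), so the eigenvalue determines the order of vanishing mod $k$ and conversely. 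Your argument becomes correct, and coincides with the paper's, if you drop the all-coefficients congruence and the exact-order phase bookkeeping, and read "zero of order $r$" in the lemma as "order $\equiv r\pmod{k}$".
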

 \begin{proof}  
Eq. \eqref{rho'-nilp} and the definition $T_{n, k}:= f_n ^{-1}\tilde\rho_{k,j}' f_n $ 
To show the second claim, let $\lambda$ be any eigenvalue. Expanding $\theta(z) = az^m + O(|z|^{k+1})$, where $a \not= 0$ and $m \geq 0$, and $e^x = 1 + O(|x|)$ and writing out the eigenvalue equation, we see that to lowest order in $z$,
\begin{align}
	\lambda a z^m =  a \xi^m_k z^m
\end{align}
Hence  $\lambda = \xi^m_k$.
\end{proof}
 \begin{corollary}\label{lem:VeigenspT}
 Let $n$ be even. Then  $V_{n, k, r}$ are eigenspaces of the operator $T_{n, k }$ corresponding to the eigenvalues $\xi_k^r$. \end{corollary}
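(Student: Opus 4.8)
The plan is to read the corollary as a direct transport of the eigenspace decomposition of the rotation action from the $\psi$-picture to the $\theta$-picture through conjugation by $f_n$, now that Lemma \ref{Tevs} supplies all the spectral information about $T_{n, k}$. There is no genuinely new analysis to perform; the whole task is bookkeeping of the definitions together with the observation that conjugation preserves eigenvalues.

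First I would unwind the definitions on the $\psi$-side. Since $n$ is even, Corollary \ref{cor:NullL-invar} guarantees that the finite-dimensional eigenspace $\tilde{V}_n = \Null_{\sH^{2}_{n}}(L^n - n)$ is invariant under $\rho_{k}'$ (and its gauge dressing $\tilde\rho_{k}'$), so $\rho_{k}'$ restricts to a unitary operator on $\tilde{V}_n$. By definition $X_{n, k, r}'$ is the $\xi_k^r$-eigenspace of $\rho_{k}'$ in $\sH^{2}_{n}$, so intersecting with $\tilde{V}_n$ gives
\[
\tilde{V}_{n, k, r} = \Null_{X_{n, k, r}'}(L^n - n) = \{\psi \in \tilde{V}_n : \rho_{k}'\psi = \xi_k^r\psi\},
\]
i.e. $\tilde{V}_{n, k, r}$ is precisely the $\xi_k^r$-eigenspace of $\rho_{k}'\big|_{\tilde{V}_n}$.

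Next I would push this through $f_n$. By \eqref{Vn-tildeVn} and \eqref{Vnkr-tildeVnkr} we have $\tilde{V}_n = f_n V_n$ and $\tilde{V}_{n, k, r} = f_n V_{n, k, r}$, while $T_{n, k} = f_n^{-1}\rho_{k}' f_n$ acts on $V_n$. Multiplying the characterization above by $f_n^{-1}$ turns the eigenvalue equation $\rho_{k}'\psi = \xi_k^r\psi$ into $T_{n, k}\theta = \xi_k^r\theta$, whence
\[
V_{n, k, r} = \{\theta \in V_n : T_{n, k}\theta = \xi_k^r\theta\} = \Null_{V_n}(T_{n, k} - \xi_k^r).
\]
Finally, Lemma \ref{Tevs} ensures $T_{n, k}$ is unitary with $(T_{n, k})^k = \one$, so $V_n = \bigoplus_{r=0}^{k-1} \Null_{V_n}(T_{n, k} - \xi_k^r)$ and every $\xi_k^r$ genuinely occurs; this confirms that the $V_{n, k, r}$ are exactly the eigenspaces of $T_{n, k}$, which is the assertion.

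The one point demanding care is the constant gauge phase relating $\rho_{k}'$ and $\tilde\rho_{k}' = e^{-i\chi_k}\rho_{k}'$ that enters the definition of $T_{n, k}$: since $\xi_k = e^{i\chi_k}$ this phase is a constant (Proposition \ref{prop:xig-xindep} lets us take $\chi_k$ constant), so using one operator in place of the other merely relabels the index $r$ by a fixed integer shift. I would fix the convention so that the labeling in $X_{n, k, r}'$ and in Lemma \ref{Tevs} coincide. Beyond this relabeling, and the use of the hypothesis that $n$ is even to keep $\rho_{k}'$ acting inside $\sH^{2}_{n}$, the corollary is immediate.
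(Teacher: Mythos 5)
Your proposal is correct and follows exactly the route the paper intends: the paper states this corollary without a separate proof, treating it as immediate from the definitions \eqref{tildeV-V}, the transfer relations \eqref{Vn-tildeVn}--\eqref{Vnkr-tildeVnkr}, the invariance of $\Null_{\sH^{2}_{n}}(L^n-n)$ for $n$ even (Corollary \ref{cor:NullL-invar}), and the spectral facts about $T_{n,k}$ in Lemma \ref{Tevs} --- precisely the bookkeeping you carry out. Your explicit handling of the constant gauge phase relating $\rho_k'$ and $\tilde\rho_k'$ (via Proposition \ref{prop:xig-xindep}) is a welcome clarification of a point the paper glosses over, but it does not change the argument.
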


We recall that the Wigner--Seitz cell around a lattice point is defined as the locus of points in space that are closer to that lattice point than to any of the other lattice points. To eliminate the overlap between the Wigner--Seitz cells around  different points, we agree on the choice of their boundaries. Say, observing that the Wigner--Seitz cell is a {\it(slanted)} hexagon, 
we set the boundary of a  Wigner--Seitz cell to contain the {\it three left-most edges and the two left-most vertices} (see e.g. Fig \ref{fig:Theta0}). 
Hence  Wigner--Seitz cells 
 of a lattice tile $\R^2$ without an intersection.

By a standard result about theta functions (see Theorem \ref{unique} of Appendix \ref{sec:theta-fns}) or line bundles,  theta functions are entirely determined by their zeros, $z_j$, and multiplicities, $m(z_j)$, in a  Wigner--Seitz cell, $W$. By analogy with holomorphic sections of line bundles, we call the collection of  zeros and multiplicities of a  theta function, $\theta$, its divisor and denote div$(\theta)=\sum_{z\in W} m(z) z$. The degree of a  theta function, $\theta$, is defined as the degree of its divisor,  $|div(\theta)|=\sum_{z\in W} m(z) $. Then $\theta\in V_n \iff |div(\theta)|=n$.

Corollary \ref{lem:VeigenspT} 
 and Lemma \ref{Tevs}  and standard results about theta functions mentioned above imply 
\begin{corollary}\label{cor:theta-zeros}
 $\theta\in V_{n, k, r} \iff$ the following three conditions hold: (a) $|div(\theta)|=n$ (i.e. $\theta$ has $n$ zeros counting their the multiplicities); (b) $m(0)=r$ (i.e. $\theta$ has the zero of the multiplicity $r$ at the origin); (c) $div(T_{n, k}(\theta))=div(\theta)$ (i.e. $div(\theta)$ is invariant under the transformation $T_{n, k}$ (i.e. rotation by $2\pi/k$)). \end{corollary}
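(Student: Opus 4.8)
The plan is to assemble the statement from the eigenvalue description already in hand. By Corollary \ref{lem:VeigenspT}, $V_{n,k,r}$ is exactly the eigenspace of $T_{n,k}$ for the eigenvalue $\xi_k^r$ sitting inside $V_n$, so $\theta \in V_{n,k,r}$ if and only if $\theta \in V_n$ and $T_{n,k}\theta = \xi_k^r\theta$. The first clause is, by the divisor/degree dictionary recorded just above (a theta function lies in $V_n$ precisely when it has $n$ zeros, with multiplicity, in a Wigner--Seitz cell), identical with condition (a). Thus it remains, for $\theta \in V_n$ fixed, to show that the eigenvalue equation $T_{n,k}\theta = \xi_k^r\theta$ is equivalent to the pair (b)--(c).

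First I would record the action of $T_{n,k}$ on divisors. Since $T_{n,k} = f_n^{-1}\tilde\rho_k' f_n$ with $\tilde\rho_k' = e^{-i\chi_k}\rho_k'$ and $\rho_k'\psi(x) = \psi(R_k^{-1}x)$ (see \eqref{rho'-action-gen}), a short computation gives $T_{n,k}\theta(z) = e(z)\,\theta(\xi_k^{-1}z)$, where $e(z)$ is a nowhere-vanishing (Gaussian) factor. Hence $div(T_{n,k}\theta)$ is $div(\theta)$ rotated by $2\pi/k$, and the origin is a fixed point of this rotation; in particular condition (c) says precisely that $div(\theta)$ is invariant under rotation by $2\pi/k$.

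With this in place the two implications are short. If $T_{n,k}\theta = \xi_k^r\theta$, then $T_{n,k}\theta$ and $\theta$ are nonzero scalar multiples of one another, hence have the same divisor, which is (c); and Lemma \ref{Tevs}, whose local expansion at $z=0$ identifies the eigenvalue with $\xi_k$ raised to the order of vanishing there, gives (b). Conversely, suppose (a)--(c) hold. By (c) and the uniqueness of a theta function with prescribed divisor (Theorem \ref{unique}), $T_{n,k}\theta = c\,\theta$ for a single constant $c$; since $T_{n,k}$ is unitary with $(T_{n,k})^k = \one$ by Lemma \ref{Tevs}, $c$ is a $k$-th root of unity, so $\theta$ is an eigenfunction of $T_{n,k}$, with eigenvalue $\xi_k^{r'}$, say. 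Lemma \ref{Tevs} then ties $r'$ to the order of vanishing of $\theta$ at the origin, and comparison with (b) forces $r' = r$, i.e. $\theta \in V_{n,k,r}$.

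The delicate step is this last identification of the eigenvalue index $r'$ with the origin multiplicity appearing in (b). The leading-coefficient equation underlying Lemma \ref{Tevs} reads $\lambda = \xi_k^{m(0)}$, which determines $m(0)$ only modulo $k$, whereas (b) asserts the exact equality $m(0) = r$ with $0 \le r < k$. To exclude the competing values $m(0) = r+k, r+2k, \dots$ I would invoke (a) and (c) jointly: rotation invariance forces the non-origin zeros to occur in full $C_k$-orbits, apart from those lying at the finitely many non-origin points of nontrivial stabilizer (whose orbit sizes divide $k$), so the degree constraint $|div(\theta)| = n$ confines $m(0)$ to a short list of admissible values. For the small $n$ and for $k=6$ occurring in Theorem \ref{nEvenTildeV-classif} this list contains a single representative of each residue class, leaving $m(0) = r$ as the only option. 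This orbit-counting is the one place where the three conditions genuinely interact rather than matching the eigenvalue description term by term, and it is precisely the input on which the dimension count of Theorem \ref{nEvenTildeV-classif} will rest.
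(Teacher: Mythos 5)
Your first three paragraphs reconstruct exactly the derivation the paper intends (the paper itself offers no more than a one-line citation of Corollary \ref{lem:VeigenspT}, Lemma \ref{Tevs} and Theorem \ref{unique}): the eigenspace characterization of $V_{n,k,r}$, the observation that $T_{n,k}\theta(z)$ equals a nowhere-vanishing factor times $\theta(\xi_k^{-1}z)$ so that its divisor is the rotated divisor, and the use of the uniqueness theorem in the converse direction to upgrade divisor-invariance to the scalar relation $T_{n,k}\theta = c\,\theta$ with $c^k=1$. That part is correct and is the whole content of the corollary, provided condition (b) is read as a congruence.

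The gap is in your final paragraph, and you have in fact located the genuine weak point of the statement but resolved it incorrectly. Lemma \ref{Tevs} only yields $m(0)\equiv r \pmod k$, and your claim that (a) and (c) jointly confine $m(0)$ to a single representative of its residue class is false: the empty non-origin divisor is trivially rotation-invariant, so nothing prevents all the extra multiplicity from sitting at the origin. Concretely, for $k=6$, $n=8$, $r=0$, both $m(0)=0$ and $m(0)=6$ occur — the paper's own table \eqref{theta-table} lists $\theta_0^4$ (no zero at the origin) and $\theta_0\theta_2^3$ (six-fold zero at the origin) as spanning the \emph{same} space $V_{8,6,0}$; likewise $\theta_2^4\in V_{8,6,2}$ has an eight-fold zero at the origin while $\theta_1^2\in V_{8,6,2}$ has a double one. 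So the forward implication with (b) taken as the exact equality $m(0)=r$ is false, and no orbit-counting argument can rescue it. The correct reading — and the one the paper itself adopts when it formalizes the divisor side, defining $V^{\divv}_{n,k,r}$ by $|\divv(0)|\equiv r \bmod k$ — is that (b) asserts only the congruence $m(0)\equiv r \pmod k$. Under that interpretation your first three paragraphs already constitute a complete proof and the final paragraph should simply be deleted; as it stands, it purports to prove a strengthening that the paper's own examples contradict.
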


\subsubsection{$C_6$}

By Corollary \ref{cor:theta-zeros}, we want to translate the eigenvalue problem $T_{n,6} \theta = \xi^r \theta$ into the existence of divisors corresponding to the zeros of $\theta$. This would allows us to find 1-1 correspondence between all such $\theta$ and simple diagrams for our analysis.

Let div (divisor) denote a finite collection of points in the Wigner-Seitz cell $W$, centered at the origin, together with their multiplicities, i.e. a map from $W$ to $\Z^+$ with a finite number of non-zero values. We can identify the divisors with the diagrams as in Fig \ref{fig:diagrams} (the WS cell with the choice of points and multiplicities), the latter provide handy illustrations. Then we obtain a map
\begin{align} \label{eqn:Divv}
	\Divv : \text{theta functions} \ra \text{divisors/diagrams},
\end{align}
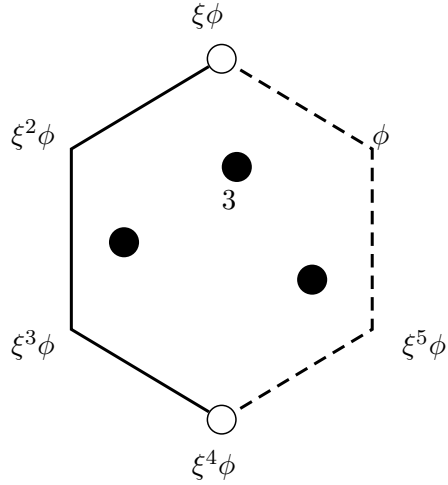
\begin{figure}[h]
	\centering
		\psscalebox{1.0 1.0} 
{
\begin{pspicture}(0,-3.16)(6.85,3.16)
\psline[linecolor=black, linewidth=0.04](2.8,2.44)(0.8,1.24)(0.8,-1.16)(2.8,-2.36)(2.8,-2.36)
\psline[linecolor=black, linewidth=0.04, linestyle=dashed, dash=0.17638889cm 0.10583334cm](2.8,2.44)(4.8,1.24)(4.8,-1.16)(2.8,-2.36)
\psdots[linecolor=black, dotstyle=o, dotsize=0.4, fillcolor=white](2.8,2.44)
\psdots[linecolor=black, dotstyle=o, dotsize=0.4, fillcolor=white](2.8,-2.36)
\rput[bl](4.8,1.24){$\phi$}
\rput[bl](2.4,2.84){$\xi \phi$}
\rput[bl](0.0,1.24){$\xi^2 \phi$}
\rput[bl](0.0,-1.56){$\xi^3 \phi$}
\rput[bl](2.4,-3.16){$\xi^4 \phi$}
\rput[bl](5.2,-1.56){$\xi^5 \phi$}
\psdots[linecolor=black, dotsize=0.4](3,1)
\rput[bl](2.8,0.44){$3$}
\psdots[linecolor=black, dotsize=0.4](1.5,0)
\psdots[linecolor=black, dotsize=0.4](4,-0.5)
\end{pspicture}}
\caption{Typical diagram of a divisor. The black dots denote nonzero point on $W$. Each black dot is assumed to have multiplicity $1$ unless otherwise indicated by a number next to it.}
\label{fig:diagrams}
\end{figure}
Since we are interested in eigenvectors of $T_{n,6}$, we restrict $\divv$ to the set of eigenvalues of $T_{n,6}$. In particular, let \[V_{n, k, r}^{\divv} :=\{\divv: |\divv|=n,\  |\divv(0)| \equiv r \bmod k,\  T_{n, k}\divv = \divv\}.\] We have the following result, proven in Section \ref{sec:V-Vdiv}:

\begin{theorem}[Classification Theorem for $C_6$-invariant Theta Functions] \label{thm:Gclassify-theta}
The map $\Divv : V_{n, 6, r} \ra V_{n, 6, r}^{\divv}$ is a bijection, and in particular \[\dim V_{n, 6, r}=\dim V_{n, 6, r}^{\divv}.\] 
\end{theorem}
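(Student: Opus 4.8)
The plan is to realize $\Divv$ as the restriction to the $C_6$-isotypic subspace $V_{n,6,r}$ of the classical map sending a theta function to its zero divisor, and to run the usual dictionary between holomorphic sections and divisors in this equivariant setting. The two inputs I would lean on are Theorem~\ref{unique}, which says a theta function of degree $n$ is determined by its divisor up to a multiplicative constant (existence of a theta function with a prescribed admissible divisor being the companion classical fact), and Corollary~\ref{cor:theta-zeros}, which translates membership in $V_{n,6,r}$ into the three divisor conditions $|D|=n$, $m(0)\equiv r \pmod 6$, and $T_{n,6}D=D$. Together these give at once \emph{well-definedness} ($\Divv$ maps $V_{n,6,r}$ into $V_{n,6,r}^{\divv}$) and \emph{injectivity up to scalars} (two elements of $V_{n,6,r}$ with the same divisor are proportional by Theorem~\ref{unique}, so $\Divv$ descends to an injection on $\mathbb{P}(V_{n,6,r})$).

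The substance is therefore \emph{surjectivity}: given $D\in V_{n,6,r}^{\divv}$, I must produce $\theta\in V_{n,6,r}$ with $\Divv(\theta)=D$. Existence of \emph{some} degree-$n$ theta function with divisor $D$ is, by the standard theory, equivalent to the Abel condition
\[\sum_{z} m(z)\, z \equiv s_0 \pmod{\Z+\tau\Z},\qquad s_0=\tfrac{n}{2}(1+\tau),\]
where $s_0$ is read off from the factor of automorphy in Proposition~\ref{prop:Landau-ham-spec} (for $n=1$ the single zero sits at $(1+\tau)/2$). The crucial point is that this is automatic for $C_6$-invariant divisors precisely when $n$ is even. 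Indeed, $T_{n,6}D=D$ means $D$ is a sum of $C_6$-orbits, and I would check each orbit has vanishing sum modulo $\Z+\tau\Z$: the fixed point $0$ contributes $0$; the pair of nonzero $C_3$-fixed points, which (being non-$2$-torsion) are interchanged by $z\mapsto -z$, sums to $0$; the triple of nonzero half-periods $\{\tfrac12,\tfrac\tau2,\tfrac{1+\tau}2\}$ sums to $1+\tau\equiv 0$; and each generic orbit $\{\xi_6^{\,j}\phi\}_{j=0}^{5}$ sums to $\phi\sum_{j}\xi_6^{\,j}=0$, with $\xi_6=e^{2\pi i/6}$. Hence the left-hand side is $\equiv 0$, while $s_0=\tfrac n2(1+\tau)\in\Z+\tau\Z$ exactly when $n$ is even; the two agree and the required $\theta$ exists.

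It then remains to place $\theta$ in the correct isotypic component. Since $\Divv(\theta)=D$ is $C_6$-invariant, $T_{n,6}\theta$ has the same divisor as $\theta$, so by Theorem~\ref{unique} $T_{n,6}\theta=c\,\theta$ for a constant $c$; from $(T_{n,6})^{6}=\one$ (Lemma~\ref{Tevs}) we get $c=\xi_6^{\,r'}$, and Lemma~\ref{Tevs} identifies the eigenvalue through the vanishing order at the origin, $r'\equiv m(0)\equiv r \pmod 6$. Thus $\theta\in V_{n,6,r}$ and $\Divv$ is onto $V_{n,6,r}^{\divv}$. Combining the three steps, $\Divv$ is surjective with fibres exactly the punctured complex lines of proportional theta functions; this gives the bijective correspondence underlying the statement and the dimension identity $\dim V_{n,6,r}=\dim V_{n,6,r}^{\divv}$. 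In the rigid situations relevant to Theorem~\ref{nEvenTildeV-classif} the degree is too small to admit a generic orbit, so the target is a single divisor and the identity reads $\dim V_{n,6,r}=1$.

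I expect the main obstacle to be surjectivity, and within it the verification that $C_6$-invariance forces the Abel sum condition: this is exactly where the hypothesis ``$n$ even'' enters and must be tracked carefully. A secondary but genuine technical point is the bookkeeping of the $C_6$-action on the Wigner--Seitz cell under the lattice identification --- correctly enumerating the exceptional orbits (the origin, the pair of $C_3$-fixed points, and the triple of half-periods) together with their sizes $1,2,3$ is what makes both the orbit-sum computation above and, ultimately, the counting behind Theorem~\ref{nEvenTildeV-classif} go through.
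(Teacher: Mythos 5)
Your proposal is correct, but it reaches the heart of the matter --- surjectivity --- by a genuinely different route from the paper's. The paper never invokes the Abel condition: it proves surjectivity constructively, exhibiting an equivariant theta function for each orbit-type divisor (a double zero at the origin via the classification of singular theta functions, Theorem~\ref{singular}; the vertex pair via a $2\times 2$ determinant built from the basis of $V_2$; the half-period triple via a Wronskian; a generic size-$6$ orbit via a $6\times 6$ determinant whose antisymmetry under $z\mapsto -z$ forces a kernel), and then runs a division/descent argument: an arbitrary equivariant theta function is divided by these building blocks until the remainder can have no size-$6$, size-$2$, or removable multiple zeros, and the terminal bad case --- a single simple zero at the origin --- is excluded by Proposition~\ref{prop:zero-psi}. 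Your replacement for all of this --- checking that every $C_6$-orbit (the origin; the pair of $C_3$-fixed points $\pm\frac{1}{3}(1+\tau)$, exchanged by negation; the half-period triple summing to $1+\tau$; generic sextuples killed by $\sum_j \xi_6^j=0$) has vanishing Abel sum modulo $\Z+\tau\Z$, while the required constant $\frac{n}{2}(1+\tau)$ is a lattice point exactly when $n$ is even --- is shorter, conceptually cleaner, and makes the role of the parity of $n$ transparent (in the paper it is hidden in Lemma~\ref{lem:Xn-invar} and Corollary~\ref{lem:VeigenspT}); your final eigenvalue-placement step (invariance of the divisor plus Theorem~\ref{unique} gives $T_{n,6}\theta=\xi_6^{r'}\theta$, and Lemma~\ref{Tevs} pins $r'\equiv m(0)\equiv r$) is the same mechanism as Corollary~\ref{cor:theta-zeros}. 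Two caveats. First, you import the converse half of Abel's theorem --- existence of a degree-$n$ theta function with prescribed zeros of admissible sum --- which the paper's deliberately self-contained appendix does not prove (Theorem~\ref{unique} gives only the zero count and uniqueness); to stay within the paper's toolkit you should supply it, e.g.\ by taking $\prod_j \theta_{1,0}(z-w_j)$ with $w_j=z_j-\frac{1}{2}(1+\tau)$ and, writing $\sum_j w_j=p+q\tau$, correcting by the factor $e^{-2\pi i q z}$, exactly in the spirit of Lemma~\ref{nsqr} and Theorems~\ref{PS} and~\ref{property}. This is a standard and fillable step, not a gap in substance. Second, note that your projective phrasing (fibres of $\Divv$ are punctured complex lines) is in fact more careful than the paper's literal ``bijection.'' Finally, one thing the paper's longer road buys that yours does not: the explicit spanning theta functions tabulated in Appendix~\ref{app:ThetaTable}, which are needed downstream in the irreducibility argument (Theorem~\ref{nEvenLin}); with your abstract existence argument those generators would still have to be produced separately.
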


To compute $\dim V_{n, k, r}^{\divv}$ it is convenient to give each point of $W$ the index which is the number of elements in the orbit under $T_{n, k}$ generated by this point. Thus, for $k=6$, all interior points of $W$ and all boundary points, besides the vertices and the midpoints of the edges, have the index $6$.  The boundary vertices and the midpoints of the edges  have the indices $2$ and $3$, respectively, and the origin has the index $1$.

By the orbit-stabilizer theorem, there is no divisor with index 4 or 5 where the multiplicity is simple at each point, since $4$ and $5$ do not divide $6$.

We identify orbits with the same index. Denote the multiplicity of points in the orbit of the index $i$ by $m_i$, so that $m_1=r$. Then we have the relation
\begin{align}\label{mi-cond}
	\sum_i i m_i \equiv 1 \cdot m_1+2 m_2+2 m_3+6 m_6=n.
\end{align}
We use this equation to classify the diagrams to obtain 

\begin{theorem} \label{nEvenV-classif}
 Let $n$ be even. Then 	$V_{n,k,r}$ is one dimensional for $k=6$ and for 
	 the pairs 
	\begin{align}
	(n, r) =& (2, 0), (2, 2), (4, 0), (4, 1), (4, 2), (4,4), \\
		& (6,1), (6,2), (6,3), (6,4) \\
		& (8,1), (8,3), (8,4), (8,5) \\
		& (10, 3), (10,5)
\end{align}  
\end{theorem}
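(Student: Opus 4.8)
The plan is to derive everything from the Classification Theorem \ref{thm:Gclassify-theta}, which reduces the computation of $\dim V_{n,6,r}$ to the enumeration of $C_6$-invariant divisors, and then to run the arithmetic of \eqref{mi-cond}. First I would record the orbit structure of the $C_6$-action on the Wigner--Seitz cell, viewed as the torus $\C/\cL$: the origin is the unique fixed point (index $1$); the Voronoi vertices (deep holes) form a single orbit of size $2$, stabilized by the order-$3$ rotation (index $2$); the nonzero $2$-torsion points (edge midpoints) form a single orbit of size $3$, stabilized by the order-$2$ rotation (index $3$); and every other point lies on a free orbit of size $6$ (index $6$). A divisor with $T_{n,6}\divv=\divv$ must assign a common multiplicity to all points of each orbit, so it is specified by $m_1$ (origin), $m_2$ (vertex orbit), $m_3$ (midpoint orbit) and multiplicities on finitely many generic orbits of total index-weighted degree $6m_6$. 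The two defining conditions of $V_{n,6,r}^{\divv}$ then read $m_1\equiv r\ (\mathrm{mod}\ 6)$ together with the degree relation, which with the orbit sizes $1,2,3,6$ as coefficients is $m_1+2m_2+3m_3+6m_6=n$ (cf. \eqref{mi-cond}), all multiplicities being non-negative integers.

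The key step is to identify when $\dim V_{n,6,r}^{\divv}=1$. A diagram with no generic orbit ($m_6=0$) is rigid: its zeros are forced onto the finitely many special points, so the profile $(m_1,m_2,m_3)$ determines $\divv$ uniquely. A diagram containing a generic orbit ($m_6\ge1$) instead carries the position of that orbit as a free complex parameter, and therefore sweeps out a positive-dimensional family of admissible divisors. Since distinct invariant divisors correspond to linearly independent theta functions, two or more admissible profiles, or a single profile with $m_6\ge1$, each force $\dim V_{n,6,r}\ge2$, while no admissible profile gives dimension $0$. Hence
\[
\dim V_{n,6,r}^{\divv}=1 \iff \text{the system above has exactly one solution } (m_1,m_2,m_3,m_6), \text{ and it has } m_6=0.
\]
I invoke Theorem \ref{thm:Gclassify-theta} as a black box for the fact that every combinatorially admissible profile is actually realized by a theta function (surjectivity of $\Divv$); this is precisely where the hypothesis that $n$ is even enters, through the vanishing modulo $\cL$ of the Abel--Jacobi sum of a $C_6$-invariant divisor.

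The final step is the finite enumeration. For each even $n\in\{2,4,6,8,10\}$ and each residue $r\in\{0,\dots,5\}$ I would list the admissible $m_1\in\{r,r+6,\dots\}\cap[0,n]$ and, for each, count the representations of $n-m_1$ as $2m_2+3m_3+6m_6$; the pair $(n,r)$ enters the conclusion exactly when this yields a single representation with $m_6=0$. Carrying this out reproduces the asserted list: e.g. $(6,1)$ forces $m_1=1$ and $2m_2+3m_3+6m_6=5$, whose only solution is $(1,1,0)$, giving dimension $1$, whereas $(6,0)$ already admits $(m_2,m_3,m_6)\in\{(3,0,0),(0,2,0),(0,0,1)\}$ and is excluded. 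Because $\dim V_{n,6,r}=\dim\tilde V_{n,6,r}$ (the two spaces differ only by the nonvanishing factor $f_n$ of \eqref{transfer}), this argument proves Theorem \ref{nEvenTildeV-classif} simultaneously. I expect the main obstacle to lie not in the arithmetic but in the reduction of the previous paragraph, namely in establishing cleanly that a movable generic orbit strictly raises the dimension and that admissibility of a profile is equivalent to solvability of \eqref{mi-cond}; once that equivalence is in hand, the case check is short.
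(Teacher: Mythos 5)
Your proposal is correct and takes essentially the same approach as the paper: it invokes Theorem \ref{thm:Gclassify-theta} to reduce the computation to counting $C_6$-invariant divisors, decomposes the Wigner--Seitz cell into orbits of indices $1,2,3,6$, and enumerates solutions of the degree relation \eqref{mi-cond} with $m_1\equiv r \pmod 6$ (your form $m_1+2m_2+3m_3+6m_6=n$ is the correct one; the paper's displayed ``$2m_3$'' is a typo), the listed pairs being exactly those with a unique solution, necessarily having $m_6=0$. Your additional observations --- that a free orbit ($m_6\ge 1$) produces a continuous family of distinct divisors and hence $\dim\ge 2$, and the explicit uniqueness criterion --- simply make precise what the paper leaves implicit in ``we use this equation to classify the diagrams,'' and your enumeration reproduces the paper's list exactly.
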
	
This result implies Theorem \ref{nEvenTildeV-classif}.
A table describing the explicit spanning theta functions for $V_{n,6,r}$ can be found in Appendix \ref{app:ThetaTable}

\subsubsection{$C_2$}
For $\xi_2 = -1$, the corresponding irreducible representations of $C_2$ are simply even and odd functions. By the correspondence \eqref{transfer},
the evenness and oddness of $\psi$ translates to the same property of $\theta$. Hence, we easily see that linear compatibility is satisfied as $V_n$ can be decomposed into odd and even functions 

\begin{lemma} 
Let $V_n=V_{n, \rm even}\oplus V_{n, \rm odd}$ be the decomposition of  $V_n$ into even and odd functions. Then $\dim V_{n, \rm even/\rm odd}\ge 1$
\end{lemma}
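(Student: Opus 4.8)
The plan is to realize the asserted splitting $V_n=V_{n,\rm even}\oplus V_{n,\rm odd}$ as the eigenspace decomposition of the order-two operator $\iota\theta(z):=\theta(-z)$ on $V_n$, and then to exhibit nonzero elements in each eigenspace. First I would confirm that $\iota$ preserves $V_n$. Since $f_n$ is even and, for the generator $g=R_2$ of $C_2$ (multiplication by $\xi_2=-1$), one has $gx\cdot Jgt=x\cdot Jt$ together with $c_{-t}=c_t$ — the latter read directly off \eqref{cs-express}, and valid for \emph{every} $n$, not only $n$ even — the reflection $\rho_2'\psi(x)=\psi(-x)$ maps $\Lpsi{2}{n}$ to itself and preserves the gauge-periodicity \eqref{gaugeperiod-psi'}. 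Through the transfer \eqref{transfer} this is exactly $\theta(z)\mapsto\theta(-z)$ on $V_n$. As $\iota$ is unitary with $\iota^2=\one$, its $(\pm1)$-eigenspaces are precisely $V_{n,\rm even}$ and $V_{n,\rm odd}$ (this is the $k=2$ instance of Lemma \ref{Tevs}).

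Next I would compute $\iota$ in the classical basis of theta functions with characteristics,
\begin{equation*}
  \theta_j(z)=\sum_{m\in\Z}e^{\pi i n\tau (m+j/n)^2+2\pi i n(m+j/n)z},\qquad j=0,1,\dots,n-1 ,
\end{equation*}
which one verifies satisfy the two quasi-periodicity relations of Proposition \ref{prop:Landau-ham-spec}; since their Fourier frequencies lie in distinct residue classes mod $n$, they are linearly independent and hence form a basis of $V_n$. A short reindexing $m\mapsto -m-1$ in the defining sum yields $\iota\theta_j=\theta_{(n-j)\bmod n}$, so $\iota$ acts on the basis as the index involution $j\mapsto(n-j)\bmod n$. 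The number of fixed indices (equivalently $\operatorname{Tr}\iota=\dim V_{n,\rm even}-\dim V_{n,\rm odd}$) is $1$ for $n$ odd (only $j=0$) and $2$ for $n$ even (namely $j=0$ and $j=n/2$).

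Reading off the eigenspaces then finishes the argument. The function $\theta_0$ is always $\iota$-invariant, so $\dim V_{n,\rm even}\ge 1$ unconditionally (and $\theta_{n/2}$ gives a second invariant when $n$ is even). For the odd part, for any index $j$ with $j\neq(n-j)\bmod n$ the combination $\theta_j-\theta_{(n-j)\bmod n}$ is a nonzero element of $V_{n,\rm odd}$; such a $j\in\{1,\dots,n-1\}$ exists exactly when $n\ge 3$, giving $\dim V_{n,\rm odd}\ge 1$. Equivalently, combining $\dim V_{n,\rm even}+\dim V_{n,\rm odd}=n$ with the trace identity shows both summands are positive once $n\ge 3$.

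The only genuine obstacle is the behaviour at small flux. For $n=1$ and $n=2$ the index map $j\mapsto(n-j)\bmod n$ fixes every basis index, so $V_n$ is entirely even and $V_{n,\rm odd}=0$; this matches the $n=1$ parity computation following Proposition \ref{prop:Landau-ham-spec} and the sum-of-zeros constraint $\sum_{\rm zeros}z\equiv\tfrac n2(1+\tau)$, which for $n\le 2$ rules out an odd theta function. Accordingly the lower bound on the odd subspace is to be understood for $n\ge 3$ — precisely the regime ($n=3$, arbitrary lattice) in which the $C_2$-reduction is invoked in Theorem \ref{thm:MultiFluxExist} — while the even subspace is nonzero for all $n$.
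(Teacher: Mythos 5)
Your proof is correct and takes essentially the same route as the paper: both rely on the basis $\theta_{n,m}$ of Theorem \ref{Basis} together with the involution identity $\theta_{n,m}(-z)=\theta_{n,(n-m)\bmod n}(z)$, and span the even/odd parts by the combinations $\theta_j\pm\theta_{(n-j)\bmod n}$. Your extra care — verifying that $z\mapsto -z$ actually preserves $V_n$, and using the fixed-index count to observe that $V_{n,\mathrm{odd}}=0$ for $n\le 2$ (every difference $\theta_j-\theta_{(n-j)\bmod n}$ vanishes identically there), so the lower bound on the odd subspace holds only for $n\ge 3$ — identifies a caveat the paper's one-line proof silently glosses over, and it is consistent with the only use made of the lemma (the corollary for $n=3$).
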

\begin{proof}
Let $\theta_0,...,\theta_{n-1}$ be the standard basis for the set of theta functions as in Theorem \ref{Basis}. We recall that
\begin{align}
	\theta_m(-z) = \theta_{n-m \bmod n}(z)
\end{align}
This shows that the set of odd theta functions are spanned by
\begin{align}
	\sigma_{j,-}(z) = \theta_j(z) - \theta_{n-j}(z)
\end{align}
Similarly, the even functions are spanned by
\begin{align}
	\sigma_{j,+}(z) = \theta_j(z) + \theta_{n-j}(z)
\end{align}
\end{proof} 

\begin{corollary}
If $n=3$, then $\dim V_{n,k,1} = \dim V_{n,odd} = 1$.
\end{corollary}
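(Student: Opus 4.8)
The plan is to read off the answer directly from the $C_2$ decomposition established in the preceding lemma, so that almost all the work has already been done. First I would identify the space in question concretely. For $k=2$ the group $C_2$ has exactly two irreducible representations, the trivial one ($r=0$) and the sign representation ($r=1$); by Corollary \ref{lem:VeigenspT} the space $V_{n,2,r}$ is the eigenspace of $T_{n,2}$ with eigenvalue $\xi_2^r = (-1)^r$, and under the correspondence \eqref{transfer} this eigenvalue is precisely the parity of $\theta$ under $z \mapsto -z$. Hence $V_{n,2,1}$ coincides with the odd subspace $V_{n,\mathrm{odd}}$ appearing in the lemma, and it suffices to show $\dim V_{3,\mathrm{odd}} = 1$.

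Next I would specialize the lemma's spanning set to $n=3$. The lemma exhibits $V_{n,\mathrm{odd}}$ as the span of the functions $\sigma_{j,-}(z) = \theta_j(z) - \theta_{n-j}(z)$, with indices read modulo $n$. For $n=3$ this produces three candidate generators, which I would evaluate one at a time: $\sigma_{0,-} = \theta_0 - \theta_3 = \theta_0 - \theta_0 = 0$, since $\theta_3 = \theta_0$; $\sigma_{1,-} = \theta_1 - \theta_2$; and $\sigma_{2,-} = \theta_2 - \theta_1 = -\sigma_{1,-}$. Thus, up to a scalar, the only nonzero generator is $\theta_1 - \theta_2$, which gives the upper bound $\dim V_{3,\mathrm{odd}} \le 1$.

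Finally I would combine this with the lower bound $\dim V_{n,\mathrm{odd}} \ge 1$ supplied by the lemma to conclude $\dim V_{3,\mathrm{odd}} = 1$, and therefore $\dim V_{3,2,1} = 1$. As a consistency check that makes the count rigid, I would note that the even part is spanned by $\theta_0$ and $\sigma_{1,+} = \theta_1 + \theta_2$, so that $\{\theta_0,\ \theta_1+\theta_2,\ \theta_1-\theta_2\}$ is obtained from the basis $\{\theta_0,\theta_1,\theta_2\}$ of $V_3$ by an invertible change of coordinates; this pins down $\dim V_{3,\mathrm{even}} = 2$ and $\dim V_{3,\mathrm{odd}} = 1$, with the dimensions summing to $\dim V_3 = 3$.

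The computation is elementary, so there is no genuine obstacle here. The only points requiring care are the index arithmetic modulo $n$ (so that $\theta_3$ is correctly read as $\theta_0$ and $\sigma_{0,-}$ correctly vanishes) and the identification of the parity eigenvalue of $T_{n,2}$ with the representation label $r$, which is exactly what licenses renaming $V_{n,\mathrm{odd}}$ as $V_{n,2,1}$.
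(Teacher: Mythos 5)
Your proof is correct and takes essentially the same route as the paper, which states the corollary as an immediate consequence of the preceding lemma: you specialize the lemma's spanning set $\sigma_{j,-}(z)=\theta_j(z)-\theta_{n-j}(z)$ to $n=3$ exactly as intended. Your explicit verification that $\sigma_{0,-}=0$ and $\sigma_{2,-}=-\sigma_{1,-}$, leaving only $\theta_1-\theta_2$ (nonzero by the linear independence of the basis $\theta_0,\theta_1,\theta_2$ of $V_3$), merely fills in the index arithmetic modulo $n$ that the paper leaves implicit.
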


\subsection{Irreducibility}
Using \eqref{Vn-tildeVn}, irreducibility of $\psi$ translate to irreducibility of $\theta$. We say that $\theta$ is \textit{reducible (to $\cL'$)} if there is a finer lattice, $\cL'$, containing $\cL$ s.t. the corresponding $\psi$ is gauge periodic with respect to $\cL'$. Otherwise we say that $\theta$ is \textit{irreducible}. We now proof irreducibility below.

\subsubsection{Irreducibility for $C_6$ Symmetry}

\begin{theorem} \label{nEvenLin}
The spanning theta functinon of $V_{n,k,j}$ is  is irreducible 
	 for  the pairs
\[
	(n, j)=(4, 0), (6, 3), (8, 5), (10,5),
\]
\end{theorem}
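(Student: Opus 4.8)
The plan is to convert irreducibility into a combinatorial statement about the divisor of the (unique, by Theorem~\ref{nEvenV-classif}) spanning theta function and then rule out any hidden translational periodicity. First I would record the reduction. Suppose, for contradiction, that the solution $\psi$ attached to the spanning function of $V_{n,6,j}$ were reducible, i.e.\ gauge-periodic with respect to a finer lattice $\cL'\supsetneq\cL$. Then the physical quantity $n_s=|\psi|^2$ is $\cL'$-periodic, so its zero set---which is exactly the support of $\mathrm{div}(\theta)$ lifted to $\R^2$ via \eqref{transfer}---is $\cL'$-periodic as well. Consequently $\mathrm{div}(\theta)$, viewed on $\R^2/\cL$, is invariant under translation by every $v\in\cL'\setminus\cL$, each representing a nonzero element of finite order in $\R^2/\cL$. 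Thus it suffices to prove: for each listed pair $(n,j)$ the divisor of the spanning theta function admits no nontrivial translational symmetry by a finite-order element of $\R^2/\cL$. (This is the only direction we need, since gauge-periodicity trivially forces periodicity of the zeros.)

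Next I would pin down the divisors. Using the index/orbit-size bookkeeping of the preceding subsection---index $1$ for the origin, $2$ for the two threefold points (the Wigner--Seitz vertices $P,Q$, with $Q\equiv -P$ and $Q-P\equiv P \pmod{\cL}$), $3$ for the three twofold points (the half-periods $v_1,v_2,v_3$), $6$ for generic orbits---together with the degree relation $\sum_i i\,m_i=n$ from \eqref{mi-cond} and $m_1\equiv j \pmod 6$, the four divisors are forced (all with $m_6=0$):
\begin{align*}
(4,0):&\quad 2[P]+2[Q];\\
(6,3):&\quad 3[0]+[v_1]+[v_2]+[v_3];\\
(8,5):&\quad 5[0]+[v_1]+[v_2]+[v_3];\\
(10,5):&\quad 5[0]+[P]+[Q]+[v_1]+[v_2]+[v_3].
\end{align*}
These agree with the explicit spanning functions tabulated in Appendix~\ref{app:ThetaTable}.

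For the three cases $(6,3),(8,5),(10,5)$ the argument is immediate: the origin is the \emph{unique} zero of maximal multiplicity ($3$, $5$, $5$ respectively). A translation preserving the divisor must carry the set of maximal-multiplicity points onto itself, and since that set is the single point $\{0\}$, it fixes the origin modulo $\cL$; hence the translation is trivial and the function is irreducible.

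The genuine obstacle is the symmetric case $(4,0)$, where the divisor $2[P]+2[Q]$ has two zeros of equal multiplicity and the previous shortcut fails. Here a divisor-preserving translation $v$ must permute $\{P,Q\}$. The identity permutation gives $v\equiv 0$. The interchange $P+v\equiv Q$, $Q+v\equiv P$ forces $v\equiv Q-P$ and $v\equiv P-Q$ simultaneously, hence $2(Q-P)\equiv 0 \pmod{\cL}$; but $Q-P\equiv P$ and $2P\equiv Q\not\equiv 0$, a contradiction. (Equivalently, any genuine $v$ here has order $3$ in $\R^2/\cL$, forcing index $3$, while flux quantization demands the index divide $n=4$---a second, independent obstruction.) Therefore no nontrivial translation preserves the divisor, and the spanning function is irreducible, completing all four cases.
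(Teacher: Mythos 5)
Your proposal is correct, and its engine is the same as the paper's: gauge-periodicity with respect to a finer lattice $\cL'\supsetneq\cL$ forces the divisor of the spanning theta function, viewed on $\R^2/\cL$, to be invariant under translation by every nonzero class of $\cL'/\cL$ (the paper packages this as Lemma \ref{lem:FinerLattice}), and a contradiction is then extracted from the explicit multiplicity pattern. Your four divisors agree with the spanning functions $\theta_0^2$, $\theta_1\theta_2$, $\theta_1\theta_2^2$, $\theta_0\theta_1\theta_2^2$ of Appendix \ref{app:ThetaTable}, and your argument for $(6,3)$, $(8,5)$, $(10,5)$ --- the origin is the unique zero of maximal multiplicity, so a divisor-preserving translation must fix it mod $\cL$ --- is exactly the content of the paper's count ``flux $= m(0)+n>n$.''

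Where you genuinely improve on the paper is the pair $(4,0)$. The paper's proof rests on the blanket assertion that the multiplicity at the origin differs from the multiplicity at every other point of the cell, and on the count $m(0)+n>n$; for $\theta_0^2$ both degenerate, since $m(0)=0$ (the origin is not a zero at all, so generic points have the same multiplicity $0$, and $m(0)+n=n$ yields no contradiction). Your substitute argument --- a divisor-preserving translation must permute the multiplicity-$2$ set $\{P,Q\}$, the identity permutation gives $v\equiv 0$, and the swap forces $2v\equiv 0$ while $v\equiv Q-P\equiv P$ has order $3$ (with the independent check that flux quantization gives $[\cL':\cL]\mid 4$, excluding order-$3$ classes) --- is precisely the step the paper's one-line argument skips, so your write-up not only proves the statement but repairs a genuine gap in the paper's treatment of the case $(4,0)$.
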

	
\begin{proof}
To prove irreducibility, we need the following basic lemma:
\begin{lemma} \label{lem:FinerLattice}
Let $\mathcal{L} \subset \mathcal{L}'$ be lattices. Let $\Omega_{\mathcal{L}}$ be any fundamental cell of $\mathcal{L}$. Then precisely one of the following holds: there is a $v \in \mathcal{L}'$ such that $v \in \Omega_{\mathcal{L}} \backslash \mathcal{L}$ or $\mathcal{L}' =\mathcal{L}$.
\end{lemma}
\begin{proof}
Assume that no such $v \in \mathcal{L}'$ with $v \in \Omega_{\mathcal{L}} \backslash \mathcal{L}$ exists. That is, every $v \in \mathcal{L}'$ such that $v \in \Omega_{\mathcal{L}}$ is contained in $\mathcal{L}$. Since translates of $\Omega_{\mathcal{L}}$ tiles the entire plane and $\mathcal{L} \subset \mathcal{L}'$, we conclude that every element of $\mathcal{L}'$ is in $\mathcal{L}$. That is, $\mathcal{L} = \mathcal{L}'$.
\end{proof}

Now, 
 by choice of theta functions indicated in table \eqref{theta-table}, we see that for vortex number $n$, the number of zeros of the chosen theta at the origin differs from the number of zeros at any other point in $\Omega_{\mathcal{L}}$. If $\mathcal{L}'$ is any finer lattice containing $\mathcal{L}$ with respect to which our solution is gauge periodic, then Lemma \ref{lem:FinerLattice} implies that the number of flux per fundamental cell of $\mathcal{L}$ for our chosen theta $= (\text{number of zero at the origin}) + n > n$. This is a contradiction.

\end{proof}

\subsubsection{Irreducibility of odd theta functions with prime flux} 

\begin{proposition}
Let $\theta$ be an odd theta function with prime flux $p$. Then $\theta$ is irreducible.
\end{proposition}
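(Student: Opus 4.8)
The plan is to argue by contradiction, deriving an incompatibility between two computations of the sum of the zeros of $\theta$. Suppose $\theta$ reduces to a strictly finer lattice $\cL'\supsetneq\cL$, so that the associated $\psi=f_p\theta$ is gauge periodic with respect to $\cL'$. Since $\cL,\cL'$ are full-rank lattices, $d:=[\cL':\cL]=|\Omega_\cL|/|\Omega_{\cL'}|$ is a finite integer $>1$, and by the flux quantization \eqref{eq:flux-per-cell} applied to $\cL'$ the flux per $\cL'$-cell equals $p/d$, which must be a positive integer. As $p$ is prime this forces $d=p$ and $\cL'$-flux equal to $1$. Hence, viewed as a theta function for $\cL'$, $\theta$ has exactly one simple zero per $\cL'$-cell, so its zero set is a single orbit $z_0+\cL'$. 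Because $\theta$ is odd we have $\theta(0)=-\theta(0)=0$, so $0$ is a zero; thus $z_0\in\cL'$ and the zero set is exactly $\cL'$. In particular $\divv(\theta)$ consists of $p$ simple zeros sitting at the $p$ points of $\cL'$ lying in a fundamental cell of $\cL$.

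Next I would invoke the classical sum-of-zeros identity for quasi-periodic theta functions. Applying the argument principle to $z\,\theta'(z)/\theta(z)$ along the boundary of the fundamental parallelogram and using $\theta(z+1)=\theta(z)$, $\theta(z+\tau)=e^{-2\pi i pz}e^{-ip\pi\tau}\theta(z)$ from Proposition \ref{prop:Landau-ham-spec} (with $n=p$), one obtains, in the normalized coordinate $z$,
\begin{equation*}
\sum_{j} z_j \equiv \tfrac{p}{2}(1+\tau)\pmod{\Z+\tau\Z}.
\end{equation*}
On the other hand, the $p$ zeros are coset representatives of $\cL'/\cL\cong\Z/p\Z$; writing them as $0,g,2g,\dots,(p-1)g$ with $pg\in\cL$, their sum is $\tfrac{p(p-1)}{2}\,g$. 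For $p$ odd this lies in $\cL$ (it equals $\tfrac{p-1}{2}\cdot pg$), so $\sum_j z_j\equiv 0$, whereas $\tfrac p2(1+\tau)\equiv\tfrac12(1+\tau)\not\equiv0$; for $p=2$ the sum is $g\notin\cL$ while $\tfrac p2(1+\tau)=1+\tau\equiv0$. Either way the two evaluations of $\sum_j z_j$ disagree modulo $\Z+\tau\Z$, a contradiction, so $\theta$ is irreducible.

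The main obstacle, and the reason oddness is exactly the right hypothesis, is that $\cL'$-periodicity of the zero set by itself does \emph{not} preclude reduction: a generic reducible theta function has zeros $z_0+\cL'$ with $z_0$ precisely tuned to the half-period dictated by the sum-of-zeros identity, so that it genuinely descends to a flux-$1$ theta function on $\cL'$. Oddness removes this freedom by pinning a zero at the lattice point $0\in\cL'$, which is incompatible with the half-period location demanded by the identity; the computation above makes this quantitative. Some care is needed for $p=2$, where the single nontrivial coset of $\cL'/\cL$ is a genuine two-torsion (half-period) class, so the comparison of the two sums must be run separately as above. I would also stress that a divisor-counting argument through Lemma \ref{lem:FinerLattice} alone is insufficient here, since it only detects periodicity of the zero set and not this Abel-type constraint on the positions of the zeros.
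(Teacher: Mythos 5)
Your proof is correct, and it takes a genuinely different route from the paper's. The two arguments share the opening step: a counting argument (your flux-quantization computation; the paper's tiling of an $\cL$-cell by $\cL'$-cells) forces a proper finer lattice $\cL'$ to have index $p$, hence one simple zero per $\cL'$-cell, and the zero of $\theta$ at the origin then pins the zero set to be exactly $\cL'$. They diverge after that. The paper exploits oddness more fully: combined with the quasi-periodicity \eqref{theta-repr}, $\theta(-z)=-\theta(z)$ forces zeros at all three half-periods $0,\ 1/2,\ \tau/2$, and one-zero-per-cell then puts $1/2,\,\tau/2 \in \cL'$, so $\frac{1}{2}(\Z+\tau\Z)\subseteq \cL'$ and flux counting yields $4\mid p$, a contradiction --- no Abel-type input is needed. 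You instead use only the single zero at the origin and import the classical sum-of-zeros identity $\sum_j z_j \equiv \frac{p}{2}(1+\tau) \pmod{\Z+\tau\Z}$; this identity is not stated in the paper's appendix, but your derivation is sound --- it is precisely the contour computation in the proof of Theorem \ref{unique} with an extra factor of $z$ in the integrand (note you must translate the parallelogram slightly, since in your situation the zeros sit at lattice points, i.e.\ on the corners of the standard cell; the conclusion mod $\Z+\tau\Z$ is unaffected), and it recovers Proposition \ref{prop:zero-psi} as its $n=1$ case. Your coset computation is also right: $\cL'/\cL$ is cyclic of prime order, so the residues of the zeros are $0,g,\dots,(p-1)g$, the sum is $\frac{p(p-1)}{2}g$ with $pg\in\cL$, and the parity split between odd $p$ and $p=2$ gives the contradiction in both cases. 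As for what each approach buys: the paper's proof is more self-contained, resting only on the half-period zeros and divisibility; yours consumes strictly less of the hypothesis --- only $\theta(0)=0$, never full oddness --- and therefore actually proves the stronger statement that any flux-$p$ theta function vanishing at a lattice point is irreducible. That refinement even has content at $p=2$, where the proposition as stated is vacuous (by Theorem \ref{Basis}, $\theta_{2,m}(-z)=\theta_{2,m}(z)$, so nonzero odd theta functions with flux $2$ do not exist), whereas flux-$2$ theta functions with a double zero at the origin do exist and are covered by your argument.
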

\begin{proof}
Let $\theta$ be gauge periodic with respect to $\mathcal{L}$. Let $\mathcal{L} \subset \mathcal{L}'$ be any finer lattice. Let $q$ denote the number of zeros of $\theta$ in a fundamental cell of $\mathcal{L}'$. We first claim that $q \mid p$. Let $u,v$ be the generators of $\mathcal{L}'$ and $\Omega_{\mathcal{L}'}$ be the fundamental cell of $\mathcal{L}'$ formed by taking the convex hall of $u$ and $v$ (together with appropriate boundary). Define an equivalence relationship as follows: two translates of $\Omega_{\mathcal{L}'}$, $s + \Omega_{\mathcal{L}'}$ and $s' + \Omega_{\mathcal{L}'}$ for $s,s' \in \mathcal{L}$, are said to be equivalent if $s-s' \in \mathcal{L}$. Let $s_1 + \Omega_{\mathcal{L}'},...,s_k + \Omega_{\mathcal{L}'}$ be maximally inequivalent for $s_1,...,s_k \in \mathcal{L}'$. Since translates of $\Omega_{\mathcal{L}'}$ tile the entire plane, we conclude that by appropriate translates of the $s_j$'s, $s_1 + \Omega_{\mathcal{L}'} \cup ... \cup s_k + \Omega_{\mathcal{L}'}$ is a fundamental domain of $\mathcal{L}$. In particular, $p = qk$ since $\theta$ has the same number of zeros in each fundamental cell of $\mathcal{L}'$. Since $p$ is prime, either $q=p$ or $q=1$. If $q=p$, then $\mathcal{L} = \mathcal{L}'$. Otherwise $q=1$. Since $0, 1/2, \tau/2$ are zeros of $\theta$ and each fundamental cell $s+ \Omega_{\mathcal{L}'}$ has exactly one zero, we conclude that $0,1/2, \tau/2 \in \mathcal{L}'$. So in particular $\mathcal{L} \subset \frac{1}{2}(\Z + \tau \Z)$ and $\psi$ is gauge periodic with respect to $\frac{1}{2}(\Z + \tau \Z)$. This is clearly not possible, for otherwise $4 \mid p$ is a contradiction.
\end{proof}

\section{Proof of Theorem \ref{thm:Gclassify-theta}}\label{sec:V-Vdiv}
\begin{figure}[h] 
\hspace{1cm}

\psscalebox{1.0 1.0} 
{
\begin{pspicture}(0,-2.601389)(2.601389,2.601389)
\psline[linecolor=black, linewidth=0.04](2.4,2.4)(0.4,1.2000002)(0.4,-1.1999998)(2.4,-2.3999999)(2.4,-2.3999999)
\psdots[linecolor=black, dotstyle=o, dotsize=0.4, fillcolor=white](2.4,2.4)
\psdots[linecolor=black, dotstyle=o, dotsize=0.4, fillcolor=white](2.4,-2.3999999)
\rput[bl](0.0,1.6000001){$\xi^2 \phi$}
\rput[bl](0.0,-1.9999999){$\xi^3 \phi$}
\psdots[linecolor=black, dotsize=0.4](0.4,0.40000013)
\psdots[linecolor=black, dotsize=0.4](0.4,-0.39999986)
\psellipse[linecolor=black, linewidth=0.04, shadow=true,shadowsize=0.0058, fillcolor=black, dimen=outer](0.95,1.55)(0.2,0.2)
\psellipse[linecolor=black, linewidth=0.04, shadow=true,shadowsize=0.0058, fillcolor=black, dimen=outer](1.7,1.95)(0.2,0.2)
\psellipse[linecolor=black, linewidth=0.04, shadow=true,shadowsize=0.0058, fillcolor=black, dimen=outer](0.9,-1.5)(0.2,0.2)
\psellipse[linecolor=black, linewidth=0.04, shadow=true,shadowsize=0.0058, fillcolor=black, dimen=outer](1.65,-1.9)(0.2,0.2)
\psline[linecolor=black, linewidth=0.04, linestyle=dashed, dash=0.17638889cm 0.10583334cm](2.6,2.44)(4.6,1.24)(4.6,-1.16)(2.6,-2.36)
\end{pspicture}

\hspace{3cm}

\begin{pspicture}(0,-2.601389)(2.601389,2.601389)
\psline[linecolor=black, linewidth=0.04](2.8,2.44)(0.8,1.24)(0.8,-1.16)(2.8,-2.36)(2.8,-2.36)
\psline[linecolor=black, linewidth=0.04, linestyle=dashed, dash=0.17638889cm 0.10583334cm](2.8,2.44)(4.8,1.24)(4.8,-1.16)(2.8,-2.36)
\psdots[linecolor=black, dotstyle=o, dotsize=0.4, fillcolor=white](2.8,2.44)
\psdots[linecolor=black, dotstyle=o, dotsize=0.4, fillcolor=white](2.8,-2.36)
\rput[bl](4.8,1.24){$\phi$}
\rput[bl](2.4,2.84){$\xi \phi$}
\rput[bl](0.0,1.24){$\xi^2 \phi$}
\rput[bl](0.0,-1.56){$\xi^3 \phi$}
\rput[bl](2.4,-3.16){$\xi^4 \phi$}
\rput[bl](5.2,-1.56){$\xi^5 \phi$}
\rput[bl](2.7,0){$6$}
\end{pspicture}
}
\caption{Index 6 divisors. The figure on the left has six distinct dots on its left most 3 edges, forming an orbit for $C_6$. The figure on the r.h.s. indicates six distinct dots forming an orbit of $C_6$ in the interior of the WS cell.}
\label{fig:WSIndex6} 
\end{figure}
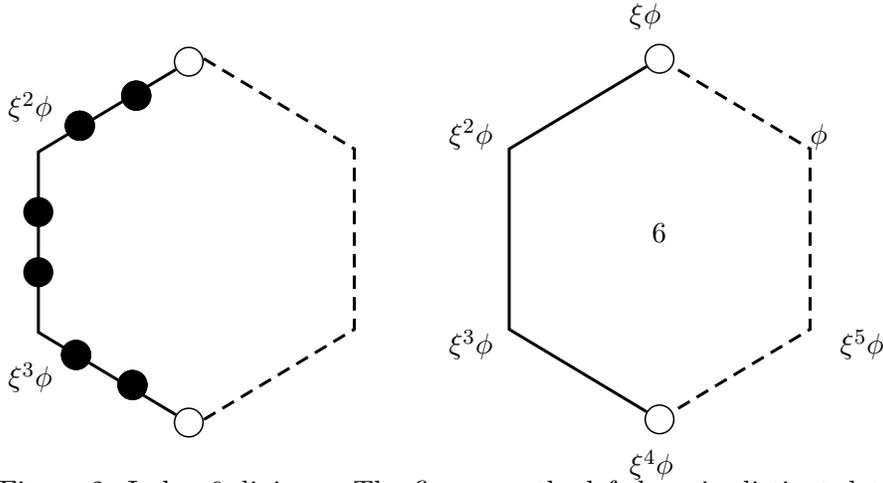
\begin{figure}[h]
	\centering
		\psscalebox{1.0 1.0} 
{
\begin{pspicture}(0,-3.16)(6.85,3.16)
\psline[linecolor=black, linewidth=0.04](2.8,2.44)(0.8,1.24)(0.8,-1.16)(2.8,-2.36)(2.8,-2.36)
\psline[linecolor=black, linewidth=0.04, linestyle=dashed, dash=0.17638889cm 0.10583334cm](2.8,2.44)(4.8,1.24)(4.8,-1.16)(2.8,-2.36)
\psdots[linecolor=black, dotstyle=o, dotsize=0.4, fillcolor=white](2.8,2.44)
\psdots[linecolor=black, dotstyle=o, dotsize=0.4, fillcolor=white](2.8,-2.36)
\rput[bl](4.8,1.24){$\phi$}
\rput[bl](2.4,2.84){$\xi \phi$}
\rput[bl](0.0,1.24){$\xi^2 \phi$}
\rput[bl](0.0,-1.56){$\xi^3 \phi$}
\rput[bl](2.4,-3.16){$\xi^4 \phi$}
\rput[bl](5.2,-1.56){$\xi^5 \phi$}
\psdots[linecolor=black, dotsize=0.4](2.8,0.04)
\rput[bl](2.8,0.44){$2$}
\end{pspicture}
}
\caption{Pictorial discription of $\theta_{2}$}
\label{fig:Theta2}
\end{figure}
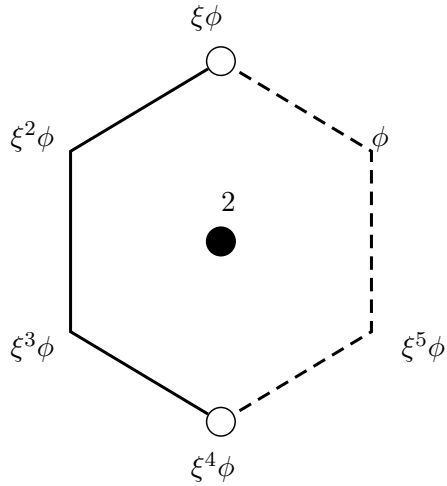
\begin{figure}[h]
	\centering
		\psscalebox{1.0 1.0}
{
\begin{pspicture}(0,-3.16)(6.85,3.16)
\psline[linecolor=black, linewidth=0.04](2.8,2.44)(0.8,1.24)(0.8,-1.16)(2.8,-2.36)(2.8,-2.36)
\psline[linecolor=black, linewidth=0.04, linestyle=dashed, dash=0.17638889cm 0.10583334cm](2.8,2.44)(4.8,1.24)(4.8,-1.16)(2.8,-2.36)
\psdots[linecolor=black, dotstyle=o, dotsize=0.4, fillcolor=white](2.8,2.44)
\psdots[linecolor=black, dotstyle=o, dotsize=0.4, fillcolor=white](2.8,-2.36)
\rput[bl](4.8,1.24){$\phi$}
\rput[bl](2.4,2.84){$\xi \phi$}
\rput[bl](0.0,1.24){$\xi^2 \phi$}
\rput[bl](0.0,-1.56){$\xi^3 \phi$}
\rput[bl](2.4,-3.16){$\xi^4 \phi$}
\rput[bl](5.2,-1.56){$\xi^5 \phi$}
\psdots[linecolor=black, dotsize=0.4](0.8,1.2000002)
\psdots[linecolor=black, dotsize=0.4](0.8,-1.1999998)
\end{pspicture}
}
\caption{Pictorial description of $\theta_0$}
\label{fig:Theta0}
\end{figure}

\begin{figure}[h]
\centering
\psscalebox{1.0 1.0} 
{
\begin{pspicture}(0,-3.16)(6.85,3.16)
\psline[linecolor=black, linewidth=0.04](2.8,2.44)(0.8,1.24)(0.8,-1.16)(2.8,-2.36)(2.8,-2.36)
\psline[linecolor=black, linewidth=0.04, linestyle=dashed, dash=0.17638889cm 0.10583334cm](2.8,2.44)(4.8,1.24)(4.8,-1.16)(2.8,-2.36)
\psdots[linecolor=black, dotstyle=o, dotsize=0.4, fillcolor=white](2.8,2.44)
\psdots[linecolor=black, dotstyle=o, dotsize=0.4, fillcolor=white](2.8,-2.36)
\rput[bl](4.8,1.24){$\phi$}
\rput[bl](2.4,2.84){$\xi \phi$}
\rput[bl](0.0,1.24){$\xi^2 \phi$}
\rput[bl](0.0,-1.56){$\xi^3 \phi$}
\rput[bl](2.4,-3.16){$\xi^4 \phi$}
\rput[bl](5.2,-1.56){$\xi^5 \phi$}
\psellipse[linecolor=black, linewidth=0.04, shadow=true,shadowsize=0.0058, fillcolor=black, dimen=outer](1.8,1.84)(0.2,0.2)
\psellipse[linecolor=black, linewidth=0.04, shadow=true,shadowsize=0.0058, fillcolor=black, dimen=outer](1.8,-1.76)(0.2,0.2)
\psdots[linecolor=black, dotsize=0.4](2.8,0.04)
\psdots[linecolor=black, dotsize=0.4](0.8,0.04)
\end{pspicture}
}
\caption{Pictorial description of $\theta_1$}
\label{fig:Theta4}
\end{figure}
We note that $V^{\divv}$ is the subset of the $\Z$-module of divisors generated by diagrams of the form in figures \ref{fig:WSIndex6} - \ref{fig:Theta4} such that the multiplicity of each point is non-negative. So we show that there is a theta function corresponding to each such diagram. This would show that $V^{\divv} \subset \Divv(V^{EV})$. In what follows $\tau =\xi = e^{\pi i/3}$ as before. Existence of $\theta_2$ is a direct result of Theorem \ref{singular} with $n=2$. Namely, the set of permissible double zeros for theta functions in $V_2$ is just $\frac{1}{2}(\F{Z} +\tau \F{Z})$. 

To construct $\theta_0$, let $\theta_0,\theta_1$ be a basis for $V_2$ as in Theorem \ref{Basis}. Form the function
\begin{align}
	\sigma(z) := \det \Sigma(z) := \det \left( \begin{array}{cc}
																								\theta_{2,0}(z) & \theta_{2,1}(z) \\
																								\theta_{2,0}(-z) & \theta_{2,1}(-z) 
																							\end{array} \right)
\end{align}
By Theorem \ref{Basis}, $\theta_{2,i}(z)$ is symmetric about $0$ for $i=0,1$ when $n=2$, thus $\sigma(z) = 0$ identically. In particular, for $z_0 = \frac{1}{4}(\tau +1)$, there are constants $c_0,c_1$ such that $c_0\theta_{2,0} + c_1\theta_{2,1}$ has two simple zeros located at $z_0,-z_0$, respectively. This proves item 2. 

The theta function $\theta_4$ is the Wronskian, $\Theta$, of $\theta_0$ and $\theta_1$. For $n=2$, Theorem \ref{Wronskian} shows that the location of the zeros of $\Theta$ are precisely the set of permissible zeros for a singular $2$-theta function. In this case, it is $\frac{1}{2}(1+\tau)$ by Theorem \ref{singular}. It matches the definition of $\theta_4$.

Finally, we show existence of theta functions with $6$ distinct zeros on the WS-cell. Let $a_1,a_2,a_3 \in \F{C}$. Consider
\begin{align}
	\sigma(z) := \det \left( \begin{array}{cccc}
															\theta_{6,0}(z+a_1) & \theta_{6,1}(z+a_1) & \cdots 	& \theta_{6,5}(z+a_1) \\
															\theta_{6,0}(z-a_1) & \theta_{6,1}(z-a_1) & \cdots  & \theta_{6,5}(z-a_1) \\
															\theta_{6,0}(z+a_2) & \theta_{6,1}(z+a_2) & \cdots  & \theta_{6,5}(z+a_2) \\
															\theta_{6,0}(z-a_2) & \theta_{6,1}(z-a_2) & \cdots  & \theta_{6,5}(z-a_2) \\
															\theta_{6,0}(z+a_3) & \theta_{6,1}(z+a_3) & \cdots  & \theta_{6,5}(z+a_3) \\
															\theta_{6,0}(z-a_3) & \theta_{6,1}(z-a_3) & \cdots  & \theta_{6,5}(z-a_3) 
														\end{array} \right)
\end{align}
Recalling that $\theta_{n,m}(-z) = \theta_{n,n-m \bmod n}(z)$ by Theorem \ref{Basis}, we see that 
\begin{align}
	&\sigma(-z)\\ \notag	&:= \det \left( \begin{array}{cccccc}
															\theta_{6,0}(z-a_1) & \theta_{6,5}(z-a_1) & \theta_{6,4}(z-a_1) & \theta_{6,3}(z-a_1) & \theta_{6,2}(z-a_1) 	& \theta_{6,1}(z-a_1) \\
															\theta_{6,0}(z+a_1) & \theta_{6,5}(z+a_1)  && \cdots  && \theta_{6,1}(z+a_1) \\
															\theta_{6,0}(z-a_2) & \theta_{6,5}(z-a_2) && \cdots  && \theta_{6,1}(z-a_2) \\
															\theta_{6,0}(z+a_2) & \theta_{6,5}(z+a_2) && \cdots  && \theta_{6,1}(z+a_2) \\
															\theta_{6,0}(z-a_3) & \theta_{6,5}(z-a_3) && \cdots  && \theta_{6,1}(z-a_3) \\
															\theta_{6,0}(z+a_3) & \theta_{6,5}(z+a_3) && \cdots  && \theta_{6,1}(z+a_3) 
														\end{array} \right) \\
\notag 							&= (-1)^3 \det \left( \begin{array}{cccc}
															\theta_{6,0}(z+a_1) & \theta_{6,5}(z+a_1) & \cdots 	& \theta_{6,1}(z+a_1) \\
															\theta_{6,0}(z-a_1) & \theta_{6,5}(z-a_1) & \cdots  & \theta_{6,1}(z-a_1) \\
															\theta_{6,0}(z+a_2) & \theta_{6,5}(z+a_2) & \cdots  & \theta_{6,1}(z+a_2) \\
															\theta_{6,0}(z-a_2) & \theta_{6,5}(z-a_2) & \cdots  & \theta_{6,1}(z-a_2) \\
															\theta_{6,0}(z+a_3) & \theta_{6,5}(z+a_3) & \cdots  & \theta_{6,1}(z+a_3) \\
															\theta_{6,0}(z-a_3) & \theta_{6,5}(z-a_3) & \cdots  & \theta_{6,1}(z-a_3) 
														\end{array} \right) \\
							&= (-1)^3 (-1)^2 \sigma(z) \\
							&= -\sigma(z)
\end{align}
where the factor $(-1)^3$ arises from interchanging the $2i-1$ and $2i$-th row for $i=1,2,3$. The $(-1)^2$ factor occurs after interchanging the second and the 6-th column and interchanging the third and the fourth column. So we have that $\sigma(0) = 0$. This proves the desired claim that $\sigma(z)$ has a kernel.

To prove that $\Divv(V^{EV}) = V^{\divv}$, we study orbits of $C_6$ on the WS-cell. We will use the divisor and theta function picture (see (\ref{eqn:Divv})) interchangeably. We find all possible orbits of the action of $C_6$ on the WS-cell. The only choice of having index 1 is the case where the origin has index 1. The only index 2 possibility where each point has index one is shown in Fig. \ref{fig:Theta0}. Then we have index 3 divisors. The possible location of points on $W$ with index 3 each with multiplicity 1 is as shown in Fig. \ref{fig:WSInd3}

\begin{figure}[h]
\centering
\psscalebox{1.0 1.0} 
{

\begin{pspicture}(0,-3.16)(6.85,3.16)
\psline[linecolor=black, linewidth=0.04](2.8,2.44)(0.8,1.24)(0.8,-1.16)(2.8,-2.36)(2.8,-2.36)
\psline[linecolor=black, linewidth=0.04, linestyle=dashed, dash=0.17638889cm 0.10583334cm](2.8,2.44)(4.8,1.24)(4.8,-1.16)(2.8,-2.36)
\psdots[linecolor=black, dotstyle=o, dotsize=0.4, fillcolor=white](2.8,2.44)
\psdots[linecolor=black, dotstyle=o, dotsize=0.4, fillcolor=white](2.8,-2.36)
\rput[bl](4.8,1.24){$\phi$}
\rput[bl](2.4,2.84){$\xi \phi$}
\rput[bl](0.0,1.24){$\xi^2 \phi$}
\rput[bl](0.0,-1.56){$\xi^3 \phi$}
\rput[bl](2.4,-3.16){$\xi^4 \phi$}
\rput[bl](5.2,-1.56){$\xi^5 \phi$}
\psdots[linecolor=black, dotsize=0.4](0.8,0.0)
\psdots[linecolor=black, dotsize=0.4](1.8,1.8000002)
\psdots[linecolor=black, dotsize=0.4](1.8,-1.7999998)
\end{pspicture}
}
\caption{Index 3 divisor}
\label{fig:WSInd3}
\end{figure}
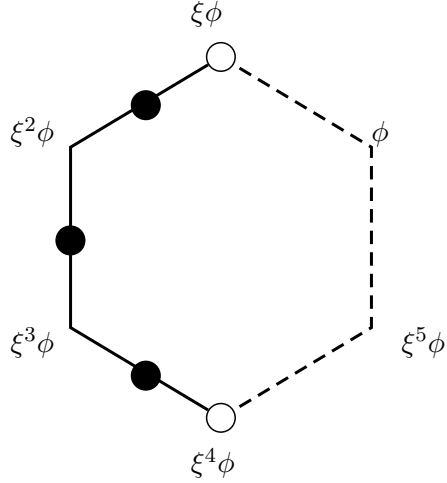
By the orbit-stablizer theorem, there is no divisor with index 4 or 5 where the multiplicity is simple at each point, since $4$ and $5$ do not divide $6$. Finally, we consider the index 6 case. The possible divisors are shown in Fig. \ref{fig:WSIndex6}.

Now we are ready for the proof of Theorem \ref{thm:Gclassify-theta}. First note that injectivity of the map $\Divv$ is a direct consequence of Proposition \ref{unique}. Now if $\theta$ is any $C_6$-equivariant theta function, its zeros are unions of orbits of $C_6$. 
We may divide $\theta$ by $C_6$-equivariant theta functions corresponding to elements of $V^{\divv}$ to produces new theta functions with fewer zeros in the Wigner-Seitz cell. Note that this division process preserves $C_6$-equivariance. We repeat this process until any further division results in a non-theta-function. We claim that the resulting function, $\sigma$, is a complex number.
If so, we have completely factor $\theta$ by theta functions from $V^{\divv}$ and the bijection is established.

Now, we study $\sigma$. $\sigma$ cannot have any zeros that form an orbit of $C_6$ of size 6, otherwise they can be removed by dividing by an element from $V^{\divv}$, contradicting the definition of $\sigma$. It can neither have zeros that form orbits of size 2 for the same reason. Hence, the zeros of $\sigma$ can only be in the following two configuration: one zero at the origin, or as shown in Fig. \ref{fig:WSInd3}. To see this, if there are zeros as in Fig. \ref{fig:WSInd3}, but with higher multiplicity, we divide $\sigma$ by $\theta_1^{-1} \theta_4^{2}$ as shown in Fig. \ref{fig:14Inv} to remove all the multiplicities. Likewise we can divide by $\theta_2$ to remove even multiplicity at the origin. 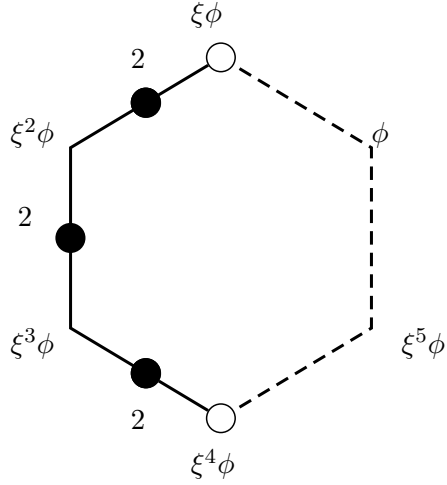
\begin{figure}[h]
\centering
\psscalebox{1.0 1.0} 
{
\begin{pspicture}(0,-3.16)(6.85,3.16)
\psline[linecolor=black, linewidth=0.04](2.8,2.44)(0.8,1.24)(0.8,-1.16)(2.8,-2.36)(2.8,-2.36)
\psline[linecolor=black, linewidth=0.04, linestyle=dashed, dash=0.17638889cm 0.10583334cm](2.8,2.44)(4.8,1.24)(4.8,-1.16)(2.8,-2.36)
\psdots[linecolor=black, dotstyle=o, dotsize=0.4, fillcolor=white](2.8,2.44)
\psdots[linecolor=black, dotstyle=o, dotsize=0.4, fillcolor=white](2.8,-2.36)
\rput[bl](4.8,1.24){$\phi$}
\rput[bl](2.4,2.84){$\xi \phi$}
\rput[bl](0.0,1.24){$\xi^2 \phi$}
\rput[bl](0.0,-1.56){$\xi^3 \phi$}
\rput[bl](2.4,-3.16){$\xi^4 \phi$}
\rput[bl](5.2,-1.56){$\xi^5 \phi$}
\psellipse[linecolor=black, linewidth=0.04, shadow=true,shadowsize=0.0058, fillcolor=black, dimen=outer](1.8,1.84)(0.2,0.2)
\psellipse[linecolor=black, linewidth=0.04, shadow=true,shadowsize=0.0058, fillcolor=black, dimen=outer](1.8,-1.76)(0.2,0.2)
\psdots[linecolor=black, dotsize=0.4](0.8,0.04)
\rput[bl](1.6,2.3){$2$}
\rput[bl](0.1,0.2){$2$}
\rput[bl](1.6,-2.5){$2$}
\end{pspicture}
}
\caption{Pictorial description of $\theta_2\theta_4^{-1}$}
\label{fig:14Inv}
\end{figure}

So, we may assume that $\sigma$ has a simple zero at the origin. Indeed, if $\sigma$ has three zeros as in Fig. \ref{fig:WSInd3}, we divide it by $\theta_2^{-1}\theta_1^{2}$ to obtain a theta function with a single zero at the origin. But this is not allowed as $V_1$ is 1-dimensional and whose generator has zero at $\frac{1}{2}(1+\tau)= \frac{1}{2}(1+\xi)$ by Proposition \ref{prop:zero-psi} below. 
Thus this case never occurs and the proof is complete.

 To prove the assertion above, we pass the problem back to linear solutions $\psi$ from the theta functions via (\ref{psi-nullL}) and use the complexified co-ordinates $x=x_1+i x_2$. 
We have
  \begin{proposition}\label{prop:zero-psi} Let $\psi$ satisfy the gauge-periodicity condition \eqref{gaugeperiod-psi}. Then it has zero at $\frac{1}{2}(1+\tau)$.  In particular, when $n=1$, $\psi$ does not vanish at $0$. Thus by uniqueness of theta functions, we conclude  there is no theta function with a single zero at the origin (with our imposed boundary condition). 
    \end{proposition}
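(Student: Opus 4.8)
The plan is to transfer the question from $\psi$ to the associated holomorphic theta function $\theta$ and to exploit two facts already in hand: the evenness of the solution and the quasiperiodicity of $\theta$. By \eqref{psi-nullL} we may write $\psi(x) = e^{-\frac{\pi n}{2\im\tau}(|z|^2 - z^2)}\theta(z,\tau)$ with $z = (x_1 + ix_2)/\sqrt{2\pi/\im\tau}$, and since the Gaussian prefactor is nowhere vanishing, the zeros of $\psi$ coincide with those of $\theta$. Moreover, because the solutions we consider are even, \eqref{psi-parity} gives $\psi(-x) = \psi(x)$, which transfers to $\theta(-z) = \theta(z)$; this is exactly the evenness verified for $n=1$ through the series representation \eqref{theta-series}.

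First I would set $z_* := \frac{1}{2}(1+\tau)$ and chain the three available symmetries of $\theta$. Evenness gives $\theta(z_*) = \theta(-z_*)$; the period-one relation $\theta(z+1) = \theta(z)$ gives $\theta(-z_*) = \theta(\frac{1}{2} - \frac{\tau}{2})$; and, writing $z_* = (\frac{1}{2} - \frac{\tau}{2}) + \tau$, the quasiperiodicity relation $\theta(z+\tau) = e^{-2\pi i n z}e^{-in\pi\tau}\theta(z)$ from the proof of Proposition \ref{prop:Landau-ham-spec} gives $\theta(z_*) = e^{-2\pi i n(\frac{1}{2} - \frac{\tau}{2})}e^{-in\pi\tau}\theta(\frac{1}{2} - \frac{\tau}{2})$. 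The exponent here collapses to $-i\pi n$, so $\theta(z_*) = (-1)^n\theta(\frac{1}{2} - \frac{\tau}{2})$. Comparing with the first two steps yields $\theta(z_*) = (-1)^n\theta(z_*)$, and for $n$ odd — in particular for $n=1$ — this forces $\theta(z_*) = 0$. Hence $\psi$ vanishes at $\frac{1}{2}(1+\tau)$, as claimed.

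It then remains to read off the two consequences for $n=1$. By Proposition \ref{prop:Landau-ham-spec} the relevant null space has dimension $1$ and its theta functions have degree $1$, i.e. a single zero (counted with multiplicity) in a fundamental cell. Since that zero sits at $z_* = \frac{1}{2}(1+\tau)$, which is not congruent to $0$ modulo $\LAT$, the function cannot also vanish at the origin; this establishes that $\psi$ does not vanish at $0$. Finally, a hypothetical theta function with a single zero at the origin would be a degree-one function and hence, by the uniqueness of theta functions from their divisors (\ref{unique}), proportional to the generator above — contradicting that the generator vanishes at $z_*$ and not at $0$. Thus no such theta function exists.

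The computation is elementary, so I do not anticipate a serious obstacle; the two points to watch are the bookkeeping of the quasiperiodicity exponent (which must simplify to $-i\pi n$) and the observation that $\frac{1}{2} - \frac{\tau}{2}$ and $\frac{1}{2} + \frac{\tau}{2}$ differ by the lattice vector $\tau$, so the forced zero genuinely lies at the advertised half-period. The one assumption I would flag is that the argument invokes evenness of $\theta$, which \eqref{psi-parity} supplies precisely in the case $n=1$ that is needed here.
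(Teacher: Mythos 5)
Your proof is correct and follows essentially the same route as the paper's: the paper applies the two gauge-periodicity relations together with the evenness $\psi(-x)=\psi(x)$ directly to $\psi$ at the point $-\frac{1}{2}(1+\tau)$ and extracts the phase obstruction $(e^{in\pi}-1)\psi=0$, which is exactly the relation $\theta(z_*)=(-1)^n\theta(z_*)$ you derive in the theta-function picture. The only cosmetic difference is that you transfer to $\theta$ via \eqref{psi-nullL} (harmless, since the Gaussian prefactor never vanishes), and your handling of the $n=1$ consequences via the single zero and Theorem \ref{unique} matches the paper's intended reading.
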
  
 \begin{proof} 
 Let us denote by $z = \frac{1}{2}(1+\tau)$. Due to the quasiperiodic boundary conditions 
\begin{align}
	&\psi(y+1)=e^{\frac{ikny_2}{2}}\psi(y) \\
	&\psi(y+\tau)=e^{\frac{ikn(\tau_1y_2-\tau_2y_1)}{2}}\psi(y)
\end{align}
Applying these relations at the point $z=(-(\tau_1+1)/2,-\tau_2/2)$, we find that
\begin{align}
 &\psi(z+1)=e^{-\frac{ikn\tau_2}{4}}\psi(z) \\
 &\psi(z+\tau)=e^{\frac{ikn\tau_2}{4}}\psi(z)
\end{align}
Now, utilizing the symmetry $\psi(-x)=\psi(x)$, we deduce that $\psi(z+1)=\psi(z+\tau)$. Thus
$$e^{\frac{ikn\tau_2}{4}}\psi(z)=e^{\frac{-ikn\tau_2}{4}}\psi(z),$$
or equivalently
$$e^{\frac{ikn\tau_2}{2}}\psi(z)=\psi(z).$$
Since $k\tau_2=2\pi$, this relation becomes $(e^{in\pi}-1)\psi(z)=0$, and implies when $n$ is odd that $\psi$ vanishes at $z$.  \end{proof}

\appendix


\section{On solutions of the linearized problem}\label{sec:no-sol}

\begin{lemma}\label{lem:no-sol}
There is no linear solution $\psi$, as in Section \ref{sec:operators}, such that
\begin{align}
	\psi(\bar{z}) = e^{ig_r(z)} \psi(z) \label{eqn:ReflSym}
\end{align}
for some real valued $g_r$.
\end{lemma}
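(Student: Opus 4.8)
The plan is to argue by the argument principle, converting the non-holomorphicity of the conjugation $z\mapsto\bar z$ into a sign mismatch of winding numbers. First I would recall from Proposition \ref{prop:Landau-ham-spec} and \eqref{psi-nullL} that any $\psi$ as in Section \ref{sec:operators} has the form $\psi(x)=e^{-\frac{\pi n}{2\im\tau}(|z|^2-z^2)}\theta(z,\tau)$, where $z=(x_1+ix_2)/\sqrt{2\pi/\im\tau}$ and $\theta\not\equiv0$ is holomorphic. Because the Gaussian prefactor never vanishes, the zeros of $\psi$ coincide with the (isolated) zeros of $\theta$, with the same multiplicities; and since $\theta$ is a theta function of degree $n\ge1$, it has at least one zero $z_0$, of some multiplicity $m\ge1$. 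The structural fact I would extract is that, being a non-vanishing smooth factor times a holomorphic function, $\psi$ winds \emph{positively} about each of its zeros: along a small counterclockwise circle $\gamma$ around $z_0$ the argument of $\psi$ increases by $+2\pi m$.

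Next I would bring in the hypothesised reflection relation. Taking moduli in \eqref{eqn:ReflSym} gives $|\psi(\bar z)|=|\psi(z)|$, so $\bar z_0$ is a zero of $\psi$ of the same multiplicity $m$, and $e^{ig_r(z)}=\psi(\bar z)/\psi(z)$ is a continuous, unimodular function on the connected open set $\{\psi\neq0\}$. I would then compute the winding number of this ratio along $\gamma$. As $z$ traverses $\gamma$ counterclockwise, $\bar z$ traverses the reflected circle about $\bar z_0$ \emph{clockwise}; hence $z\mapsto\psi(\bar z)$ winds $-m$ (the orientation reversal is exactly where the conjugation fails to be holomorphic), while $z\mapsto\psi(z)$ winds $+m$. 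Therefore the winding of $e^{ig_r}=\psi(\bar z)/\psi(z)$ along $\gamma$ equals $-m-m=-2m$.

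Finally, under the hypothesis of the lemma $g_r$ is a single-valued real-valued function on $\{\psi\neq0\}$; the defining relation makes $e^{ig_r}$ continuous, and single-valuedness of $g_r$ then means $e^{ig_r}$ admits a continuous real logarithm, so its winding around every loop vanishes and must be $0$ along $\gamma$. Comparing with the previous paragraph forces $-2m=0$, i.e. $m=0$, contradicting $m\ge1$. Hence no such $\psi$ (equivalently, no single-valued real $g_r$) can exist. I expect the only delicate points to be the orientation/winding bookkeeping around $z_0$ and $\bar z_0$ — the precise incarnation of ``conjugation is not holomorphic'' — and the clean justification that a genuine single-valued real $g_r$ forces $e^{ig_r}$ to have zero winding; the existence of a zero $z_0$, which powers the whole argument, is guaranteed by $n\ge1$.
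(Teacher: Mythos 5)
Your proof is correct, but it follows a genuinely different route from the paper's. The paper argues analytically: writing $\psi$ in the form \eqref{psi-nullL} with $\theta$ holomorphic, it transfers \eqref{eqn:ReflSym} to a relation for $\theta$, applies $\partial_z$ (which annihilates the left-hand side, since $\partial_z\theta(\bar z)=0$) and then $\partial_{\bar z}$, and concludes $-\Delta g_r = 2ni$ away from the discrete zero set of $\theta$ --- impossible for a real-valued $g_r$. Your argument is topological: the orientation reversal of conjugation makes the ratio $\psi(\bar z)/\psi(z)$ wind $-2m$ around any zero $z_0$ of multiplicity $m$, while a single-valued \emph{continuous} real phase $g_r$ forces winding $0$; the contradiction then hinges on $\theta$ having at least one zero, which degree $n\ge 1$ guarantees (Theorem \ref{unique}). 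Your winding bookkeeping is right, including the identification of the multiplicity at $\bar z_0$ via $|\psi(\bar z)|=|\psi(z)|$ and the harmless degenerate case $\bar z_0=z_0$. As for what each approach buys: your proof needs only continuity of $g_r$ --- and you correctly flag this as the one point requiring justification, since a merely pointwise-defined real $g_r$ could be discontinuous and carry nonzero winding, so one must invoke the regularity natural for gauge functions in this paper ($H^2_{\rm loc}\subset C^0$ in two dimensions); the paper's proof, by contrast, implicitly demands that $g_r$ be twice differentiable. Conversely, the paper's differential computation uses nothing about the zeros of $\theta$ beyond their discreteness, whereas your argument is powered by the existence of a zero.
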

\begin{proof}
Assume for the sake of contradiction that such $\psi$ exists. Then
\begin{align}
	\psi(z) = e^{\frac{n}{4}(z^2-|z|^2) }\theta(z)
\end{align}
for some holomorphic $\theta$. Equation (\ref{eqn:ReflSym}) becomes
\begin{align}
	\theta(\bar{z}) = e^{\frac{n}{4}(z^2-\bar{z}^2)} e^{ig_r(z)} \theta(z)
\end{align}
Taking $\partial_{z}$ on both sides, we see that
\begin{align}
	0 = e^{\frac{n}{4}(z^2-\bar{z}^2)} e^{ig_r(z)} (-\frac{n}{2}\bar{z} \theta+i\theta \partial_z g_r + \theta')
\end{align}
This shows that the term in the bracket vanishes identically. In particular,
\begin{align}
	(-\frac{n}{2}\bar{z} +i \partial_z g_r)\theta = -\theta'
\end{align}
Taking $\partial_{\bar{z}}$ again, we see that
\begin{align}
	(-\frac{n}{2} + i \partial_{\bar{z}}\partial_z g_r) \theta = 0
\end{align}
Since $\theta$ has at most finitely many zeros, we conclude
\begin{align}
	-\Delta g_r = 2n i
\end{align}
This is absurd since $g_r$ is real valued: a contradiction.
\end{proof}

\section{Theta Functions}\label{sec:theta-fns}
In this appendix we review basic properties of theta functions, which are likely to be known but which 
  we could not find in the literature. From now on, we fix a lattice shape $\tau$ and a lattice
\begin{align}
	\lat_\tau = \F{Z} + \tau \F{Z}
\end{align}
throughout this appendix (unless otherwise stated). 

\subsection{Basic Properties}
In this section, we prove some basic properties of the theta functions. Let $n$ be fixed. 
Define for $0 \leq m \leq n-1$,
\begin{align}
	\theta_{n,m}(z) = \sum_{l \in [m]_n} \gamma^{l^2} e^{2\pi i l z} \label{thetaBasis}
\end{align}
where $\gamma := e^{\pi i \tau/n}$ and $[m]_n = \{ a \in \F{Z} \ : \ a = m \bmod n \}$. 

\begin{theorem} \label{Basis}
The $\theta_{n,m}$'s form a basis for $V_n$ that satisfy
\begin{enumerate}
	\item $\theta_{n,m}(z+\frac{1}{n}) = e^{2\pi i m/n} \theta_{n,m}(z)$
	\item $\theta_{n,m}(-z) = \theta_{n-m}(z)$
	\item $\theta_{n,m}(z+\tau/n) = \gamma^{-1} e^{2\pi i z} \theta_{n,m+1}(z)$
\end{enumerate}
\end{theorem}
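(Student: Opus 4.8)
The plan is to split Theorem~\ref{Basis} into two parts: first the three transformation laws, each obtained by a direct manipulation of the defining series \eqref{thetaBasis}, and then the basis claim, which I would deduce from these laws together with the count $\dim_{\mathbb{C}} V_n = n$ established in Proposition~\ref{prop:Landau-ham-spec}. I would do the transformation laws first, since they are purely computational and feed directly into the basis argument.

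For relation~(1), the shift $z\mapsto z+\tfrac1n$ multiplies the $l$-th term of \eqref{thetaBasis} by $e^{2\pi i l/n}$; since every index in the sum obeys $l\equiv m\pmod n$, this factor is $e^{2\pi i m/n}$ independently of $l$ and factors out, giving $\theta_{n,m}(z+\tfrac1n)=e^{2\pi i m/n}\theta_{n,m}(z)$. For relation~(2), the substitution $l\mapsto -l$ fixes $\gamma^{l^2}$, turns $e^{2\pi i l z}$ evaluated at $-z$ back into $e^{2\pi i l z}$, and maps the class $[m]_n$ onto $[-m]_n=[n-m]_n$, which is precisely the index set of $\theta_{n,n-m}$. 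For relation~(3), the shift $z\mapsto z+\tau/n$ contributes $e^{2\pi i l\tau/n}=\gamma^{2l}$ to the $l$-th term (using $\gamma=e^{\pi i\tau/n}$); completing the square via $l^2+2l=(l+1)^2-1$ extracts the constant $\gamma^{-1}$, and reindexing by $l\mapsto l+1$ carries the class to $[m+1]_n$ and recombines the remaining exponential, together with the quasiperiodicity prefactor $e^{2\pi i z}$, into $\theta_{n,m+1}(z)$, yielding relation~(3).

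For the basis statement I would first confirm that each $\theta_{n,m}$ actually lies in $V_n$. Iterating relation~(1) $n$ times gives $\theta_{n,m}(z+1)=\theta_{n,m}(z)$, the periodicity in the $\mathbb{Z}$-direction that underlies the absolutely convergent expansion in \eqref{theta-repr}; iterating relation~(3) $n$ times reproduces the full-period $\tau$-quasiperiodicity that characterizes $V_n$ in Proposition~\ref{prop:Landau-ham-spec}. Hence $\theta_{n,m}\in V_n$. Linear independence is then immediate from relation~(1): the $\theta_{n,m}$ are eigenfunctions of the shift operator $z\mapsto z+\tfrac1n$ with the $n$ distinct eigenvalues $e^{2\pi i m/n}$, $m=0,\dots,n-1$, so they are linearly independent. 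As $\dim_{\mathbb{C}} V_n=n$, these $n$ independent elements form a basis.

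I expect the only delicate point to be the prefactor bookkeeping in relation~(3): one must correctly track both the power of $\gamma$ produced by completing the square and the exponential produced by the index shift, including their signs. I would pin these down by insisting that the $n$-fold iterate of relation~(3) reproduce \emph{exactly} the $V_n$ quasiperiodicity of Proposition~\ref{prop:Landau-ham-spec} (and, through relation~(1), the unit periodicity in the real direction). Everything else---relations~(1) and (2), the independence argument, and the final dimension comparison---is routine.
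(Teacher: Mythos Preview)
Your proposal is correct, and the three transformation laws are established exactly as in the paper: direct series manipulation, with the substitution $l\mapsto -l$ for (2) and the completion-of-the-square $l^2+2l=(l+1)^2-1$ for (3).

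Your basis argument, however, is organized differently. The paper proves the basis claim \emph{first}: it expands an arbitrary element of $V_n$ in $e^{2\pi i k z}$, reads off the coefficient recursion $c_{m+n}=c_m e^{i(2m+n)\pi\tau}$, solves it as $c_{m+ln}=c_m e^{i\pi\tau(l^2n+2lm)}$, and thereby identifies the $\theta_{n,m}$ (up to a nonzero scalar) directly as the pieces of the Fourier series supported on the residue classes $[m]_n$. Independence and spanning are then automatic from this Fourier decomposition, and the transformation laws are proved afterward. You instead prove the laws first, use iteration of (1) and (3) to check membership in $V_n$, invoke (1) to exhibit the $\theta_{n,m}$ as eigenfunctions of the $\tfrac{1}{n}$-shift with distinct eigenvalues, and then appeal to $\dim_{\mathbb{C}}V_n=n$ from Proposition~\ref{prop:Landau-ham-spec}. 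Your route is a bit more structural and has the pleasant feature that the eigenvalue argument gives independence for free; the paper's route is more self-contained in that it does not need the external dimension count. Both are sound. Your consistency check---that the $n$-fold iterate of (3) must reproduce the $\tau$-quasiperiodicity $\theta(z+\tau)=e^{-2\pi i n z}e^{-i\pi n\tau}\theta(z)$---is a good idea and will in particular force the sign in the exponential of (3) to be $e^{-2\pi i z}$ (as in the paper's proof) rather than $e^{+2\pi i z}$ (as printed in the statement).
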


\begin{theorem} \label{unique}
Any $n$-theta function has exactly $n$ zeros modulo translation by lattice elements. Moreover, any two theta functions that share the same zeros (counting multiplicity) are linearly dependent. 
\end{theorem}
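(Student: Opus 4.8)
The plan is to treat an $n$-theta function as an entire holomorphic function on $\C$ obeying the two quasi-periodicity relations recorded in the proof of Proposition \ref{prop:Landau-ham-spec},
\begin{align}\label{eq:theta-qp}
\theta(z+1) = \theta(z), \qquad \theta(z+\tau) = e^{-2\pi i n z}\,e^{-i n \pi \tau}\,\theta(z),
\end{align}
and to establish the two assertions by classical complex analysis: the zero count by the argument principle, and the rigidity statement by Liouville's theorem.

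For the count, I would fix a point $a\in\C$ and work on the period parallelogram $P$ with vertices $a,\, a+1,\, a+1+\tau,\, a+\tau$; after a small translation of $a$ we may assume $\theta$ has no zero on $\partial P$. Since $\theta$ is holomorphic and nonvanishing on $\partial P$, the number $N$ of zeros counted with multiplicity inside $P$ equals $\frac{1}{2\pi i}\oint_{\partial P}\frac{\theta'}{\theta}\,dz$. I would then pair opposite edges of $\partial P$. On the two edges related by $z\mapsto z+1$ the integrand $\theta'/\theta$ is periodic, by the first relation in \eqref{eq:theta-qp}, so their contributions, being traversed in opposite senses, cancel. On the two edges related by $z\mapsto z+\tau$, taking the logarithmic derivative of the second relation gives $\frac{\theta'(z+\tau)}{\theta(z+\tau)} = \frac{\theta'(z)}{\theta(z)} - 2\pi i n$, so that the two edges combine to $\frac{1}{2\pi i}\int_0^1 2\pi i n\,dt = n$. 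Hence $N=n$, which is the first claim.

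For the rigidity statement, suppose $\theta_1,\theta_2$ are two nonzero $n$-theta functions sharing the same zeros and multiplicities, and set $f:=\theta_1/\theta_2$. Because the divisors coincide, every pole of $f$ at a zero of $\theta_2$ is cancelled by a zero of $\theta_1$, so $f$ extends to an entire function. Because both $\theta_1$ and $\theta_2$ satisfy the same relations \eqref{eq:theta-qp}, the automorphy factors cancel in the quotient, giving $f(z+1)=f(z)$ and $f(z+\tau)=f(z)$; thus $f$ is doubly periodic. A doubly periodic entire function attains all its values on the compact set $\overline P$ and is therefore bounded, hence constant by Liouville's theorem. Consequently $\theta_1 = c\,\theta_2$ for some $c\in\C$, so the two are linearly dependent.

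The only genuinely delicate points are the orientation and sign bookkeeping when pairing the edges of $\partial P$, together with the preliminary arrangement that no zero lies on $\partial P$; both are routine. The conceptual heart of the argument is that the nontrivial automorphy factor enters only through the $\tau$-translation, and it is precisely the resulting constant $-2\pi i n$ in the logarithmic derivative that forces the degree to be exactly $n$.
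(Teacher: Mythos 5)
Your proof is correct and takes essentially the same approach as the paper's: the zero count via the argument principle on a period parallelogram, with the edges related by $z\mapsto z+1$ cancelling and the $\tau$-edges contributing $n$ through the shift $\frac{\theta'}{\theta}(z+\tau)=\frac{\theta'}{\theta}(z)-2\pi i n$, and the rigidity statement via the quotient $f=\theta_1/\theta_2$ being entire and doubly periodic, hence constant by Liouville. Your preliminary arrangement that no zero lies on $\partial P$ corresponds to the paper's assumption, without loss of generality, that all zeros lie in the interior of the fundamental cell.
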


\begin{theorem} \label{PS}
$\theta_{1,0}$ has a simple zero at $\frac{1}{2}(1+\tau)$
\end{theorem}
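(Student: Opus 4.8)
The plan is to reduce the statement to two elementary functional equations for $\theta_{1,0}$ — evenness and the $\tau$-quasiperiodicity — and then extract simplicity for free from the zero-counting result Theorem \ref{unique}. Existence of the zero at $z_0:=\tfrac12(1+\tau)$ will be forced by comparing the value $\theta_{1,0}(-z_0)$, computed in two different ways, and observing the two computations differ by a sign. Simplicity then follows because a $1$-theta function has exactly one zero (with multiplicity) in a fundamental cell.

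First I would read off the needed identities directly from the series. For $n=1$ the definition \eqref{thetaBasis} specializes to $\theta_{1,0}(z)=\sum_{l\in\Z} e^{\pi i l^2\tau+2\pi i l z}$ (here $\gamma=e^{\pi i\tau}$ and $[0]_1=\Z$). The substitution $l\mapsto -l$ immediately gives evenness, $\theta_{1,0}(-z)=\theta_{1,0}(z)$, and $e^{2\pi i l}=1$ gives the $1$-periodicity $\theta_{1,0}(z+1)=\theta_{1,0}(z)$. For the $\tau$-quasiperiodicity I would complete the square in the exponent, $\pi i\tau(l^2+2l)=\pi i\tau\big((l+1)^2-1\big)$, and reindex $l\mapsto l-1$, obtaining
\[
\theta_{1,0}(z+\tau)=e^{-\pi i\tau-2\pi i z}\,\theta_{1,0}(z).
\]
(I would derive this from scratch rather than quote Theorem \ref{Basis}(3), since that property degenerates at $n=1$, where $\theta_{1,1}=\theta_{1,0}$.)

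Next I would combine these. Writing $-z_0=z_0-1-\tau$ and using $1$-periodicity gives $\theta_{1,0}(-z_0)=\theta_{1,0}(z_0-\tau)$. Inverting the quasiperiodicity relation yields $\theta_{1,0}(w-\tau)=e^{-\pi i\tau+2\pi i w}\theta_{1,0}(w)$, so at $w=z_0$,
\[
\theta_{1,0}(z_0-\tau)=e^{-\pi i\tau+2\pi i z_0}\,\theta_{1,0}(z_0)=e^{\pi i}\,\theta_{1,0}(z_0)=-\,\theta_{1,0}(z_0),
\]
where $2\pi i z_0=\pi i(1+\tau)$ makes the factor collapse to $-1$. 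On the other hand evenness gives $\theta_{1,0}(-z_0)=\theta_{1,0}(z_0)$. Equating the two expressions for $\theta_{1,0}(-z_0)$ forces $\theta_{1,0}(z_0)=-\theta_{1,0}(z_0)$, hence $\theta_{1,0}(z_0)=0$. Finally, since $z_0$ lies in a fundamental cell and Theorem \ref{unique} says $\theta_{1,0}$ has exactly one zero there counted with multiplicity, this zero is necessarily simple (and is the only one).

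I do not expect a genuine obstacle here: the whole argument is bookkeeping of the quasiperiodicity factor, and the only point requiring care is tracking the exponential prefactor correctly and confirming it evaluates to $-1$ precisely at the half-period $z_0$. Simplicity is handed to us by the counting theorem, so no separate derivative computation $\theta_{1,0}'(z_0)\neq0$ is needed.
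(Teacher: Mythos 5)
Your proof is correct and rests on the same mechanism as the paper's: Theorem \ref{PS} is proved there by citing Proposition \ref{prop:zero-psi}, whose argument likewise combines the evenness $\psi(-x)=\psi(x)$ with the quasiperiodicity relations evaluated at the half-period $\frac{1}{2}(1+\tau)$ to force $(e^{in\pi}-1)\psi=0$; you have carried out the identical parity-plus-quasiperiodicity computation directly on the theta-function side using the explicit series, with all prefactors checked correctly. Your explicit appeal to Theorem \ref{unique} to upgrade the zero to a \emph{simple} zero is a point the paper leaves implicit, so making it is a small but genuine improvement rather than a deviation.
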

\begin{proof}
See Proposition \ref{prop:zero-psi}. 
\end{proof}

\begin{theorem} \label{property}
Suppose that $\theta \in V_n$ and $\sigma \in V_m$, then $\theta\sigma \in V_{n+m}$.
\end{theorem}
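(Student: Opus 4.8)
The plan is to prove the statement directly from the defining quasiperiodicity of $n$-theta functions, which was derived in the proof of Proposition \ref{prop:Landau-ham-spec}. Recall that $\theta \in V_n$ means that $\theta$ is entire and satisfies the two transformation laws $\theta(z+1,\tau) = \theta(z,\tau)$ and $\theta(z+\tau,\tau) = e^{-2\pi i n z}\,e^{-i n \pi \tau}\,\theta(z,\tau)$; likewise $\sigma \in V_m$ satisfies $\sigma(z+1,\tau)=\sigma(z,\tau)$ and $\sigma(z+\tau,\tau)=e^{-2\pi i m z}\,e^{-i m \pi\tau}\,\sigma(z,\tau)$. Since $V_{n+m}$ is characterized by exactly this same pair of relations with $n$ replaced by $n+m$, it suffices to check that the product $\theta\sigma$ obeys them.

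First I would observe that $\theta\sigma$ is entire, being a product of entire functions, so the only thing to verify is the pair of automorphy factors. Periodicity in the first lattice direction is immediate, $(\theta\sigma)(z+1,\tau) = \theta(z+1,\tau)\sigma(z+1,\tau) = \theta(z,\tau)\sigma(z,\tau)$. For the second direction I would simply multiply the two quasiperiodicity relations and collect exponents:
\begin{align*}
(\theta\sigma)(z+\tau,\tau) &= \big(e^{-2\pi i n z}\,e^{-i n\pi\tau}\,\theta(z,\tau)\big)\big(e^{-2\pi i m z}\,e^{-i m\pi\tau}\,\sigma(z,\tau)\big) \\
&= e^{-2\pi i(n+m)z}\,e^{-i(n+m)\pi\tau}\,(\theta\sigma)(z,\tau),
\end{align*}
which is precisely the automorphy factor for $V_{n+m}$. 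Hence $\theta\sigma \in V_{n+m}$.

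There is essentially no obstacle here: the content is just that the automorphy factors of theta functions are multiplicative and that the ``weights'' $n$ and $m$ add, which is transparent once one works with the transformation laws rather than the Fourier-coefficient recursion. The only point I would be careful to state is that the intended definition of $V_{n+m}$ is indeed given by these same two relations (holomorphy together with the stated quasiperiodicity), so that matching the automorphy factors is enough; this is exactly how $V_n$ was introduced via \eqref{psi-nullL} in Proposition \ref{prop:Landau-ham-spec}. Equivalently, one could expand in the basis $\theta_{n,m}$ of Theorem \ref{Basis} and verify the product lands in the span, but the automorphy-factor argument is shorter and avoids any reindexing of Fourier coefficients.
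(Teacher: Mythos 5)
Your proof is correct and is exactly the computation the paper leaves implicit: the paper's proof of Theorem \ref{property} is the single word ``Inspection,'' and the inspection in question is precisely the multiplicativity of the automorphy factors you spell out (together with the equivalence, established in the proof of Proposition \ref{prop:Landau-ham-spec}, between the Fourier-coefficient recursion \eqref{theta-repr} and the two quasiperiodicity relations). Nothing further is needed.
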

\begin{proof}
Inspection.
\end{proof}

The proof of the theorems consists of the following lemmas:

\begin{proof}[Proof of Theorem \ref{Basis}]
Expanding in $e^{2\pi i k z}$ for $k \in \F{Z}$, the coefficients of any elemetn of $V_n$ satisfies the recurssion $c_{m+n} = c_m e^{i(2m+n)\pi \tau}$. This recursion implies that for $0 \leq m \leq n-1$, we have that
\begin{align}
	c_{m+ln} = c_m e^{i\pi\tau(l^2n+2lm)}
\end{align}
where $l$ is an integer. So the functions
\begin{align}
	\sum_{k \in \F{Z}} e^{i\pi\tau(k^2n+2km)} e^{2\pi i(nk +m)z} ,\ m =0,...,n-1
\end{align}
form a basis for the eigenspace. If we let $l = kn+m$, then we can rewrite the above as
\begin{align}
	\sum_{l \in [m]_n} e^{i\pi\tau \frac{l^2-m^2}{n}} e^{2\pi ilz} = e^{-i\pi m^2/n} \theta_m
\end{align}

Now we prove the three bullet points. We note that
\begin{align}
	\theta_m(z+\frac{1}{n}) =\sum_{l \in [m]_n} \gamma^{l^2} e^{2\pi i l z} e^{2\pi i l/n}
\end{align}
Since $l \in [m]_n$, we have that $l/n - m/n \in \F{Z}$. Hence
\begin{align}
	\theta_m(z+\frac{1}{n}) = e^{2\pi i m/n} \theta_m(z)
\end{align}

Now for the second item, we note
\begin{align}
	\theta_m(-z) &= \sum_{l \in [m]_n} \gamma^{l^2} e^{-2\pi i l z} \\
		&= \sum_{l \in [m]_n} \gamma^{(-l)^2} e^{2\pi i (-l) z} \\
		&= \sum_{l \in [n-m]_n} \gamma^{l^2} e^{2\pi i l z} \\
		&= \theta_{n-m \bmod n}(z)
\end{align}

Finally, recalling that $\gamma = e^{\pi i \tau/n}$, we note that
\begin{align}
	\theta_m(z+\tau/n) &= \sum_{k \in \F{Z}} \gamma^{(kn+m)^2} e^{2\pi i(kn+m)z+2\pi i(kn+m)\tau/n} \\
		&= \gamma^{-1} \sum_{k \in \F{Z}} \gamma^{k^2n^2+2knm+m^2+2kn+2m+1} e^{2\pi i(kn+m)z} \\
		&= \gamma^{-1} \sum_{k \in \F{Z}} \gamma^{(kn+m+1)^2} e^{2\pi i(kn+m)z} \\ 
		&= \gamma^{-1} e^{-2\pi i z} \theta_{m+1 \bmod n}(z)
\end{align}
\end{proof}

\begin{proof}[Proof of Theorem \ref{unique}]
First we prove that elements of $V_n$ has exactly $n$ zeros modulo translation by lattice elements. We compute the winding number of $\theta$. First, since $\theta$ is holomorphic, its zeros are discrete. Hence we may assume WLOG that all the zeros are in the interior of the fundamental domain. Let $\Omega$ denote the fundamental domain. Then the total number of zeros of $\theta$ is given by 
\begin{align}
	\frac{1}{2\pi i} \int_{\partial \Omega} \frac{\theta'}{\theta} dz
\end{align}
Since $\theta(z) = \theta(z+1)$, the integral along the $t\tau$ and $t\tau + 1$ for $t \in [0,1]$ is zero. Let $y(z) = e^{-i\pi \tau} e^{\alpha z}$ where $\alpha = -2\pi i$. Since $\theta(z+\tau) = y^n \theta(z)$ and $y' = \alpha y$, we see that $\theta'(z+\tau) = y^n(z) \theta'(z) + n \alpha y^n(z) \theta(z)$. Hence, only the horizonal segment of the line integral contribute:
\begin{align}
	\frac{1}{2\pi i} \int_{\partial \Omega} \frac{\theta'}{\theta} dz &= \frac{1}{2\pi i}\int_0^1 \frac{\theta'(t)}{\theta(t)} - \frac{\theta'(1+\tau - t)}{\theta(1+\tau-t)} dt \\
		&= \frac{1}{2\pi i} \int_0^1 \frac{\theta'(t)}{\theta(t)} - \frac{y^n(1-t) \theta'(1- t) + n \alpha y^n(1-t) \theta(1-t)}{y^n(1-t) \theta(1-t)} dt \\
		&= \frac{1}{2\pi i} \int_0^1 \frac{\theta'(t)}{\theta(t)} - \frac{\theta'(1- t) + n \alpha \theta(1-t)}{\theta(1-t)} dt \\
		&= \frac{1}{2\pi i} \int_0^1 \frac{\theta'(t)}{\theta(t)} - \frac{\theta'(1- t)}{\theta(1-t)} dt + n \\
		&= n
\end{align}
Next, we show that any two theta functions that share the same zeros (counting multiplicity) are linearly dependent. Let $\theta$ and $\varphi$ be the two nonzero zeta functions that shares the same zeros. Set $f(z) = \theta(a)/\varphi(z)$. We show that
\begin{enumerate}
	\item $f(z)$ can be extended analytically to all of $\F{C}$ and
	\item $f(z)$ is doubly periodic.
\end{enumerate}
Certainly $f$ is holomorphic away from zeros of $\varphi$. We only need to show that $f$ can be extended analytically to zeros of $\varphi$. But this is precisely the requirement that $\theta$ and $\varphi$ share the same zeros (counting multiplicity).

For the second item, we note that
\begin{align}
	f(z+1) =\theta(z+1)/\varphi(z+1) = \theta(z)/\varphi(z) = f(z)
\end{align}
and
\begin{align}
	f(z+\tau)  = \frac{\theta(z+\tau)}{\varphi(z+\tau)} = \frac{e^{-2\pi i nz} e^{-\pi i n \tau} \theta(z)}{e^{-2\pi i nz} e^{-\pi i n \tau} \varphi(z)} = \frac{\theta(z)}{\varphi(z)} = f(z)
\end{align}
This shows that $f$ is doubly periodic.

Now, Liouville's theorem shows that $f$ must be constant. It follows that $\theta$ and $\varphi$ are collinear.
\end{proof}

\subsection{Classification of singular $n$-theta functions}
\begin{theorem} \label{singular}
Let $X_n$ be the set of singular $n$-theta functions mod scaling. Then
\begin{align}
	X_n = \left\{ \theta_0^n\left( z + \frac{1}{n}(a+b\tau) \right) e^{2\pi i b z} : a,b \in \F{Z} \right\}
\end{align}
where $\theta_0$ is a basis for $V_1$. Moreover, $|X_n| = n^2$. The location of zeros of elements in $X_n$ form the set
\begin{align}
	\frac{1}{2}(1+\tau) + \frac{1}{n}(\F{Z} +\tau\F{Z})
\end{align}
\end{theorem}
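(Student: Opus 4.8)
The plan is to prove the asserted equality of sets by two inclusions: the forward one by direct verification, and the reverse one by pinning down the location of the $n$-fold zero via a sum-of-zeros computation and then invoking the uniqueness statement of Theorem \ref{unique}. First I would check that each candidate $F_{a,b}(z):=\theta_0\!\left(z+\tfrac1n(a+b\tau)\right)^{\!n}e^{2\pi i b z}$, with $a,b\in\F{Z}$, is a singular $n$-theta function. Writing $c=\tfrac1n(a+b\tau)$, periodicity under $z\mapsto z+1$ is immediate since $\theta_0(z+1)=\theta_0(z)$ and $e^{2\pi i b}=1$. Under $z\mapsto z+\tau$ I would use the $V_1$ relation $\theta_0(z+\tau)=e^{-2\pi i z-i\pi\tau}\theta_0(z)$ together with $nc=a+b\tau$, so that $e^{-2\pi i n c}=e^{-2\pi i b\tau}$; a short computation then gives $F_{a,b}(z+\tau)=e^{-2\pi i n z-i n\pi\tau}F_{a,b}(z)$, which is exactly the quasi-periodicity defining $V_n$. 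Since $\theta_0$ has a single simple zero at $\tfrac12(1+\tau)$ by Theorem \ref{PS} and the exponential factor never vanishes, $F_{a,b}$ has a single zero of multiplicity $n$, located at $z_0=\tfrac12(1+\tau)-\tfrac1n(a+b\tau)$. Thus each $F_{a,b}$ lies in $X_n$ and its zero lies in $\tfrac12(1+\tau)+\tfrac1n(\F{Z}+\tau\F{Z})$.

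Next I would settle the zero set and the count. As $(a,b)$ runs over $\F{Z}^2$, the point $-\tfrac1n(a+b\tau)$ runs over $\tfrac1n\lat_\tau$, and modulo $\lat_\tau$ it takes exactly $n^2$ distinct values, indexed by $(a,b)\bmod n$. By Theorem \ref{unique}, two $n$-theta functions with the same zeros counted with multiplicity are proportional, so two candidates $F_{a,b}$ represent the same element of $X_n$ precisely when their $n$-fold zeros coincide modulo $\lat_\tau$. Hence the family $F_{a,b}$ yields exactly $n^2$ elements of $X_n$, whose zeros fill out $\tfrac12(1+\tau)+\tfrac1n(\F{Z}+\tau\F{Z})$, as claimed.

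The crux is the reverse inclusion: every singular $n$-theta function $\theta$, with its $n$-fold zero at some $z_0$, must arise in this way. The key is to determine $z_0$ by computing the \emph{sum} of the zeros of $\theta$, weighted by multiplicity, as the contour integral $\frac{1}{2\pi i}\oint_{\partial\Omega} z\,\theta'(z)/\theta(z)\,dz$ over a fundamental parallelogram $\Omega$ (with the zeros moved into the interior, exactly as in the proof of Theorem \ref{unique}). Pairing opposite edges and using $\theta(z+1)=\theta(z)$ and $\theta(z+\tau)=e^{-2\pi i n z-i n\pi\tau}\theta(z)$ — whence $\theta'(z+\tau)/\theta(z+\tau)=\theta'(z)/\theta(z)-2\pi i n$ — the boundary contributions collapse to $\tfrac n2(1+\tau)$ up to an element of $\lat_\tau$ coming from the integer winding numbers along the two edges. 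This yields $\sum_j z_j\equiv\tfrac n2(1+\tau)\pmod{\lat_\tau}$, and for a singular function $\sum_j z_j=n z_0$, so $n z_0\equiv\tfrac n2(1+\tau)$, i.e. $z_0\in\tfrac12(1+\tau)+\tfrac1n\lat_\tau$. Therefore $z_0$ is the zero of some $F_{a,b}$, and Theorem \ref{unique} forces $\theta$ to be a scalar multiple of that $F_{a,b}$, completing the reverse inclusion.

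I expect the edge-pairing bookkeeping in the sum-of-zeros integral to be the only delicate point, since it refines the winding-number argument already used for Theorem \ref{unique} (there one integrates $\theta'/\theta$, here one weights by $z$, which introduces the extra linear terms that produce $\tfrac n2(1+\tau)$). Everything else reduces to the quasi-periodicity relations of $V_1$ and $V_n$ and the uniqueness theorem, both already available.
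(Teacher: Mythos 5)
Your proposal is correct, and while the forward inclusion (the direct verification that each $F_{a,b}\in V_n$ via the quasi-periodicity relations, with its $n$-fold zero at $\frac{1}{2}(1+\tau)-\frac{1}{n}(a+b\tau)$ by Theorems \ref{PS} and \ref{property}) coincides with the paper's construction via Lemma \ref{nsqr}, your reverse inclusion takes a genuinely different route. The paper locates the admissible positions of $n$-fold zeros through the Wronskian $\Theta=\det\bigl(\theta^{(i)}_j\bigr)$: Proposition \ref{Wronskian} shows the zeros of $\Theta$ are exactly the possible locations of singular zeros, a corollary shows $\Theta\in V_{n^2}$ so there are at most $n^2$ such locations, and the explicit family $\theta_{a,b}$ then saturates this bound (note the count via $\Theta\in V_{n^2}$ alone only gives an upper bound, since the $n^2$ zeros of $\Theta$ could a priori coincide; the paper needs the explicit construction to close the argument, just as you need Theorem \ref{unique}). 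You instead compute the sum of zeros by the weighted argument principle, $\frac{1}{2\pi i}\oint_{\partial\Omega} z\,\theta'(z)/\theta(z)\,dz$, obtaining $\sum_j z_j\equiv \frac{n}{2}(1+\tau)\pmod{\lat_\tau}$ from the relation $\theta'(z+\tau)/\theta(z+\tau)=\theta'(z)/\theta(z)-2\pi i n$, so that $nz_0\equiv\frac{n}{2}(1+\tau)$ pins down $z_0\in\frac{1}{2}(1+\tau)+\frac{1}{n}\lat_\tau$ in one stroke; this is the classical Abel-type theorem for theta functions, and your edge-pairing computation is sound. Your route is shorter and more self-contained for this theorem, determining the exact set of zero locations directly rather than sandwiching a count; the paper's heavier Wronskian machinery is not wasted, however, since Proposition \ref{Wronskian} is reused later (in Section \ref{sec:V-Vdiv}, where for $n=2$ the function $\theta_4$ is realized as the Wronskian of $\theta_0$ and $\theta_1$), and it additionally yields the symmetry relations $\Theta(-z)=(-1)^{n+1}\Theta(z)$ and the translation laws, which have independent use. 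One small point of care in your write-up: you should note explicitly that the simplicity of the zero of $\theta_0$ (Theorem \ref{PS}) is what guarantees the zero of $F_{a,b}$ has multiplicity exactly $n$, so that $F_{a,b}$ is indeed singular; you do mention simplicity, so this is a matter of emphasis rather than a gap.
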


As before, we establish the theorem through various lemmas. The idea of the proof is as follows: by Theorem \ref{unique}, we may identify elements of $X_n$ with the location of their zeros. We attempt to locate the zeros of singular $n$-theta function first and show that there are only $n^2$ possible locations in a fundamental cell. So $|X_n|=n^2$. Then we explicitly construct $n^2$ singular $n$-theta functions to complete the proof.

To locate the zeros of singular $n$-theta functions, we study the Wronskian of a particular set of nice basis element: $\Theta(z) := \det( \theta^{(i)}_j )$ for $i,j \in \{0,...,n-1\}$, where $\theta^{(i)}_j$ means the $i$-th derivative of $\theta_j$ (see equation \eqref{thetaBasis} for definition $\theta_j$). 

\begin{proposition} \label{Wronskian}
The function $\Theta$ is holomorphic and
\begin{enumerate}
	\item The locations of the zeros of $\Theta$ are exactly the locations where a singular $n$-theta function can have zero.
	\item $\Theta(-z) = (-1)^{n+1} \Theta(z)$,
	\item $\Theta(z+1/n) = (-1)^{n+1} \Theta(z)$,
	\item $\Theta(z+\tau/n) = (-1)^{n+1}\gamma^{n(n-1)} y^n \Theta(z)$ where $y = e^{-i\pi \tau}e^{\alpha z}$ and $\alpha = -2\pi i$.
\end{enumerate}
\end{proposition}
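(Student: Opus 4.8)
The plan is to establish the four assertions separately, treating holomorphicity and the zero-locus statement first, and then deriving each transformation law as a sequence of row and column operations on the Wronskian matrix $\left(\theta_j^{(i)}\right)_{i,j=0}^{n-1}$ driven by the three rules of Theorem \ref{Basis} (I write $\theta_j:=\theta_{n,j}$ throughout, and recall $\alpha=-2\pi i$, $\gamma=e^{\pi i\tau/n}$, $y=e^{-i\pi\tau}e^{\alpha z}$). Holomorphicity is immediate: each $\theta_j$ is entire, hence so is every derivative, and $\Theta$ is a polynomial in these entries. For item (1), I would use the standard Wronskian criterion: $\Theta(z_0)=0$ exactly when the columns $\big(\theta_j(z_0),\dots,\theta_j^{(n-1)}(z_0)\big)^{T}$ are linearly dependent, i.e.\ when some nonzero combination $\theta:=\sum_j c_j\theta_j\in V_n$ satisfies $\theta(z_0)=\theta'(z_0)=\cdots=\theta^{(n-1)}(z_0)=0$. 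This says precisely that $\theta$ has a zero of order $\geq n$ at $z_0$; since by Theorem \ref{unique} every element of $V_n$ has exactly $n$ zeros per cell, such a $\theta$ must have its whole divisor concentrated at $z_0$, that is, $\theta$ is a singular $n$-theta function with its $n$-fold zero there. Conversely such a singular function forces the columns to be dependent, so the zero set of $\Theta$ coincides with the admissible locations of zeros of singular $n$-theta functions.

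For item (3), differentiating rule (1) of Theorem \ref{Basis} $i$ times lets the constant factor $e^{2\pi i m/n}$ pass through the derivative, so shifting by $1/n$ multiplies the $m$-th column by $e^{2\pi i m/n}$; taking the product over $m=0,\dots,n-1$ gives the scalar $e^{2\pi i(\sum_m m)/n}=e^{\pi i(n-1)}=(-1)^{n+1}$. For item (2), rule (2) yields $\theta_m^{(i)}(-z)=(-1)^i\theta_{n-m\bmod n}^{(i)}(z)$; pulling $(-1)^i$ out of each row contributes $(-1)^{n(n-1)/2}$, while the column permutation $m\mapsto n-m\bmod n$ (the reversal fixing $0$) contributes its signature $(-1)^{\lfloor(n-1)/2\rfloor}$. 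A short parity check in the even and odd cases shows these combine to $(-1)^{n+1}$.

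The hard part is item (4), where the derivatives mix nontrivially. I would expand rule (3), $\theta_m(z+\tau/n)=\gamma^{-1}e^{-2\pi iz}\theta_{m+1\bmod n}(z)$, by the Leibniz rule. Writing $\phi(z):=\gamma^{-1}e^{-2\pi iz}$ and using $\phi^{(l)}=\alpha^l\phi$, the $(i,m)$ entry of the shifted matrix is $\phi(z)\sum_{l=0}^{i}\binom{i}{l}\alpha^l\theta_{m+1}^{(i-l)}(z)$; the $l=0$ term reproduces $\theta_{m+1}^{(i)}$ and all other terms involve strictly lower-order derivatives. Hence the shifted matrix equals $\phi(z)\,C\,B$, where $B_{i,m}=\theta_{m+1\bmod n}^{(i)}$ and $C$ is unit lower-triangular, so $\det C=1$. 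The cyclic relabeling $m\mapsto m+1\bmod n$ of the columns gives $\det B=(-1)^{n-1}\Theta(z)=(-1)^{n+1}\Theta(z)$, and extracting $\phi$ from each of the $n$ rows produces $\phi^n=\gamma^{-n}e^{-2\pi inz}$, which the exponential bookkeeping identifies with $\gamma^{n(n-1)}y^n$ since $\gamma^{-n}=e^{-\pi i\tau}=\gamma^{n(n-1)}e^{-i\pi\tau n}$. Combining gives $\Theta(z+\tau/n)=(-1)^{n+1}\gamma^{n(n-1)}y^n\Theta(z)$. I expect the one genuine subtlety to be recognizing that the Leibniz expansion is exactly a unipotent (determinant-one) row transformation, so that only the cyclic column shift and the scalar $\phi^n$ survive in the determinant; the matching of $\phi^n$ with $\gamma^{n(n-1)}y^n$ is then routine.
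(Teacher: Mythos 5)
Your proof is correct and follows essentially the same route as the paper's: the kernel/Wronskian criterion combined with Theorem \ref{unique} for item (1), column scalings and permutations for items (2)--(3), and a Leibniz-rule factorization into a triangular matrix times a cyclic column shift for item (4), with only cosmetic differences (you extract the scalar $\phi^n=\gamma^{n(n-1)}y^n$ up front so the triangular factor is unipotent, where the paper keeps $y$ on its diagonal, and you use the signature of the reversal permutation where the paper counts signs case by case). You also consistently use $\theta_m(z+\tau/n)=\gamma^{-1}e^{-2\pi iz}\theta_{m+1\bmod n}(z)$, matching $\alpha=-2\pi i$ and the proof of Theorem \ref{Basis} (whose statement carries a sign typo), so your bookkeeping for item (4) is sound.
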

\begin{proof}
We recall that the $\theta_m$'s form a basis for $V_n$. If $\theta(z) = \sum_{m} a^m \theta_m(z)$ has $n$ zeros at $z_0$, then 
\begin{align}
	0 = \theta^{(i)}(z_0) = \sum_{m} a^m \theta^{(i)}_m(z)
\end{align}
for $i=0,...,n-1$. So the matrix $(\theta^{(i)}_j(z_0))$ has a nonzero vector $(a^0,...,a^{n-1})$ in its kernel. Hence $\Theta(z_0) = 0$. Conversely, if $\Theta(z_0)=0$, then we can find a nonzero vector $(a^0,....,a^{n-1})$ in the kernel of the matrix $(\theta^{(i)}_j(z_0))$. Then $\theta = a^m \theta_m$ has $n$-zeros at $z_0$. 

Recall from Theorem \ref{Basis} that $\theta_m(-z) = \theta_{n-m \bmod n}(z)$. It follows that $\theta^{(k)}_m(-z) = (-1)^k\theta^{(k)}_{n-m \bmod n}(z)$. If $n$ is even, then after $z \mapsto -z$, every even row in the matrix $(\theta^{(i)}_j)$ picks up a minus sign, and moreover, we need to interchange the $m$-th collumn with the $(n-m \bmod n)$-th collumn for $0 < m < n/2$. Together we pick up $n/2+n/2-1$ minus signs for $\Theta$. So $\Theta(-z) = -\Theta(z)$. If $n$ is odd, we pick up $(n-1)/2$ minus signs from the even rows and need to interchange $(n-1)/2$ columns. So $\Theta(-z) = \Theta(z)$.

Recall from Theorem \ref{Basis} that $\theta_m(z+1/n) = \zeta^m \theta_m(z)$ where $\zeta = e^{2\pi i /n}$. It follows after $z \mapsto z+1/n$, the $m$-th column of $(\theta^{(i)}_j)$ picks up a factor of $\zeta^{m-1}$. Hence $\Theta(z+1/n) = \zeta^{\sum_{k=0}^{n-1} k}\Theta(z) = (-1)^{n+1}\Theta(z)$.

Finally, we recall from Theorem \ref{Basis} and the definition $y =  e^{-i\pi \tau}e^{2\pi i z} = \gamma^{-n} e^{2\pi i z}$ that
\begin{align}
	\theta_m(z+\tau/n) =& \gamma^{-1} e^{2\pi i z} \theta_{m+1 \bmod n}(z) \\
		=& \gamma^{-1} \gamma^n y\theta_{m+1 \bmod n}(z) \\
		=& \gamma^{n-1} \theta_{m+1 \bmod n}(z) 
\end{align}
Repeated differentiation shows that
\begin{align}
	\theta_m^{(k)}(z+\tau) = \gamma^{n-1}  \sum_{i=0}^k {k \choose i} (y)^{(i)} \theta_{m+1}^{(k-i)}(z)
\end{align}
Hence
\begin{align}
	(\theta^{(i)}_j(z+\tau/n)) = \gamma^{n-1} E\left( \begin{array}{cccccc}
																		y \\
																		(y)' & y \\
																		(y)'' & 2(y)' & y \\
																		\vdots &&& \ddots \\
																		(y)^{n} && \cdots && y 
																	\end{array} \right) (\theta^{(i)}_j(z))
\end{align}
where $E$ is the matrix that corresponds to a permutation of collomns $(1,2,...,n) \mapsto (2,3,...,n,1)$. It follows that
\begin{align}
	\det &(\theta^{(i)}_j(z+\tau/n)) \notag\\ 
	&= (-1)^{n+1}\det \left[ \gamma^{n-1}  \left( \begin{array}{cccccc}
																		y \\
																		(y)' & y \\
																		(y)'' & 2(y)' & y \\
																		&&& \ddots \\
																		(y)^{n} &&&& y 
																	\end{array} \right) (\theta^{(i)}_j(z)) \right]
\end{align}
(where $(-1)^{n+1} = \det E$). Hence $\Theta(z+\tau) = (-1)^{n+1}\gamma^{n(n-1)} y^{n}\Theta(z)$.
\end{proof}

\begin{corollary}
$\Theta \in V_{n^2}$
\end{corollary}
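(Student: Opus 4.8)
The plan is to verify directly that $\Theta$ meets the two defining quasi-periodicity relations of an $n^2$-theta function, namely $\Theta(z+1) = \Theta(z)$ and $\Theta(z+\tau) = y^{n^2}\Theta(z)$ with $y = e^{-i\pi\tau}e^{-2\pi i z}$, together with holomorphy. Holomorphy is already recorded in Proposition \ref{Wronskian}, so only the periodicity relations remain, and both will be obtained by iterating the single-step relations (3) and (4) of that proposition exactly $n$ times.

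For the unit period, iterating relation (3), $\Theta(z+1/n) = (-1)^{n+1}\Theta(z)$, exactly $n$ times gives $\Theta(z+1) = (-1)^{n(n+1)}\Theta(z)$. Since $n(n+1)$ is even the sign is $+1$, so $\Theta(z+1) = \Theta(z)$; this step is immediate.

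The substantive step is the $\tau$-period. Here one must iterate relation (4), $\Theta(z+\tau/n) = (-1)^{n+1}\gamma^{n(n-1)}y^n\Theta(z)$, being careful that $y = y(z)$ depends on $z$. First I would record the shift rule $y(z+k\tau/n) = y(z)e^{-2\pi i k\tau/n}$, which follows at once from the explicit form of $y$. Iterating (4) $n$ times then produces $\Theta(z+\tau) = (-1)^{n(n+1)}\gamma^{n^2(n-1)}\bigl(\prod_{k=0}^{n-1} y(z+k\tau/n)\bigr)^{n}\Theta(z)$. I would evaluate the inner product using the shift rule and $\sum_{k=0}^{n-1}k = n(n-1)/2$, obtaining $\prod_{k=0}^{n-1} y(z+k\tau/n) = y(z)^n e^{-i\pi\tau(n-1)}$, so that the full product raised to the $n$-th power equals $y^{n^2}e^{-i\pi\tau n(n-1)}$.

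The main (and essentially only) obstacle is then the bookkeeping of the accumulated prefactors, which I expect to cancel exactly. Using $\gamma = e^{\pi i\tau/n}$ one has $\gamma^{n^2(n-1)} = e^{i\pi\tau n(n-1)}$, which precisely cancels the factor $e^{-i\pi\tau n(n-1)}$ coming from the product, while $(-1)^{n(n+1)} = 1$ as before. Hence $\Theta(z+\tau) = y^{n^2}\Theta(z)$, which together with $\Theta(z+1) = \Theta(z)$ and the holomorphy from Proposition \ref{Wronskian} shows $\Theta \in V_{n^2}$.
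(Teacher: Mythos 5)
Your proof is correct: the iteration is set up properly, the shift rule $y(z+k\tau/n)=y(z)e^{-2\pi i k\tau/n}$ is right, and the cancellation of $\gamma^{n^2(n-1)}=e^{i\pi\tau n(n-1)}$ against the factor $e^{-i\pi\tau n(n-1)}$ from the product does come out exactly, yielding the correct $V_{n^2}$ multiplier $y^{n^2}=e^{-2\pi i n^2 z}e^{-i\pi n^2\tau}$. However, your route to the $\tau$-quasiperiodicity differs from the paper's. You iterate item (4) of Proposition \ref{Wronskian} $n$ times, which is precisely why you must track the $z$-dependence of $y$ along the orbit $z, z+\tau/n,\dots$ and verify that the accumulated prefactors cancel. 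The paper sidesteps item (4) entirely for this step: since \emph{every} $\theta\in V_n$ satisfies the single clean relation $\theta(z+\tau)=Y\theta(z)$ with $Y=e^{-2\pi i nz-\pi i n\tau}$ (a full-period shift, so no cyclic permutation of the basis and no fractional-shift factors arise), repeated differentiation shows that the matrix $\bigl(\theta_j^{(i)}(z+\tau)\bigr)$ equals a lower-triangular matrix with all diagonal entries $Y$ applied to $\bigl(\theta_j^{(i)}(z)\bigr)$, whence $\Theta(z+\tau)=Y^n\Theta(z)=e^{-2\pi i n^2 z-\pi i n^2\tau}\Theta(z)$ in one stroke, with no cancellation to track. (The unit-period step is identical in both arguments.) Your approach buys reuse of the already-established fractional-shift relation at the cost of more delicate phase bookkeeping; the paper's buys transparency of the final prefactor at the cost of one more triangular-matrix computation. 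One caution worth noting if you cite item (4): the paper's sign conventions for $y$ are internally inconsistent (the statement of Proposition \ref{Wronskian} has $\alpha=-2\pi i$, while the proofs of Theorems \ref{Basis} and \ref{Wronskian} at points write $e^{+2\pi i z}$); your choice $y=e^{-i\pi\tau}e^{-2\pi i z}$ is the one consistent with the $V_n$ quasiperiodicity $\theta(z+\tau)=y^n\theta(z)$, which is why your computation lands on the correct defining condition for $V_{n^2}$.
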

\begin{proof}
The lemma above shows that
\begin{align}
	\Theta(z+1) =& \Theta(z+ \sum_{i=1}^n 1/n ) = (-1)^{(n+1)n} \Theta(z) = \Theta(z) 
\end{align}
We repeat the above proof with $\tau/n$ replaced by $\tau$. Note first that $\theta(z+\tau) = e^{-2\pi i nz -\pi i n \tau} \theta(z)$ for all $\theta \in V_n$. Set $Y = e^{-2\pi i nz -\pi i n \tau}$, then we see that
\begin{align}
	(\theta^{(i)}_j(z+\tau)) = \left( \begin{array}{cccccc}
																		Y \\
																		(Y)' & Y \\
																		(Y)'' & 2(Y)' & Y \\
																		\vdots &&& \ddots \\
																		(Y)^{n} && \cdots && Y 
																	\end{array} \right) (\theta^{(i)}_j(z))
\end{align}
Taking $\det$ of both sides, we see that $\Theta(z+\tau) = Y^n \Theta(z) = e^{-2\pi i n^2 z-\pi i n^2} \Theta(z)$, which is precisely the defining conditions of elements of $V_{n^2}$.
\end{proof}

\begin{corollary}
$|X_n| = n^2$.
\end{corollary}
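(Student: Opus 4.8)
The plan is to establish a bijection between $X_n$ and the set of \emph{distinct} zeros of the Wronskian $\Theta$ in a fundamental cell of $\lat_\tau$, and then to count the latter using the previous corollary ($\Theta\in V_{n^2}$) together with the quasi-periodicity of $\Theta$ under the finer lattice $\tfrac1n\lat_\tau$ recorded in Proposition \ref{Wronskian}.

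First I would make the correspondence precise. A singular $n$-theta function has all $n$ of its zeros concentrated at a single point $z_0$ modulo $\lat_\tau$. By Proposition \ref{Wronskian}(1), such a $z_0$ must be a zero of $\Theta$; conversely, for any zero $z_0$ of $\Theta$ a kernel vector of the matrix $(\theta^{(i)}_j(z_0))$ produces a $\theta\in V_n$ vanishing to order $\geq n$ at $z_0$, and since by Theorem \ref{unique} a degree-$n$ theta function has exactly $n$ zeros, this $\theta$ vanishes to order exactly $n$ there and is therefore singular. Finally, the uniqueness clause of Theorem \ref{unique} says that two theta functions with the same zeros are scalar multiples of one another, so the map sending a singular theta function to its zero location is well defined and injective on $X_n$. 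Together these give a bijection between $X_n$ and the distinct zeros of $\Theta$ modulo $\lat_\tau$.

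Next I would count these zeros. Since $\Theta\in V_{n^2}$, Theorem \ref{unique} gives exactly $n^2$ zeros, counted with multiplicity, in a fundamental cell of $\lat_\tau$. To upgrade this to $n^2$ \emph{distinct} zeros I would invoke the transformation laws of Proposition \ref{Wronskian}(3)--(4): the factors relating $\Theta(z+\tfrac1n)$ and $\Theta(z+\tfrac\tau n)$ to $\Theta(z)$ are nowhere vanishing, so the zero divisor of $\Theta$ is invariant under translation by $\tfrac1n\lat_\tau$. A fundamental cell of $\lat_\tau$ contains exactly $[\tfrac1n\lat_\tau:\lat_\tau]=n^2$ cells of $\tfrac1n\lat_\tau$, and by the periodicity of the divisor each such fine cell carries the same total multiplicity; as these sum to $n^2$, every fine cell contains precisely a single simple zero. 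Hence $\Theta$ has exactly $n^2$ distinct simple zeros modulo $\lat_\tau$, and combining this with the bijection of the previous paragraph yields $|X_n|=n^2$.

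The main obstacle is exactly this second step: the bare statement $\Theta\in V_{n^2}$ only controls the zeros with multiplicity, so one must rule out the possibility that the $n^2$ zeros coincide or clump. The cleanest route, as above, is to deduce $\tfrac1n\lat_\tau$-periodicity of the zero divisor from the nonvanishing quasi-period factors and then argue by the per-fine-cell count, rather than appealing to the explicit zero locations $\tfrac12(1+\tau)+\tfrac1n(\F{Z}+\tau\F{Z})$, whose determination is really the remaining content of Theorem \ref{singular}.
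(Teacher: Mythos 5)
Your proposal is correct, and in fact it does slightly more than the paper's own proof. The paper's argument has the same skeleton as your first paragraph: by Theorem \ref{unique}, $|X_n|$ equals the number of possible zero locations of singular $n$-theta functions; by Proposition \ref{Wronskian} this equals the size of the zero set of $\Theta$ modulo $\lat_\tau$; and since $\Theta \in V_{n^2}$, Theorem \ref{unique} is invoked once more to conclude $n^2$. But Theorem \ref{unique} counts zeros \emph{with multiplicity}, so as written the paper's corollary proof really only yields $|X_n| \le n^2$ (the very clumping issue you flag as "the main obstacle"); in the paper the matching lower bound is supplied only afterwards, by the explicit construction of the $n^2$ pairwise non-proportional functions $\theta_{a,b}(z) = \theta_0^n\bigl(z + \tfrac{1}{n}(a+b\tau)\bigr)e^{2\pi i b z}$ via Lemma \ref{nsqr} together with Theorems \ref{PS} and \ref{property} — this is exactly the strategy announced in the roadmap after Theorem \ref{singular}. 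Your alternative second step is genuinely different and self-contained: you deduce invariance of the zero divisor of $\Theta$ under the finer lattice $\tfrac{1}{n}\lat_\tau$ from the nowhere-vanishing quasi-period factors in Proposition \ref{Wronskian}(3)--(4), and then distribute the total multiplicity $n^2$ equally over the $n^2$ fine cells to conclude every zero is simple, so the zero set has exactly $n^2$ distinct points. This closes the gloss in the paper's proof without appealing to the later explicit construction, and it buys a bonus: it shows the zeros form a single coset $z_0 + \tfrac{1}{n}\lat_\tau$, which is most of the "location" clause of Theorem \ref{singular} once one zero is pinned down at $\tfrac{1}{2}(1+\tau)$. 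One small omission you share with the paper: both arguments implicitly use $\Theta \not\equiv 0$; this holds because $\Theta$ is the Wronskian of the linearly independent holomorphic functions $\theta_{n,0},\dots,\theta_{n,n-1}$, and deserves a sentence.
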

\begin{proof}
The uniqueness theorem \ref{unique} shows us that $|X_n|$ is equal to the number of possible locations of zeros of singular $n$-theta functions. Proposition \ref{Wronskian} shows that that this is equal to the size of the zero set of $\Theta$ mod $L_\tau$. Since $\Theta \in V_{n^2}$. We conclude by Theorem \ref{unique}, again, that $|X_n| = n^2$.
\end{proof}

Now, we obtain explicit formuli for elements of $X_n$. To do this, we need the following lemma

\begin{lemma} \label{nsqr}
If $\theta \in V_n$, so is
\begin{align}
	\gamma(z) = \theta\left( z + \frac{1}{n}(a+b\tau) \right) e^{2\pi i b z}
\end{align}
for $a,b \in \F{Z}$.
\end{lemma}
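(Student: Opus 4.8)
The plan is to verify directly that $\gamma$ satisfies the two defining quasi-periodicity relations of $V_n$ recorded just before \eqref{theta-repr}, namely $\theta(z+1)=\theta(z)$ and $\theta(z+\tau)=e^{-2\pi inz}e^{-in\pi\tau}\theta(z)$, and that it is entire. Holomorphy is immediate, since $\gamma$ is the product of the entire function $z\mapsto\theta\!\left(z+\tfrac1n(a+b\tau)\right)$ with the entire exponential $e^{2\pi ibz}$; so the whole content is the two functional equations, and the only mechanism at work is that the integrality of $a$ and $b$ renders the extra exponential factors trivial or cancelling.

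For the $1$-periodicity I would compute, using $\theta(w+1)=\theta(w)$ and $e^{2\pi ib}=1$,
\[
\gamma(z+1)=\theta\!\left(z+1+\tfrac1n(a+b\tau)\right)e^{2\pi ib(z+1)}
=\theta\!\left(z+\tfrac1n(a+b\tau)\right)e^{2\pi ibz}=\gamma(z).
\]

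For the $\tau$-relation I would set $w:=z+\tfrac1n(a+b\tau)$, so that $nw=nz+a+b\tau$, and apply the $\tau$-quasiperiodicity of $\theta$ at $w$:
\[
\gamma(z+\tau)=\theta(w+\tau)\,e^{2\pi ib(z+\tau)}
=e^{-2\pi inw}e^{-in\pi\tau}\theta(w)\,e^{2\pi ibz}e^{2\pi ib\tau}.
\]
Substituting $-2\pi inw=-2\pi inz-2\pi ia-2\pi ib\tau$ and discarding $e^{-2\pi ia}=1$ (as $a\in\Z$), the factors $e^{-2\pi ib\tau}$ and $e^{2\pi ib\tau}$ cancel, leaving
\[
\gamma(z+\tau)=e^{-2\pi inz}e^{-in\pi\tau}\bigl(\theta(w)e^{2\pi ibz}\bigr)
=e^{-2\pi inz}e^{-in\pi\tau}\gamma(z),
\]
which is exactly the defining $\tau$-relation for $V_n$. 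Together with holomorphy this shows $\gamma\in V_n$.

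I do not anticipate a genuine obstacle: the argument is a one-line holomorphy remark plus two short exponent computations. The only point requiring care is the bookkeeping of the prefactors in the $\tau$-relation — keeping the coefficient $-2\pi in$ and the constant $-in\pi\tau$ consistent with the normalization fixed before \eqref{theta-repr}, and checking that the shift $\tfrac1n(a+b\tau)$ is precisely the one that, after multiplication by $e^{2\pi ibz}$, makes the integer-dependent factors disappear. (Alternatively one could iterate the finer relations of Theorem \ref{Basis} $n$ times, but working directly with the full-period relations avoids this.)
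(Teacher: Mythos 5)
Your proposal is correct and is essentially identical to the paper's own proof: both verify the two defining quasi-periodicity relations $\gamma(z+1)=\gamma(z)$ and $\gamma(z+\tau)=e^{-2\pi inz}e^{-in\pi\tau}\gamma(z)$ by direct substitution, using $a,b\in\Z$ to kill $e^{2\pi ib}$ and $e^{-2\pi ia}$ and to cancel the $e^{\pm 2\pi ib\tau}$ factors. Your added remark on holomorphy is harmless (the paper leaves it implicit), and the exponent bookkeeping in your $\tau$-relation matches the paper's computation exactly.
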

\begin{proof}
We check that
\begin{align}
	\gamma(z+1) &= \theta\left( z + \frac{1}{n}(a+b\tau) +1  \right) e^{2\pi i b z+2\pi ib} \\
		&= \gamma(z)
\end{align}
since $b \in \F{Z}$. Similarly,
\begin{align}
	\gamma(z+\tau) &= \theta\left( z + \frac{1}{n}(a+b\tau) +\tau \right) e^{2\pi i b z+2\pi i b \tau} \\
		&= e^{-\pi i n \tau - 2\pi i n z - 2\pi i (a+b\tau)} \theta\left( z + \frac{1}{n}(a+b\tau) \right) e^{2\pi i b z +2\pi i b \tau} \\
		&= e^{-\pi i n \tau - 2\pi i n z} \gamma(z)
\end{align}
since $a,b \in \F{Z}$.
\end{proof}

Now, let $\theta_0$ be a basis for $V_1$. From theorem \ref{PS} and \ref{property}, we see that that $\theta_0^n \in X_n$, it follows by lemma \ref{nsqr} that
\begin{align}
	\theta_{a,b}(z):=\theta_0^n\left( z + \frac{1}{n}(a+b\tau) \right) e^{2\pi i b z}
\end{align}
are all in $X_n$ for $a,b \in \F{Z}$. But there are exactly $n^2 = |X_n|$ number of distinct such functions (mod scaling). So $X_n$ is contains exactly these elements. Moreover, by Proposition \ref{prop:zero-psi}, the zero of $\theta_0$ is at $\frac{1}{2}(1+\tau)$. So the zeros of $\theta_{a,b}$ are located at $\frac{1}{2}(1+\tau)-\frac{1}{n}(a+b\tau)$.

\section{Choice of $\chi_g$} \label{sec:H-equiv} 
The action of point groups is given by 
\begin{align}
	\psi(g x) = e^{i\chi_g} \psi(x). \label{gaugeEquiCondition}
\end{align}
for some $\chi_g$, which we determine below.

\begin{proposition}\label{prop:xig-xindep}
Let $g \in SH(\mathcal{L})$ and $\psi$ is a linear solution satisfying (\ref{gaugeEquiCondition}), then $\chi_g$ are constant.
\end{proposition}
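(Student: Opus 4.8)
The plan is to exploit the explicit representation of any linear solution in terms of a holomorphic theta function, which converts the reality of $\chi_g$ into a constant-modulus condition on a meromorphic function, and then invoke the rigidity of holomorphic functions of constant modulus.

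First I would recall from \eqref{psi-nullL} that every linear solution $\psi$, as in Section \ref{sec:operators}, has the form $\psi(x) = e^{-\frac{\pi n}{2\im\tau}(|z|^2 - z^2)}\theta(z,\tau)$, with $\theta$ holomorphic (and not identically zero) and $z = (x_1 + ix_2)/\sqrt{2\pi/\im\tau}$. Identifying $g \in SH(\cL) \subset SO(2)$ with multiplication by some $\zeta \in U(1)$ on the $z$-plane (the positive scaling factor $\sqrt{2\pi/\im\tau}$ commutes with rotations), and using $|\zeta z| = |z|$, I would substitute into the assumed relation $\psi(gx) = e^{i\chi_g(x)}\psi(x)$. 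The real-exponential factors $e^{-\frac{\pi n}{2\im\tau}|z|^2}$ cancel, leaving
\[
    e^{i\chi_g(x)} = e^{\frac{\pi n}{2\im\tau}(\zeta^2 - 1)z^2}\,\frac{\theta(\zeta z,\tau)}{\theta(z,\tau)} =: F(z).
\]

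The key observation is that $F$ is meromorphic in $z$: it is the product of the entire exponential factor with the ratio of the entire functions $\theta(\zeta z,\tau)$ and $\theta(z,\tau)$. On the open set $U := \{z \in \C : \theta(z,\tau) \neq 0\}$ — which is connected and dense, since the zeros of $\theta$ are discrete — the denominator does not vanish, so $F$ is holomorphic there. Because $\chi_g(x)$ is real-valued, $|F(z)| = 1$ throughout $U$. A holomorphic function of constant modulus on a connected open set is constant (by the maximum modulus principle, or the open mapping theorem), so $F \equiv c$ on $U$ with $|c| = 1$. Hence $e^{i\chi_g(x)} = c$ for all $x$ with $\psi(x) \neq 0$, i.e. $\chi_g$ is constant modulo $2\pi$, which is the assertion.

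I expect no serious obstacle here: the crux is simply recognizing that the reality of the gauge phase, combined with the holomorphic structure inherited from the theta representation, pins $F$ down to a constant by constant-modulus rigidity. Note that the holomorphy of $F$ on $U$ is automatic, since we only use points where the denominator $\theta(z,\tau)$ is nonzero; the possible zeros of $\theta(\zeta z,\tau)$ are irrelevant to the argument, so no separate zero-matching step is needed. The remaining ingredients — connectedness and density of $U$ and the standard rigidity statement — are routine.
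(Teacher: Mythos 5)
Your proof is correct, and it reaches the conclusion by a different rigidity mechanism than the paper. The paper's proof works with an abstract nonvanishing gauge factor $h$ satisfying $\dbar h = \frac{b}{2}xh$, forms the transition function $H_g(x) := h(gx)\,e^{i\chi_g}\,h(x)^{-1}$, requires it to be holomorphic, and computes $0 = \dbar H_g = \bigl(i\dbar\chi_g + \tfrac{b}{2}\bar{g}gx - \tfrac{b}{2}x\bigr)H_g$; since $|g|=1$ and $H_g$ is nonvanishing, this gives $\dbar\chi_g = 0$, and a real-valued function annihilated by $\dbar$ is constant. You instead solve for the gauge factor explicitly, $e^{i\chi_g(x)} = F(z) := e^{\frac{\pi n}{2\im\tau}(\zeta^2-1)z^2}\,\theta(\zeta z)/\theta(z)$, note that $F$ is automatically holomorphic on the connected open set $U$ where $\theta \neq 0$, and use the reality of $\chi_g$ to force $|F| \equiv 1$ there, so that $F$ is constant by maximum-modulus (or open-mapping) rigidity. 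The essential cancellation is the same in both arguments — your $|\zeta z| = |z|$ is exactly the paper's $\bar{g}g x = x$ — but your route buys something concrete: you never differentiate $\chi_g$, whereas the paper's computation of $\dbar\chi_g$ implicitly assumes $\chi_g$ is smooth; your version thus works for an arbitrary real-valued phase, and your remark that zeros of $\theta(\zeta z,\tau)$ need no separate matching is also right (the identity $|F|=1$ on $U$ subsumes it). Two small caveats, neither a gap: the discreteness of the zero set of $\theta$ uses $\psi \not\equiv 0$, which is implicit in ``linear solution''; and what you literally prove is that the gauge factor $e^{i\chi_g}$ is constant (equivalently $\chi_g$ is constant mod $2\pi$ off the zero set of $\psi$, where the condition is vacuous) — to upgrade to constancy of $\chi_g$ itself one invokes continuity of $\chi_g$, the same implicit hypothesis as in the paper, and the constancy of $e^{i\chi_g}$ is in any case all that is used afterwards.
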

\begin{proof}
We identify $SH(\mathcal{L})$ as a subset of $\C$ so that $gx$ is the multiplication of the two complex numbers $g$ and $x$. Assume that $\chi_g$ satisfies (\ref{gaugeEquiCondition}). Since $\psi$ is a linear solution, by (\ref{psi-nullL}), we can find a holomorphic theta function $\theta$ such that $\theta(x) = h(x)\psi(x)$ for some smooth, nonvanishing, $h$ with the property $(\bar \partial h)(x) = \frac{b}{2}	xh(x)$. Then (\ref{gaugeEquiCondition}) is equivalent to the fact that
\begin{align}
	H_g(x) := h(g x) e^{i\chi_g} h(x)^{-1}
\end{align}
is holomorphic. Taking $\dbar$, this requirement is equivalent to
\begin{align}
	0=& \dbar(h(g x) e^{i\chi_g} h( x )^{-1}) \\
	=& (i\dbar \chi_g + \frac{b}{2}\bar{g}gx - \frac{b}{2}x)h(g x) e^{i\chi_g} h(x)^{-1} .
\end{align}
Since $|g|=1$ and $h(g x) e^{i\chi_g} h(x)^{-1}$ is invertible, we see that
\begin{align}
	\dbar \chi_g = 0
\end{align}
Since $\chi_g$ are real valued, it is a constant.
\end{proof}

As a result of the the proposition, it suffices for us to look for gauge invariant $(\psi, A)$ under actions of $H(\cL)$ whose gauge factor $h_g(x)=e^{i \chi_g}$ is a constant. Hence we consider spaces of the form
\begin{align}
	\{ \psi(R_\xi^{-1} i x) = \eta \psi(x),\  R_\xi A(R_\xi^{-1} x) = \eta' A(x) \}
\end{align}
where $\eta, \eta' \in \F{C}$. One realizes that such space corresponds to irreducible representations of $H(\cL)$.

\section{Table of $C_6$-equivariant Theta Functions}  \label{app:ThetaTable}
\begin{align} \label{theta-table}
\begin{array}{ | c | c | c | } 
	\hline 
	\text{Vortex Number} & 
	 \text{Value of } r & \text{Theta functions that span $V_{n,6,r}$} \\ 
	\hline 
	n = 2 	& 0 			& \theta_0 \\
					& 2 	& \theta_2 \\ 
	\hline 
	n = 4		& 0				& \theta_0^2 \\
					& 1			& \theta_1 \\
					& 2		& \theta_0 \theta_1 \\
					&	4		& \theta_2^2 \\
	\hline
	n = 6 	& 0				& \theta_0^3, \ \theta_2^3, \theta_1^2 \theta_2^{-1} \\
					& 1			& \theta_0 \theta_1 \\ 
					& 2		& \theta_0^2 \theta_2 \\
					& 3		& \theta_1 \theta_2 \\
					& 4		& \theta_0 \theta_2^2 \\
	\hline
	n = 8 	& 0				& \theta_0^4, \ \theta_0 \theta_2^3 \\
					& 1			& \theta_0^2 \theta_1 \\
					& 2		& \theta_2^4, \ \theta_1^2, \ \theta_0^3 \theta_2 \\
					& 3		& \theta_0 \theta_1 \theta_2 \\
					& 4		& \theta_0^2 \theta_2^2 \\
					& 5		& \theta_1 \theta_2^2 \\
	\hline
	n = 10 	& 0				& \theta_0^5,\ \theta_0^2 \theta_2^3 \\
					& 1			& \theta_0^3 \theta_1, \ \theta_1 \theta_2^3 \\
					& 2		& \theta_0^4 \theta_2,\ \theta_0 \theta_2^4, \ \theta_0 \theta_1^2 \\
					& 3		& \theta_0^2 \theta_1 \theta_2 \\
					& 4		& \theta_0^3 \theta_2^2 , \ \theta_2^5, \ \theta_1^2 \theta_2 \\
					& 5		& \theta_0 \theta_1 \theta_2^2 \\
	\hline
\end{array}
\end{align}




\begin{thebibliography}{9}

\bibitem{Abr}
    A.~A.~Abrikosov,
    On the magnetic properties of superconductors of the second group.
    \emph{J. Explt. Theoret. Phys. (USSR)} \textbf{32} (1957), 1147--1182.

\bibitem{ABN}
    A.~Aftalion, X.~Blanc, and F.~Nier,
    Lowest Landau level functional and Bargmann spaces for Bose-Einstein condensates.
	\emph{J. Funct. Anal.} 241 (2006), 661--702.

\bibitem{AS}
    A.~Aftalion and S.~Serfaty,
    Lowest Landau level approach in superconductivity for the Abrikosov lattice close to $H\sb {c\sb 2}$.
     \emph{Selecta Math. (N.S.)} 13 (2007), 183--202.


\bibitem{Alfors}  L.~V.~Alfors,  \emph{Complex analysis}.
    McGraw-Hill, New York, 1979.

\bibitem{Al}
    Y.~Almog,
    On the bifurcation and stability of periodic solutions of the Ginzburg-Landau equations in the plane.
    \emph{SIAM J. Appl. Math.} 61 (2000), 149--171.

\bibitem{Al2}
    Y.~Almog,
    Abrikosov lattices in finite domains.
    \emph{Commun. Math. Phys.} 262 (2006), 677-702.


\bibitem{AmPr}
    A.~Ambrosetti and G.~Prodi,
    \emph{A Primer of Nonlinear Analysis}.
    Cambridge University Press, Cambridge, 1993.

\bibitem{BGT}
    E.~Barany, M.~Golubitsky, and J.~Turski,
    Bifurcations with local gauge symmetries in the Ginzburg-Landau equations.
     \emph{Phys. D}  56  (1992), 36--56.


\bibitem{BC}
M.S. Berger, Y. Y. Chen,
Symmetric vortices for the nonlinear Ginzburg-Landau
equations of superconductivity, and the
nonlinear desingularization phenomenon.
\emph{J. Fun. Anal.} {\bf 82} (1989) 259-295.


\bibitem{Ch}
    S.~J.~Chapman,
    Nucleation of superconductivity in decreasing fields.
	\emph{European J. Appl. Math.}  5  (1994),  449--468.

\bibitem{CHO}
    S.~J.~Chapman, S.~D.~Howison, and J.~R.~Ockedon,
    Macroscopic models of superconductivity.
     \emph{SIAM Rev.}  34  (1992),  529--560.

  \bibitem{DGP}
    Q.~Du, M.~D.~Gunzburger, and J.~S.~Peterson,
    Analysis and approximation of the Ginzburg-Landau model of superconductivity.
     \emph{SIAM Rev.}  34  (1992),  54--81.
     
  
\bibitem{DFN} D.A. Dubrovin, A.T. Fomenko, S.P. Novikov. Modern Geometry -- Methods and Applications. Part I. The Geometry of Sufraes, Transformation Groups, and Fields. Second Edition. Springer-Verlag.
   
     

\bibitem{Dutour}
	M.~Dutour,
	Phase diagram for Abrikosov lattice.
	\emph{J. Math. Phys.}  42  (2001), 4915--4926.

\bibitem{Dutour2}
	M.~Dutour,
	\emph{ Bifurcation vers l$ '\acute{e}$tat d�Abrikosov et diagramme des phases}. Thesis Orsay, http://www.arxiv.org/abs/math-ph/9912011.

\bibitem{Eil}
    G.~Eilenberger,
    Zu Abrikosovs Theorie der periodischen L\"osungen der GL-Gleichungen f\"ur Supraleiter 2.
   \emph{ Z. Physik}  180 (1964), 32--42.

\bibitem{FH}
S. Fournais, B. Helffer,
\emph{Spectral Methods in Surface Superconductivity.}
Progress in Nonlinear Differential Equations and their Applications, Vol \textbf{77}, Birkh\"auser, (2010).

\bibitem{GS}
S. Gustafson, I.M. Sigal,
The stability of magnetic vortices.
\emph{Comm. Math. Phys.} {\bf 212} (2000) 257-275.

\bibitem{GS2}
    S.~J.~Gustafson and I.~M.~Sigal,
    \emph{Mathematical Concepts of Quantum Mechanics}.
    Springer, 2006.

 \bibitem{GST}
    S.~J.~Gustafson, I.~M.~Sigal and T. Tzaneteas,
    Statics and dynamics of magnetic vortices and of Nielsen-Olesen (Nambu) strings.
    \emph{J. Math. Phys.} 51, 015217 (2010).

\bibitem{JT}
    A.~Jaffe and C.~Taubes,
    \emph{Vortices and Monopoles: Structure of Static Gauge Theories}.
    Progress in Physics 2. Birkh\"auser, Boston, Basel, Stuttgart, 1980.

\bibitem{KRA}
    W.H.~Kleiner, L.~M.~Roth, and S.~H.~Autler,
    Bulk solution of Ginzburg-Landau equations for type II superconductors: upper critical field region.
    \emph{Phys. Rev.} 133 (1964), A1226--A1227.

\bibitem{Lash}
    G.~Lasher,
    Series solution of the Ginzburg-Landau equations for the Abrikosov mixed state.
	\emph{Phys. Rev.} 140 (1965), A523--A528.

\bibitem{NV}
	S.~Nonnenmacher and A.~Voros,
	Chaotic eigenfunctions in phase space.
	 \emph{J. Statist. Phys.}  92  (1998),  431--518.

\bibitem{Odeh}
    F.~Odeh,
    Existence and bifurcation theorems for the Ginzburg-Landau equations.
	\emph{J. Math. Phys.} 8 (1967), 2351--2356.

\bibitem{Ov}
    Yu. N.~Ovchinnikov,
    Structure of the supercponducting state near the critical fiel $H_{c2}$ for values of the Ginzburg-Landau parameter $\kappa$ close to unity.
	\emph{JETP}. 85 (4) (1997), 818--823.

\bibitem{Rub}
    J.~Rubinstein,
    \emph{Six Lectures on Superconductivity}.
    Boundaries, interfaces, and transitions (Banff, AB, 1995), 163--184,
    CRM Proc. Lecture Notes, 13, Amer. Math. Soc., Providence, RI, 1998.

\bibitem{SS}
    E.~Sandier and S.~Serfaty,
    \emph{Vortices in the Magnetic Ginzburg-Landau Model}.
    Progress in Nonlinear Differential Equations and their Applications, Vol \textbf{70}, Birkh\"auser, 2007.
    
    

\bibitem{Sig}   I.~M.~Sigal,   Magnetic Vortices, Abrikosov Lattices and Automorphic Functions, in Mathematical and Computational
Modelling (With Applications in Natural and Social Sciences, Engineering, and the Arts), A JOHN WILEY $\&$ SONS, INC., to appear, 2014.


\bibitem{Takac}
    P.~Tak\'a\v{c},
    Bifurcations and vortex formation in the Ginzburg-Landau equations.
     \emph{Z. Angew. Math. Mech.}  81  (2001),  523--539.

\bibitem{TS} T. Tzaneteas and   I.~M.~Sigal,  Abrikosov lattice solutions of the Ginzburg-Landau equations.   \textit{Contemporary Mathematics} \textbf{535}, 195 -- 213, 2011.


\bibitem{TS2} T. Tzaneteas and I.~M.~Sigal,  On Abrikosov lattice solutions of the Ginzburg-Landau equations. Math. Model. Nat. Phenom. 8 (2013), no. 5, 190 -- 205.



\end{thebibliography}
\end{document}